\definecolor{lightgray}{gray}{0.95}
\newcommand{\N}{\mathbb{N}} %natural numbers
\newcommand{\R}{\mathbb{R}} %real numbers
\newcommand{\BT}{\mathcal{BT}^{\ast}_n}
\newcommand{\T}{\mathcal{T}^{\ast}_n}
\newcommand{\id}{\textup{id}}
\newcommand{\bigcell}[2]{\begin{tabular}{@{}#1@{}}#2\end{tabular}}
\DeclareMathOperator*{\argmin}{arg\,min}
\theoremstyle{definition}
\newtheorem{Def}{Definition}[section]
\newtheorem{Theo}[Def]{Theorem}
\newtheorem{Prop}[Def]{Proposition}
\newtheorem{Lem}[Def]{Lemma}
\newtheorem{Cor}[Def]{Corollary}
\newtheorem{Rem}[Def]{Remark}
\newcommand\blfootnote[1]{%
  \begingroup
  \renewcommand\thefootnote{}\footnote{#1}%
  \addtocounter{footnote}{-1}%
  \endgroup
}
\newcommand\opteq[1]{\mathrel{\mathpalette\opt@eq{#1}}}
\newcommand{\opt@eq}[2]{%
  \begingroup
  \sbox\z@{$#1#2$}%
  \sbox\tw@{\resizebox{!}{.5\ht\z@}{$\m@th#1($}}%
  \nonscript\hskip-\wd\tw@
  \mkern1mu
  \raisebox{-.35\ht\z@}[0pt][0pt]{\resizebox{!}{.5\ht\z@}{$\m@th#1($}}%
  \mkern-1mu
  {#2}%
  \mkern-1mu
  \raisebox{-.35\ht\z@}[0pt][0pt]{\resizebox{!}{.5\ht\z@}{$\m@th#1)$}}%
  \mkern1mu
  \nonscript\hskip-\wd\tw@
  \endgroup
}
\newcommand{\leoq}{\opteq{\leq}}
\newcommand{\geoq}{\opteq{\geq}}
\begin{document}
\setcounter{tocdepth}{4}
\setcounter{secnumdepth}{4}

\title{Metaconcepts of rooted tree balance}

%List of authors, with corresponding author marked by asterisk
\author[1,$\ast$]{Mareike Fischer}
\author[1]{Tom Niklas Hamann}
\author[2]{Kristina Wicke}

\affil[1]{Institute of Mathematics and Computer Science, University of Greifswald, Greifswald, Germany}
\affil[2]{Department of Mathematical Sciences, New Jersey Institute of Technology, Newark, NJ, USA}

\date{}
\maketitle

\begin{abstract}
Measures of tree balance play an important role in many different research areas such as mathematical phylogenetics or theoretical computer science. Typically, tree balance is quantified by a single number which is assigned to the tree by a balance or imbalance index, of which several exist in the literature. Most of these indices are based on structural aspects of tree shape, such as clade sizes or leaf depths. For instance, indices like the Sackin index, total cophenetic index, and $\widehat{s}$-shape statistic all quantify tree balance through clade sizes, albeit with different definitions and properties.

In this paper, we formalize the idea that many tree (im)balance indices are functions of similar underlying tree shape characteristics by introducing metaconcepts of tree balance. A metaconcept is a function $\Phi_f$ that depends on a function $f$ capturing some aspect of tree shape, such as balance values, clade sizes, or leaf depths. These metaconcepts encompass existing indices but also provide new means of measuring tree balance. The versatility and generality of metaconcepts allow for the systematic study of entire families of (im)balance indices, providing deeper insights that extend beyond index-by-index analysis.
\end{abstract}

\textit{Keywords:} tree balance, rooted tree, Sackin index, Colless index, total cophenetic index \\

\blfootnote{$^\ast$Corresponding author\\ \textit{Email address:} \url{mareike.fischer@uni-greifswald.de, email@mareikefischer.de}}

\section{Introduction}
The study of tree balance is an integral part of many different research areas. For example, tree balance is used in evolutionary biology and phylogenetics to study macroevolutionary processes such as speciation and extinction. Additionally, balanced trees are important in computer science, for instance, in the context of search trees~\cite{Andersson1993,Knuth1}.

In phylogenetics, the balance of a tree is usually quantified by so-called tree (im)balance indices. Intuitively, an (im)balance index is a function that assigns a single numerical value to a tree, assessing some aspect of its shape. The greater (smaller) the value, the more balanced the tree according to the respective (im)balance index. Over the last few decades, there has been a surge in the development of tree (im)balance indices, and numerous such indices are now available (for an overview and categorization see the recent survey by \citet{Fischer2023}). Despite the multitude of different (im)balance indices available, many of them employ similar underlying tree shape characteristics such as balance values, clade sizes, or leaf depths, albeit with different definitions and properties.

The main goal of this paper is to formalize the idea that many (im)balance indices for rooted trees are functions of similar underlying tree shape characteristics by introducing metaconcepts of tree balance. Here, a metaconcept is a function $\Phi_f$ that depends on a function $f$ capturing some aspect of tree shape, such as balance values, clade sizes, or leaf depths. This idea is inspired by a paper on unrooted trees: \citet{Fischer2021b} analyzed the balance of unrooted trees and introduced a measure $\Phi_f$, which can be regarded as a metaconcept for unrooted tree balance. Choosing $\Phi_f$ to be the sum function and $f$ to be a function of split sizes, the authors showed that this metaconcept leads to a family of functions suitable for measuring unrooted tree imbalance if $f$ is strictly increasing. Furthermore, very recently, \citet{Cleary2025} essentially used the idea of a metaconcept based on clade sizes to show that a wide range of clade-size based measures satisfying concavity and monotonicity conditions are minimized by the so-called complete or greedy from the bottom tree~\cite{Coronado2020a,Fill1996} and maximized by the so-called caterpillar tree (both trees are formally defined below).

In this paper, we provide a formal definition of a metaconcept for rooted trees. We then specialize this metaconcept to three classes of metaconcepts suitable to measure tree (im)balance. These metaconcepts are based on certain sequences that can be associated with rooted trees, namely the clade size sequence, the leaf depth sequence, and, in the case of binary trees, additionally the balance value sequence.
We rigorously study all metaconcepts, characterize which choices of the function $f$ lead to (im)balance indices, analyze extremal trees and values for the metaconcepts, and investigate further desirable properties such as locality and recursiveness.

\citet{Cleary2025} proved that the clade size metaconcept yields an imbalance index for binary trees if $f$ is strictly increasing and strictly concave. We extend this result to include functions that are strictly increasing and either strictly convex or affine. Moreover, the clade size metaconcept also defines an imbalance index for arbitrary trees if $f$ is additionally either $2$-positive, i.e., $f(x) > 0$ for all $x \geq 2$, in the concave and convex cases, or, in the case of affine functions, if $f$ has a non-negative intercept. Further, the leaf depth metaconcept yields an imbalance index for both arbitrary and binary trees if $f$ is strictly increasing and either convex or affine. Finally, the balance value metaconcept defines an imbalance index for binary trees for all strictly increasing functions $f$ without additional constraints.

To help users identify which imbalance index derived from a metaconcept best suits their specific aims, we provide four decision trees (Figures~\ref{Fig:decision_tree_first_min} and~\ref{Fig:decision_tree_first_meta}). These decision trees are based on three key properties of the index: (1) whether it applies to binary trees or to arbitrary trees; (2) the underlying structural aspect of the tree it captures (such as balance values, clade sizes, or leaf depths); and (3) the set of binary minimizing trees. For the third criterion, we give four possible options to choose from (cf. Figure \ref{Fig:min_trees_n12}). Additionally, we provide code for computing the metaconcepts using the \textsf{R} packages \texttt{treebalance}~\cite{Fischer2023} and \texttt{ape}~\cite{Paradis2019}.

We remark that our metaconcepts encompass various existing tree imbalance indices such as the Sackin and Colless indices, and we highlight these connections in the course of the paper. The power of our metaconcepts is that they are naturally more general and versatile than individual indices. Next to leading to new (im)balance indices, the metaconcepts thus also provide a new framework to study the properties (such as extremal trees and values) of whole families of existing imbalance indices holistically, rather than on an individual index basis.

The present manuscript is organized as follows: In Section~\ref{Sec:Preliminaries}, we present all definitions and notations needed throughout this manuscript and summarize some known results. Section~\ref{Sec:Results} then contains all our results: In Section~\ref{Subsec:Sequences}, we establish some general results on the underlying tree shape sequences employed in this paper. Section~\ref{Subsec:Metaconcepts} then discusses the resulting metaconcepts and their properties in depth. We start with the balance value metaconcept (Section~\ref{Sec:BVM}), then turn to the clade size metaconcept (Section~\ref{Sec:CSM}), and finally consider the leaf depth metaconcept (Section~\ref{Sec:LDM}). In Section~\ref{Sec:locality_recursiveness}, we analyze the locality and recursiveness of all three metaconcepts. We conclude our manuscript with a brief discussion and highlight some directions for future research in Section~\ref{Sec:Discussion}.

\section{Preliminaries}\label{Sec:Preliminaries}

In this section, we introduce all concepts relevant for the present manuscript. We start with some general definitions. We mainly follow the notation of~\cite{Fischer2023}.

\subsection{Definitions and notation} \label{sec:def}
\paragraph*{Rooted trees}
A \emph{rooted tree} (or simply \emph{tree}) is a directed graph $T = (V(T),E(T))$, with vertex set $V(T)$ and edge set $E(T)$, containing precisely one vertex of in-degree zero, the root (denoted by $\rho$), such that for every $v \in V(T)$ there exists a unique path from $\rho$ to $v$ and such that there are no vertices with out-degree one. We use $V_L(T) \subseteq V(T)$ to refer to the leaf set of $T$ (i.e., $V_L(T) = \{v \in V(T): \text{out-degree}(v)=0\}$), and we use $\mathring{V}(T)$ to denote the set of inner vertices of $T$ (i.e., $\mathring{V}(T) = V(T) \setminus V_L(T)$). Moreover, we use $n$ to denote the number of leaves of $T$, i.e., $n = \vert V_L(T) \vert$. Note that $\rho \in \mathring{V}(T)$ if $n \geq 2$. If $n=1$, $T$ consists of only one vertex, which is at the same time the root and its only leaf.

A rooted tree is called \emph{binary} if all inner vertices have out-degree two, and for every $n \in \mathbb{N}_{\geq 1}$, we denote by $\BT$ the set of (isomorphism classes of) rooted binary trees with $n$ leaves and by $\T$ the set of (isomorphism classes of) rooted trees with $n$ leaves. We often call a tree $T \in \T$ an \textit{arbitrary tree}, but remark that arbitrary trees are also sometimes referred to as non-binary trees in the literature (even though binary trees are also contained in the set of arbitrary trees).

\paragraph*{Depth and height}
The \emph{depth} $\delta_T(v)$ (or $\delta_v$ for brevity) of a vertex $v \in V(T)$ is the number of edges on the path from $\rho$ to $v$, and the height $h(T)$ of $T$ is the maximum depth of any leaf, i.e., $h(T) = \max_{x \in V_L(T)} \delta_T(x)$.

\paragraph*{Ancestors, descendants, and (attaching) cherries}
Let $u,v \in V(T)$ be vertices of $T$. Whenever there exists a path from $u$ to $v$ in $T$, we say that $u$ is an \emph{ancestor} of $v$ and $v$ is a \emph{descendant} of $u$. Note that this implies that each vertex is an ancestor and a descendant of itself. If $u$ and $v$ are connected by an edge, i.e., if $(u,v) \in E(T)$, we also say that $u$ is the \emph{parent} of $v$ and $v$ is a \emph{child} of $u$. The \emph{lowest common ancestor} $LCA_T(u,v)$ of two vertices $u,v \in V(T)$ is the unique common ancestor of $u$ and $v$ that is a descendant of every other common ancestor of them. Moreover, two leaves $x,y \in V_L(T)$ are said to form a \emph{cherry}, if they have the same parent, which is then also called a \emph{cherry parent}. Finally, by \emph{attaching a cherry} to a tree $T$ to obtain a tree $T'$, we mean replacing a leaf $x \in V_L(T)$ by a cherry. Notice that $T'$ has one more leaf than $T$.

\paragraph*{(Maximal) pending subtrees, clade sizes, and standard decomposition}
Given a tree $T$ and a vertex $v \in V(T)$, we denote by $T_v$ the \emph{pending subtree} of $T$ rooted in $v$ and use $n_T(v)$ (or $n_v$ for brevity) to denote the number of leaves in $T_v$, also called the \emph{clade size} of $v$. We will often decompose a rooted tree $T$ on $n \geq 2$ leaves into its maximal pending subtrees rooted in the children of $\rho$. We denote this decomposition as $T = (T_{v_1}, \ldots, T_{v_k})$, where $v_1, \ldots, v_k$ are the children of the root in $T$, and refer to it as the \emph{standard decomposition} of $T$. If $T$ is binary, we have $k=2$, and thus $T = (T_{v_1}, T_{v_2})$. Throughout, \emph{subtree} will always refer to a pending subtree.

\paragraph*{Balance values, (perfectly) balanced vertices, and cophenetic values}
Now let $T$ be a rooted binary tree and let $v \in \mathring{V}(T)$ be an inner vertex of $T$ with children $v_1$ and $v_2$. The \emph{balance value} $b_T(v)$ (or $b_v$ for brevity) of $v$ is defined as $b_{T}(v) \coloneqq |n_{v_1} - n_{v_2}|$. An inner vertex $v$ is called \emph{balanced} if it fulfills $b_T(v) \leq 1$ and \emph{perfectly balanced} if $b_T(v) = 0$. Now, let $x,y \in V_L(T)$ be two leaves of a tree $T \in \T$. Then, the \emph{cophenetic value} of $x$ and $y$ is defined as $\varphi_T(x,y) \coloneqq \delta_T\left(LCA_T(x,y)\right)$, i.e., it is the depth of their lowest common ancestor.

\begin{figure}[ht]
\centering
	\includegraphics[scale=2]{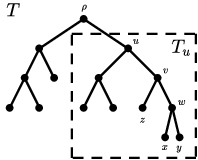}
	\caption{Rooted binary tree $T$ with eight leaves and root $\rho$. The vertices $\rho$, $u$, and $v$ are ancestors of $v$. The parent of $v$ is $u$ and $v$ is one of two children of $u$. The descendants of $v$ are $v$, $z$, $w$, $x$, and $y$. The lowest common ancestor of $x$ and $z$ is $LCA_T(x,z) = v$. The leaves $x$ and $y$ form a cherry whose parent is $w$. The pending subtree of $u$ is $T_u$, which is also one of the two maximal pending subtrees of $T$. It has five leaves and thus $n_T(u) = 5$, i.e., the clade size of $u$ is five. The balance value of $u$ and $v$ is one, i.e., $b_T(u) = b_T(v) = 1$, hence $u$ and $v$ are balanced. The vertex $w$ is balanced, too, and further, it is perfectly balanced, because $b_T(w) = 0$. The root $\rho$ is not balanced as $b_T(\rho) = 2$.}
	\label{Fig:BeispieleDefinitionen}
\end{figure}

\paragraph*{Important (families of) trees}
Next, we introduce some specific families of trees that will be important throughout this manuscript (see Figure~\ref{Fig:special_trees} for examples).

First, the \emph{maximally balanced tree} (or mb-tree for brevity), denoted by $T^{mb}_n$, is the rooted binary tree with $n$ leaves in which all inner vertices are balanced. Recursively, a rooted binary tree with $n \geq 2$ leaves is maximally balanced if its root is balanced and its two maximal pending subtrees are maximally balanced.

Second, the \emph{greedy from the bottom tree} (or gfb-tree for brevity), denoted by $T^{gfb}_n$, is the rooted binary tree with $n$ leaves that results from greedily clustering trees of minimal leaf numbers, starting with $n$ single vertices and proceeding until only one tree is left as described by~\cite[Algorithm 2]{Coronado2020a}.

Third, the \emph{fully balanced tree of height $h$} (or fb-tree for brevity), denoted by $T^{fb}_h$ is the rooted binary tree with $n=2^h$ leaves with $h \in \mathbb{N}_{\geq 0}$, in which all leaves have depth precisely $h$. Note that for $h \geq 1$, we have $T^{fb}_h = \left(T^{fb}_{h-1}, T^{fb}_{h-1}\right)$. Moreover, for $h \in \mathbb{N}_{\geq 0}$, $T^{fb}_h = T^{mb}_{2^h} = T^{gfb}_{2^h}$.

Fourth, the \emph{caterpillar tree} (or simply \emph{caterpillar}), denoted by $T^{cat}_n$, is the rooted binary tree with $n$ leaves that fulfills either $n=1$, or $n \geq 2$ and additionally has exactly one cherry.

Finally, the \emph{star tree}, denoted by $T^{star}_n$, is the rooted tree with $n$ leaves that either satisfies $n=1$, or $n \geq 2$ and additionally has a single inner vertex (the root), which is adjacent to all leaves.

Notice that all trees introduced above are unique (up to isomorphism) and have the property that all their pending subtrees are (smaller) trees of the same type. Moreover, we remark that the caterpillar is generally regarded as the most unbalanced (binary) tree, whereas the fully balanced tree is considered the most balanced binary tree when it exists, i.e., for leaf numbers that are powers of two. For other leaf numbers, both the maximally balanced tree and the greedy from the bottom tree are often regarded as the most balanced binary trees, whereas the star tree is usually considered to be the most balanced arbitrary tree.

\begin{figure}[ht]
\centering
	\includegraphics[width=0.7\textwidth]{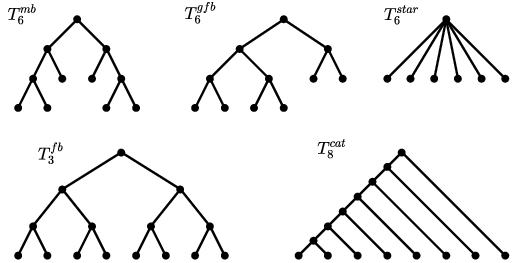}
	\caption{Examples of the special trees considered throughout this manuscript.}
	\label{Fig:special_trees}
\end{figure}

\paragraph*{Imbalance index, locality and recursiveness}
We next introduce the concept of a tree imbalance index. First, following~\citet{Fischer2023}, a \emph{(binary) tree shape statistic} is a function $t: \T(\BT) \rightarrow \R$ that depends only on the shape of $T$ but not on the labeling of vertices or the length of edges. Based on this, a tree imbalance index is defined as follows:
\begin{Def}[(Binary) imbalance index~(\citet{Fischer2023})]
    A (binary) tree shape statistic $t$ is called an \emph{imbalance index} if and only if
	\begin{enumerate}[(i)]
		\item the caterpillar $T^{cat}_n$ is the unique tree maximizing $t$ on its domain $\T(\BT)$ for all $n \geq 1$,
		\item the fully balanced tree $T^{fb}_h$ is the unique tree minimizing $t$ on $\BT$ for all $n = 2^h$ with $h \in \N_{\geq 0}$.
	\end{enumerate}
    If the domain of $t$ is $\BT$, we often call $t$ a \textit{binary imbalance index} to highlight this fact.
\end{Def}

Given two trees, say $T,T' \in \T(\BT)$, and an imbalance index, say $t$, we say that \emph{$T$ is more balanced than $T'$} (with respect to $t$) if $t(T) < t(T')$. More generally, when we say that a tree $T$ \emph{minimizes an imbalance index}, we mean that it minimizes it among all trees with the same leaf number as $T$. Analogously, when we compare a tree $T$ to a family of trees (such as the ones defined above), we always compare it to the family's representative that has the same number of leaves.

We note that in addition to imbalance indices, balance indices also exist. A balance index is minimized by the caterpillar tree and maximized by the fb-tree. Since a balance index can be obtained from an imbalance index (and vice versa) by multiplying by $-1$, and given that the majority of known indices are formulated as measures of imbalance, we focus exclusively on imbalance indices in this work.

Further, two imbalance indices $\varphi_1$ and $\varphi_2$ are considered \emph{equivalent}, if for all trees $T_1, T_2 \in \T(\BT)$, the following holds: $\varphi_1(T_1) < \varphi_1(T_2) \Longleftrightarrow \varphi_2(T_1) < \varphi_2(T_2)$. In other words, equivalence means that $\varphi_1$ and $\varphi_2$ rank trees in the same order from most balanced to least balanced.
\smallskip

We next turn to two desirable properties of imbalance indices, namely locality and recursiveness.

\begin{Def}[Locality~(\citet{Mir2013,Fischer2023})]
    Let $T \in \T(\BT)$ be a tree and let $v$ be a vertex of $T$. Further, let $T'$ be obtained from $T$ by replacing the subtree $T_v$ rooted in $v$ by a (binary) tree $T'_v$ with the same leaf number and also rooted in $v$. An imbalance index $t$ is called \emph{local} if it fulfills
    \[t(T) - t(T') = t(T_v) - t\left(T'_v\right) \text{ for all } v \in V(T).\]
\end{Def}

In other words, if $t$ is local and two trees $T$ and $T'$ differ only in a pending subtree, then the differences of their $t$-values is equal to the differences of the subtrees' $t$-values.
\smallskip
 
Next, we introduce the recursiveness of a tree shape statistic.

\begin{Def}[Recursiveness (based on~\citet{Fischer2023})]
\label{Def:recursivenessBook}
    A \emph{recursive tree shape statistic} of length $x \in \N_{\geq 1}$ is an ordered pair $(\lambda,r)$, where $\lambda \in \R^{x}$ and $r$ is an $x$-vector of symmetric functions each mapping a multiset of $x$-vectors to $\R$. In this definition, $x$ is the number of recursions that are used to calculate the index, the vector $\lambda$ contains the start value for each of the $x$ recursions, i.e., the values of $T \in \mathcal{T}^{\ast}_1$ if $n = 1$, and the vector $r$ contains the recursions themselves. In particular, $r_i(T) = \lambda_i$ for $n = 1$, and for $T = (T_1, \ldots, T_k)$, recursion $r_i$ operates on $k$ vectors of length $x$, namely $(r_1(T_1), \ldots, r_x(T_1)), \ldots, (r_1(T_k), \ldots, r_x(T_k))$, each representing one of the maximal pending subtrees $T_1, \ldots, T_k$ and containing their respective values. The recursions are symmetrical functions, i.e., the order of those $k$ vectors is permutable, because we are solely considering unordered trees. If only binary trees are considered, i.e., $k = 2$ for every pending subtree, we use the term \emph{binary recursive tree shape statistic}.
\end{Def}

In the following, we introduce our main concepts: the definition of a general metaconcept, three tree shape sequences, and three classes of metaconcepts -- each based on one of these sequences. We begin by defining the sequences.

\paragraph*{Balance value sequence, clade size sequence, and leaf depth sequence}
First, the \emph{balance value sequence} of a binary tree $T \in \BT$ is the list of balance values of all its inner vertices, arranged in ascending order. We denote this sequence by $\mathcal{B}\left(T\right) \coloneqq (b_1, \ldots, b_{n-1})$. The $i$-th entry of $\mathcal{B}(T)$ is denoted by $\mathcal{B}\left(T\right)_i$. Note that for any $T \in \BT$, the length of $\mathcal{B}(T)$ is $n-1$. Also note that $B(T) = (0, \ldots, 0)$ if and only if $T = T^{fb}_h$ (for a formal argument, see \cite[Corollary 1]{Coronado2020a}).

Second, the \textit{clade size sequence} of a tree $T \in \T$ is the list of clade sizes of all its inner vertices, arranged in ascending order. We denote this sequence by $\mathcal{N}(T) \coloneqq (n_1, \ldots, n_{|\mathring{V}(T)|})$, where $\mathcal{N}(T)_i$ represents the $i$-th entry of $\mathcal{N}(T)$. The length of the clade size sequence for a tree with $n \geq 2$ leaves can range from $1$ to $n-1$. Specifically, the sequence has length 1 if and only if $T$ is a star tree, and it has length $n-1$ if and only if $T$ is binary.

Third, the \textit{leaf depth sequence} of a tree $T \in \T$ is the list of leaf depths of all its leaves, arranged in ascending order. We denote this sequence by $\Delta(T) \coloneqq (\delta_1, \ldots, \delta_{n})$, where $\Delta(T)_i$ represents the $i$-th entry of $\Delta(T)$. Unlike the clade size sequence, the leaf depth sequence has always length $n$, regardless of whether the tree is binary.

\paragraph*{Balance value metaconcept, clade size metaconcept, and leaf depth metaconcept}
Next, we define the general metaconcept. In a second step, we derive three classes of metaconcepts from this definition, each based on one of the previously introduced sequences. Let $T \in \mathcal{T} \subseteq \T$ be a tree, and let $Seq(T)$ be a vertex value sequence on a subset $V' \subseteq V(T)$, i.e., a sequence that assigns each vertex $v \in V'$ a value $s_v$ derived from $v$. Assume that $Seq(T)$ is sorted in ascending order, and let $Seq(T)_i$ denote its $i$-th entry.

Furthermore, let $\omega \in \N_{\geq 1}$, $c = \min\limits_{T' \in \mathcal{T}}\left\{Seq(T')_1\right\}$, and $f: \R_{\geq c} \times \R^{\omega-1} \rightarrow \R$ be a function that depends on an entry of $Seq(T)$ and $\omega-1$ additional values $o_1(T), \ldots, o_{\omega-1}(T)$, such as the number of inner vertices, i.e., $o_i(T) = |\mathring{V}(T)|$, or the number of leaves of $T$, i.e., $o_i(T) = n$. Then,
\begin{align*}
    \Phi^{Seq}_{f}(T) &\coloneqq \sum\limits_{s \in Seq(T)} f(s,o_1(T), \ldots, o_{\omega-1}(T))
\end{align*}
is called the \textit{imbalance index metaconcept of order $\omega$}. Clearly,
\begin{align*}
    \Phi^{Seq}_{f}(T)
    = \sum\limits_{v \in V'} f(s_v,o_1(T), \ldots, o_{\omega-1}(T))
    = \sum\limits_{i = 1}^{|V'|} f(Seq(T)_i,o_1(T), \ldots, o_{\omega-1}(T)).
\end{align*}

Examples for known balance indices and their interpretations in the framework of metaconcepts of order $\omega$ can be found in Tables \ref{Tab:imbalance_indices_T} and \ref{Tab:imbalance_indices_BT}.

\noindent We now specialize the general metaconcept to three subclasses.
First, let $T \in \BT$ be a binary tree, and let $Seq(T) = \mathcal{B}(T)$ be its balance value sequence. Then, the \emph{balance value metaconcept} (BVM) \emph{of order $\omega$} is defined as
\[\Phi^{\mathcal{B}}_{f}\left(T\right) \coloneqq \sum\limits_{b \in \mathcal{B}\left(T\right)} f(b,o_1(T), \ldots, o_{\omega-1}(T)).\]
Second, if $T \in \T$ is a rooted tree, and $Seq(T) \in \{\mathcal{N}(T), \Delta(T)\}$, we obtain the \emph{clade size metaconcept} (CSM) \emph{of order $\omega$} defined as
\[\Phi^{\mathcal{N}}_{f}(T) \coloneqq \sum\limits_{n_v \in \mathcal{N}(T)} f(n_v,o_1(T), \ldots, o_{\omega-1}(T))\]
and the \emph{leaf depth metaconcept} (LDM) \emph{of order $\omega$} defined as
\[\Phi^{\Delta}_{f}(T) \coloneqq \sum\limits_{\delta \in \Delta(T)} f(\delta,o_1(T), \ldots, o_{\omega-1}(T)).\]

Note that for $n \geq 2$ the minimal balance value is $0$, the minimal clade size is $2$, and the minimal leaf depth is $1$. Hence, the value $c$ in the definition of the metaconcept equals the respective value.

Note that the clade size metaconcept of order $1$, when applied with a strictly increasing and strictly concave function $f$, corresponds to the function $\Phi_f$ in \citet{Cleary2025}. Recall that a (strictly) concave function satisfies $f(\lambda x + (1-\lambda)y) \geoq \lambda f(x) + (1-\lambda)f(y)$ for all $\lambda \in (0,1)$ and all $x,y \in \R$ and $x \neq y$. When choosing $\lambda = \frac{1}{2}$ and $y = x+2$, this yields the inequality $2 \cdot f(x) \geoq f(x-1) + f(x+1)$ and hence $f(x) - f(x-1) \geoq f(x+1) - f(x)$, i.e., the increments (strictly) decrease. Finally, we will sometimes use the fact that a differentiable function $f$ is (strictly) concave on an interval if and only if its derivative function $f'$ is (strictly) decreasing on that interval. Conversely, a (strictly) convex function has (strictly) increasing increments, meaning the inequalities are reversed. Moreover, we call a function $f: \R_{\geq 0} \rightarrow \R$ $2$-positive, if $f(x) > 0$ for all $x \geq 2$ and non-negative if $f(x) \geq 0$ for all $x \geq 0$. In the case of an affine function $f(x) = m \cdot x + a$, we refer to $m$ as the slope and $a$ as the intercept.

\subsection{Known imbalance indices}
Tables~\ref{Tab:imbalance_indices_T} and~\ref{Tab:imbalance_indices_BT} define various known imbalance indices. Note that the choice of logarithm base is arbitrary. Additionally, we follow the conventions that $\frac{0}{0} = 0$ and that a sum over an empty set equals zero.
Note that \citet{Fischer2023} demonstrated that all functions listed in Table \ref{Tab:imbalance_indices_T} satisfy the definition of an imbalance index on $\T$, except for the $\widehat{s}$-shape statistic, which is only a binary imbalance index. Moreover, all functions listed in Table \ref{Tab:imbalance_indices_BT} are binary imbalance indices, too.

\begin{table}[htbp]
    \small
    \centering
    \caption{Definitions of imbalance indices that are applicable to arbitrary trees, i.e., those with domain $\T$ (notice that the $\widehat{s}$-shape statistic is applicable to arbitrary trees, but is only an imbalance index on $\BT$). It is straightforward to see that these imbalance indices are induced by the clade size metaconcept $\Phi^{\mathcal{N}}_f$ and the leaf depth metaconcept $\Phi^{\Delta}_f$, respectively, when the function $f$ is chosen as specified in the two rightmost columns. The Sackin index and the $\widehat{s}$-shape statistic are induced by the first-order metaconcept with $\id$ as the identity function. In contrast, the average leaf depth is induced by the second-order metaconcept. Moreover, the total cophenetic index is induced on $\BT$ by the second-order and on $\T$ by the third-order metaconcept. This is because for binary trees we have $|\mathring{V}(T)| = n-1$, so no further additional value than $n$ is needed. For further details on the total cophenetic index, see Remark \ref{Rem:clade_meta_eqiuv_Sackin}.}
    \begin{tabular}{cc|cc}
        imbalance index & definition & CSM $\Phi^{\mathcal{N}}_f$ & LDM $\Phi^{\Delta}_f$\\
    \hline
    & &\\
        \bigcell{c}{\textbf{Sackin index} \\ \cite{Sackin1972,Shao1990,Blum2005,Fischer2021a}} & \bigcell{c}{$S(T) \coloneqq \sum\limits_{x \in V_L(T)} \delta_{T}(x)$ \\ $\phantom{S(T)} = \sum\limits_{v \in \mathring{V}(T)} n_{T}(v)$} & $\id$ & $\id$\\
    & &\\

        \bigcell{c}{\textbf{Average leaf depth} \\ \cite{Kirkpatrick1993}} & \bigcell{c}{$\overline{N}(T) \coloneqq \frac{1}{n} \sum\limits_{x \in V_L(T)} \delta_{T}(x)$ \\ $\phantom{\overline{N}(T)} = \frac{1}{n} \cdot S(T)$} & $f_{\overline{N}}(n_v,n) = \frac{1}{n} \cdot n_v$ & $f_{\overline{N}}(\delta,n) = \frac{1}{n} \cdot \delta$\\
    & &\\

        \bigcell{c}{\textbf{$\widehat{s}$-shape statistic} \\ \cite{blum2006c}} & $\widehat{s}(T) \coloneqq \sum\limits_{v \in \mathring{V}(T)} \log\left(n_T(v) -1\right)$ & $f_{\widehat{s}}(n_v) = \log(n_v -1)$ & --\\
    & &\\

        \bigcell{c}{\textbf{Total cophenetic index} \\ \cite{Mir2013}} & \bigcell{c}{$\Phi(T) \coloneqq \sum\limits_{\substack{(x,y) \in V_L(T)^2 \\ x \neq y}} \varphi_T(x,y)$\\ $\phantom{\Phi(T)} = \sum\limits_{v \in \mathring{V}(T) \setminus \left\{\rho\right\}} \binom{n_T(v)}{2}$} & $f_{\Phi}\left(n_v,n,|\mathring{V}(T)|\right) = \binom{n_v}{2} - \frac{\binom{n}{2}}{|\mathring{V}(T)|}$ & --
    \end{tabular}
    \label{Tab:imbalance_indices_T}
\end{table}

\begin{table}[htbp]
    \small
    \centering
    \caption{Definitions of binary imbalance indices, which are only applicable to binary trees, i.e., those with domain $\BT$. It is straightforward to see that these binary imbalance indices are induced by the balance value metaconcept $\Phi^{\mathcal{B}}_f$ when the function $f$ is chosen as specified in the right column. The (quadratic) Colless index is induced by the first-order metaconcept where $\id$ is the identity function, while the corrected Colless index is induced by the second-order metaconcept.}
    \begin{tabular}{cc|c}
        binary imbalance index & definition & BVM $\Phi^{\mathcal{B}}_f$\\
    \hline
    & &\\
        \bigcell{c}{\textbf{Colless index} \\ \cite{Colless1982,Shao1990}} & $C\left(T\right) \coloneqq \sum\limits_{v \in \mathring{V}\left(T\right)} b_{T}(v)$ & $\id$\\
    & &\\

        \bigcell{c}{\textbf{Corrected Colless index} \\ \cite{Heard1992}} & \bigcell{c}{$I_C\left(T\right) \coloneqq \frac{2}{(n-1)(n-2)} \cdot C\left(T\right)$\\ $I_C\left(T\right) \coloneqq 0$ for $n=1,2$} & $f_{I_C}(b,n) = \frac{2}{(n-1)(n-2)} \cdot b$\\
    & &\\

        \bigcell{c}{\textbf{Quadratic Colless index} \\ \cite{Bartoszek2021}} & $QC\left(T\right) \coloneqq \sum\limits_{v \in \mathring{V}\left(T\right)} b_{T}(v)^2$ & $f_{QC}(b) = b^2$
    \end{tabular}
    \label{Tab:imbalance_indices_BT}
\end{table}

\subsection{Known results}
Before presenting our new results, we first recall some previously established findings. We summarize key results concerning special trees as well as known imbalance indices.

\begin{Prop}[\citet{Coronado2020a}, Theorem 1 and Proposition 6]
\label{Prop:gfb_mb_min_Colless}
    Let $n \in \mathbb{N}_{\geq 1}$. The mb-tree $T^{mb}_n$ and the gfb-tree $T^{gfb}_n$ minimize the Colless index on $\BT$.
\end{Prop}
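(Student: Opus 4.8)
The plan is to exploit the additive (local) structure of the Colless index under the standard decomposition and to argue by strong induction on $n$. Writing $T=(T_{v_1},T_{v_2})$ with clade sizes $a=n_{v_1}$ and $n-a$, every inner vertex of $T$ other than $\rho$ lies in exactly one of the two maximal pending subtrees and retains its balance value there, while $b_T(\rho)=|2a-n|$. Hence
\[
C(T) = |2a-n| + C(T_{v_1}) + C(T_{v_2}),
\]
so $C$ splits exactly as in the locality definition. Set $m(k):=\min_{T\in\mathcal{BT}^{\ast}_k}C(T)$ and $f(k):=C(T^{mb}_k)$, the latter satisfying the clean recursion $f(2k)=2f(k)$ and $f(2k+1)=f(k)+f(k+1)+1$ coming from the recursive balanced splits that define the mb-tree. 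Assuming inductively that $m(k)=f(k)$ for all $k<n$, any $T\in\BT$ with root split $a$ obeys $C(T)\ge |2a-n|+m(a)+m(n-a)=|2a-n|+f(a)+f(n-a)$. Thus proving that $T^{mb}_n$ minimizes $C$ reduces to the purely numerical \emph{split-optimization lemma}: for all $1\le a\le \lfloor n/2\rfloor$,
\[
|2a-n| + f(a) + f(n-a) \ \ge\ f(n) = \bigl(n-2\lfloor n/2\rfloor\bigr) + f\bigl(\lfloor n/2\rfloor\bigr) + f\bigl(\lceil n/2\rceil\bigr),
\]
i.e.\ the balanced split attains the minimum over all root splits (and the mb-tree's two subtrees are themselves mb-trees, closing the recursion).

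The heart of the argument, and the step I expect to be the main obstacle, is exactly this lemma. The difficulty is that $f$ is badly non-monotone: it vanishes at every power of two and rises in between (for instance $f(4)=0$ but $f(5)=2$, and $f(8)=0$ while $f(9)=3$). As a result the map $a\mapsto |2a-n|+f(a)+f(n-a)$ need not be monotone on $[1,\lfloor n/2\rfloor]$, so the naive neighbour-swapping (discrete convexity) argument fails—shifting the split one step toward balance can temporarily increase the value, even though the global minimum still sits at the balanced split. I would therefore first pin down $f$ quantitatively: using the recursion above I would establish tight two-sided bounds for $f$ (or, if tractable, a closed form via the binary expansion of $n$) and then verify the displayed inequality by a case analysis on the parities of $n$ and of $a$. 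A more robust alternative is an exchange argument that moves a single leaf from the larger side of an arbitrary split to the smaller side and tracks the net change of $|2a-n|+f(a)+f(n-a)$ simultaneously through both subtrees, rather than through one neighbouring value of $a$.

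For the gfb-tree I would again induct on $n$ and show that the root split of $T^{gfb}_n$ also attains the minimum in the split-optimization lemma, whence $C(T^{gfb}_n)=m(n)=f(n)$. The input here is the recursive shape of the greedy-from-the-bottom construction: clustering trees of minimal leaf number first forces the two maximal pending subtrees of $T^{gfb}_n$ to be gfb-trees whose leaf numbers form a split of $n$ that is optimal but generally different from the balanced one (e.g.\ for $n=6$ the gfb root split is $4+2$, while the mb root split is $3+3$, both giving $C=2$). The point to prove is that this greedy split lies in the argmin of $a\mapsto |2a-n|+f(a)+f(n-a)$; granting the lemma, $T^{mb}_n$ and $T^{gfb}_n$ then realize (possibly distinct) minimizers of the same recursion and hence both attain $m(n)$. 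The base case $n=2^h$, where $T^{mb}_n=T^{gfb}_n=T^{fb}_h$ and $f(n)=0$, anchors the induction and is consistent with the non-monotonicity remarked above.
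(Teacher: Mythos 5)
You have correctly set up the reduction: the decomposition $C(T)=|2a-n|+C(T_{v_1})+C(T_{v_2})$, the recursion $f(2k)=2f(k)$, $f(2k+1)=f(k)+f(k+1)+1$ for $f(k)=C(T^{mb}_k)$, and the strong induction that collapses the whole statement to the split-optimization inequality $|2a-n|+f(a)+f(n-a)\ge f(n)$ are all sound, and your numerical spot checks ($f(5)=2$, $f(9)=3$, the $4+2$ versus $3+3$ splits for $n=6$) are accurate. Note that the paper itself does not prove this proposition; it imports it from \citet{Coronado2020a} (their Theorem 1 and Proposition 6), so there is no in-paper argument to measure yours against — which makes it all the more important that your proof stand on its own.

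It does not yet. The entire difficulty of the result is concentrated in the split-optimization lemma, and you have identified it, explained why the naive neighbour-swap fails (correctly — $a\mapsto|2a-n|+f(a)+f(n-a)$ is not unimodal because $f$ vanishes at powers of two), and then listed three strategies you \emph{would} pursue (two-sided bounds on $f$, a closed form via the binary expansion, an exchange argument tracking a single leaf through both subtrees) without carrying any of them out. As written, the lemma is an unproven claim, so the induction has no engine. The same gap recurs for the gfb-tree: you assert that the greedy root split lies in the argmin of the same functional, but this is again exactly the nontrivial content (in \citet{Coronado2020a} it requires its own argument, and the full characterization of optimal splits in their Propositions 1 and 3 shows the argmin can be a genuinely complicated set). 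To close the proof you would need to actually establish the inequality — for instance by proving the closed form $f(n)=\sum_{i\ge 1}2^i\, s(2^{-i}n)$ with $s$ the distance to the nearest integer (cf.\ Proposition~\ref{Prop:Min_Colless}) and showing that each term of the sum is separately minimized by the balanced split — and separately verify that the gfb split achieves the same value. Until then this is a correct reduction plus a proof obligation, not a proof.
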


\begin{Rem}
\label{Rem:colless-minima}
    Notice that for most leaf numbers $n$, there are trees distinct from the mb-tree $T^{mb}_n$ and the gfb-tree $T^{gfb}_n$ that also minimize the Colless index. However, all binary trees with $n$ leaves minimizing the Colless index have been completely characterized by \citet[Proposition 1 and Proposition 3]{Coronado2020a}.
\end{Rem}

\begin{Lem}[\citet{Fischer2021a}, Theorem 2]
\label{Lem:Sackin_mintree_min_value}
    Let $T \in \T$ with $h_n = \lceil\log_2(n)\rceil$. Then, $T$ minimizes the Sackin index on $\BT$ if and only if $T = T^{fb}_{h_n}$ or $T$ employs precisely two leaf depths, namely $h_n-1$ and $h_n$. Moreover, in this case, $S(T) = -2^{h_n} + n \cdot (h_n+1)$, which equals $h_n \cdot 2^{h_n}$ if $n = 2^{h_n}$.
\end{Lem}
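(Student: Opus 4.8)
\section*{Proof proposal}

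The plan is to recast the problem as an optimization over leaf-depth profiles constrained by the Kraft equality, to prove a matching lower bound by a local surgery (exchange) argument, and then to read off both the minimizers and the minimum value. Throughout, minimization is over $\BT$, and all surgeries stay within $\BT$.

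First I would record the structural identity that every $T \in \BT$ satisfies the Kraft equality $\sum_{x \in V_L(T)} 2^{-\delta_T(x)} = 1$. This follows by a one-line induction on the standard decomposition $T = (T_{v_1}, T_{v_2})$: passing to the new root raises every leaf depth by one, so the left-hand side transforms as $\tfrac12 \cdot 1 + \tfrac12 \cdot 1 = 1$, with base case the single leaf at depth $0$. Writing $a_d$ for the number of leaves at depth $d$, this yields the two constraints $\sum_d a_d = n$ and $\sum_d a_d\, 2^{-d} = 1$, while $S(T) = \sum_d d\, a_d$.

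The heart of the argument is an exchange step showing that a Sackin-minimal tree uses only two consecutive depths. Suppose $T$ minimizes $S$ on $\BT$ but has a leaf $x$ of minimum depth $d_{\min}$ and a leaf $y$ of maximum depth $d_{\max}$ with $d_{\max} \geq d_{\min} + 2$. Since $y$ attains the maximum depth, the sibling of $y$ is the root of a subtree at depth $d_{\max}$ that can contain no deeper leaf; were it internal it would have a leaf at depth $d_{\max}+1$, a contradiction, so the sibling is itself a leaf and $y$ lies in a cherry with parent $p$ at depth $d_{\max}-1$. Now perform two surgeries: collapse the cherry at $p$ (turning $p$ into a leaf) and expand $x$ into a cherry. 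Counting depths, the first move changes $S$ by $(d_{\max}-1) - 2 d_{\max} = -d_{\max} - 1$ and the second by $2(d_{\min}+1) - d_{\min} = d_{\min} + 2$, for a net change of $d_{\min} - d_{\max} + 1 \leq -1 < 0$. The leaf count is preserved, and since $x$ has depth $d_{\min}$ while $p$ and the cherry leaves have depth $\geq d_{\max}-1 > d_{\min}$, the two surgeries act on disjoint parts of the tree and $x \neq p$; the result is a genuine binary tree with strictly smaller Sackin index, contradicting minimality. Hence every minimizer has all leaf depths in a set $\{d, d+1\}$ for some $d$.

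Finally I would pin down $d$ and compute the value from the two constraints. If only depths $d$ and $d+1$ occur with multiplicities $a$ and $b = n - a$, then $2a + b = 2^{d+1}$, forcing $a = 2^{d+1} - n$ and $b = 2n - 2^{d+1}$. The inequalities $a \geq 0$ and $b \geq 0$ give $2^{d+1} \geq n$ and $2^d \leq n$, i.e.\ $d \geq h_n - 1$ and $d \leq \lfloor \log_2 n \rfloor$. When $n$ is not a power of two, $\lfloor \log_2 n \rfloor = h_n - 1$, so $d = h_n - 1$ and both depths genuinely occur; when $n = 2^{h_n}$, one of $a, b$ vanishes and the single remaining depth is $h_n$, yielding exactly $T^{fb}_{h_n}$. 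Substituting $d = h_n - 1$ gives $S(T) = n d + b = n(h_n-1) + (2n - 2^{h_n}) = -2^{h_n} + n(h_n+1)$, which collapses to $h_n\, 2^{h_n}$ precisely when $n = 2^{h_n}$. This simultaneously fixes the minimum value and proves both directions: the exchange step forbids any other depth profile, while any tree using only depths $h_n-1$ and $h_n$ necessarily realizes the forced counts $a, b$ and hence attains this common minimal value. The main obstacle is the exchange argument itself, specifically verifying that the maximal-depth leaf always lies in a cherry and that the two surgeries touch disjoint subtrees, so that the move is well defined and strictly decreasing.
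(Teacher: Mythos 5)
Your proposal is correct. Note that the paper does not prove this statement at all: it is imported as a known result, cited as Theorem~2 of \citet{Fischer2021a}, so there is no in-paper argument to compare against. Your self-contained derivation is sound and follows the classic route for such depth-profile optimizations: the Kraft equality $\sum_{x \in V_L(T)} 2^{-\delta_T(x)} = 1$ pins down the admissible leaf-depth multiplicities, the cherry-collapse/cherry-expand exchange shows a minimizer cannot span more than two consecutive depths (the net change $d_{\min} - d_{\max} + 1 \le -1$ is computed correctly, and your checks that the deepest leaf sits in a cherry and that the two surgeries are disjoint are exactly the points that need care), and the two linear constraints then force $a = 2^{h_n} - n$, $b = 2n - 2^{h_n}$ at $d = h_n - 1$, giving $S(T) = n(h_n+1) - 2^{h_n}$ and the degeneration to $T^{fb}_{h_n}$ when $n = 2^{h_n}$. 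The converse direction is also handled: since Kraft forces the counts, every tree employing only depths $h_n-1$ and $h_n$ attains the same value as a minimizer and hence is one. This is a complete proof of the cited result, and arguably buys more than the citation does, since it makes explicit why the two admissible depths must be $h_n-1$ and $h_n$ rather than some other consecutive pair.
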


\begin{Rem}
\label{Rem:gfb_mb_from_fb_min_Sackin}
    Note that trees with $n$ leaves minimizing the Sackin index as characterized in the lemma above are precisely those trees that can be constructed from the fb-tree of height $h_n-1$ by attaching $n - 2^{h_n-1}$  cherries to its leaves. In particular, the gfb-tree and the mb-tree can be constructed in this way. This follows from the fact that the gfb-tree and the mb-tree minimize the Colless index on $\BT$ (Proposition \ref{Prop:gfb_mb_min_Colless}) and the fact that all trees minimizing the Colless index also minimize the Sackin index on $\BT$ (\citet[Proposition 9]{Coronado2020a}). Note that to construct the gfb-tree, one has to attach the cherries to the fb-tree from left to right (or vice versa) (\citet[Lemma 4.17]{Cleary2025}).
\end{Rem}

\begin{Prop}
\label{Prop:Min_Colless}[\citet{Coronado2020a}, Theorem 3]
    For every $n \in \mathbb{N}_{\geq 1}$, let $h_n \coloneqq \lceil\log_2(n)\rceil$. Then,
   \[ c_n:=\sum\limits_{i = 1}^{h_n -1} 2^i \cdot s(2^{-i} \cdot n)\]
    is the minimum value of the Colless index on $\BT$, where $s(x)$ is the distance from $x \in \R$ to its nearest integer,  i.e., $s = \min\limits_{z \in \mathbb{Z}} |x-z|$.
\end{Prop}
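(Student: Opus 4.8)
The plan is to reduce the problem to a single explicit tree and then evaluate its Colless index in closed form. By Proposition~\ref{Prop:gfb_mb_min_Colless}, the minimum of the Colless index over $\BT$ is attained by the maximally balanced tree $T^{mb}_n$, so it suffices to show $C(T^{mb}_n) = c_n$. First I would note that, since $T^{mb}_n$ splits every clade of size $m$ into children of sizes $\lceil m/2 \rceil$ and $\lfloor m/2 \rfloor$, the balance value of an inner vertex of clade size $m$ equals $\lceil m/2 \rceil - \lfloor m/2 \rfloor = 2\,s(m/2)$, which is $1$ if $m$ is odd and $0$ otherwise. In particular the root contributes $2\,s(n/2)$, the $i=1$ summand of $c_n$. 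The recursive definition of the mb-tree then yields the recursion $C(T^{mb}_n) = 2\,s(n/2) + C(T^{mb}_{\lceil n/2 \rceil}) + C(T^{mb}_{\lfloor n/2 \rfloor})$ with base cases $C(T^{mb}_1) = C(T^{mb}_2) = 0$, and I would prove by induction on $n$ that $c_n$ obeys the very same recursion.

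The crux is the pointwise identity $2^i\,s(n/2^i) = 2^{i-1}\,s(\lceil n/2\rceil / 2^{i-1}) + 2^{i-1}\,s(\lfloor n/2 \rfloor / 2^{i-1})$, valid for every $i \geq 2$. Granting this, summing over $i$ from $2$ to $h_n - 1$ and re-indexing via $j = i-1$ rewrites the part of $c_n$ with $i \geq 2$ as $\sum_{j=1}^{h_n-2} 2^j\,s(\lceil n/2\rceil/2^j) + \sum_{j=1}^{h_n - 2} 2^j\,s(\lfloor n/2\rfloor/2^j)$, which (after the range check below) equals $c_{\lceil n/2\rceil} + c_{\lfloor n/2 \rfloor}$; together with the isolated $i=1$ term $2\,s(n/2)$ this is exactly the recursion above, closing the induction. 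To verify the identity I would use the representation $2^i\,s(n/2^i) = \min\{\,n \bmod 2^i,\ 2^i - (n \bmod 2^i)\,\}$ and distinguish the cases $n$ even and $n$ odd: the even case is immediate since $\lceil n/2\rceil = \lfloor n/2\rfloor = n/2$, while the odd case follows from an elementary comparison of the residues of $\lceil n/2\rceil$ and $\lfloor n/2\rfloor$ modulo $2^{i-1}$ against that of $n$ modulo $2^i$.

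The step requiring genuine care, and the one I expect to be the main obstacle, is reconciling the summation ranges. One checks that $h_{\lceil n/2\rceil} = h_n - 1$, so the ceiling sum already has the correct upper limit $h_n - 2 = h_{\lceil n/2\rceil} - 1$. For the floor one has $h_{\lfloor n/2\rfloor} = h_n - 1$ as well, except in the boundary case $n = 2^{h_n - 1} + 1$, where $\lfloor n/2\rfloor = 2^{h_n - 2}$ is a power of two and the height drops to $h_n - 2$. In that case the rewritten floor sum carries an extra summand at $j = h_n - 2$ beyond those of $c_{\lfloor n/2\rfloor}$; but there $\lfloor n/2\rfloor/2^j = 2^{h_n-2}/2^{h_n-2} = 1$ is an integer, so $s$ vanishes and the extra summand is $0$. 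Hence the ranges agree in all cases, the induction goes through, and $C(T^{mb}_n) = c_n$, which is therefore the minimum value of the Colless index on $\BT$.
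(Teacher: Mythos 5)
The paper does not prove this statement at all: it is listed under ``Known results'' and imported verbatim from \citet[Theorem~3]{Coronado2020a}, so there is no in-paper argument to compare against. Your proposal is a correct, essentially self-contained derivation modulo Proposition~\ref{Prop:gfb_mb_min_Colless} (itself also cited from the same source), and the reduction to showing $C(T^{mb}_n)=c_n$ is legitimate. I checked the pieces that carry the weight: the balance value of an mb-vertex of clade size $m$ is indeed $\lceil m/2\rceil-\lfloor m/2\rfloor=2s(m/2)$, since a balanced split of $m$ forces child sizes $\lceil m/2\rceil$ and $\lfloor m/2\rfloor$; the identity $2^i s(n/2^i)=2^{i-1}s(\lceil n/2\rceil/2^{i-1})+2^{i-1}s(\lfloor n/2\rfloor/2^{i-1})$ for $i\geq 2$ holds in all three residue subcases ($r<2^{i-2}$, $r=2^{i-2}$, $r>2^{i-2}$ with $r=\lfloor n/2\rfloor \bmod 2^{i-1}$) via the representation $2^i s(n/2^i)=\min\{n\bmod 2^i,\,2^i-(n\bmod 2^i)\}$; and the range bookkeeping is right, including the one genuinely delicate point, namely $h_{\lceil n/2\rceil}=h_n-1$ always while $h_{\lfloor n/2\rfloor}$ drops to $h_n-2$ exactly when $n=2^{h_n-1}+1$, where the surplus summand vanishes because $s(1)=0$. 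The base cases $c_1=c_2=0$ match the empty sums. If you wanted to polish this further you would only need to state explicitly that the induction is on $n$ with both $\lceil n/2\rceil$ and $\lfloor n/2\rfloor$ strictly smaller than $n$ for $n\geq 2$, but as written the argument is sound and would serve as a proof the paper itself omits.
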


\begin{Lem}[\citet{Fischer2021a}, Theorem 1]
\label{Lem:Sackin_cat}
    The caterpillar uniquely maximizes the Sackin index on $\BT$, and we have $S\left(T^{cat}_n\right) = \frac{n \cdot (n+1)}{2} -1$.
\end{Lem}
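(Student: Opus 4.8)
The plan is to induct on the leaf number $n$, using the additive behaviour of the Sackin index under the standard decomposition. The key recursion is that for $T = (T_{v_1}, T_{v_2}) \in \BT$ with $a \coloneqq |V_L(T_{v_1})|$ leaves in the first and $n-a$ in the second maximal pending subtree, passing from the subtrees to $T$ raises the depth of every leaf by exactly one, so the leaf-depth form of the Sackin index gives
\[
  S(T) = S(T_{v_1}) + S(T_{v_2}) + n.
\]
(Equivalently, this is the clade-size form, since the root contributes $n_T(\rho) = n$.) The base cases $n \in \{1,2\}$ are immediate: in each case $\BT$ contains only the caterpillar, with $S(T^{cat}_1) = 0$ and $S(T^{cat}_2) = 2$, both matching $\tfrac{n(n+1)}{2}-1$.

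For the inductive step I would assume that, for every $m < n$, the caterpillar is the unique maximizer of $S$ among binary trees with $m$ leaves, with value $\tfrac{m(m+1)}{2}-1$. Applying the recursion together with the inductive hypothesis to an arbitrary $T = (T_{v_1}, T_{v_2})$ of split size $a$ yields
\[
  S(T) \;\le\; \left(\tfrac{a(a+1)}{2}-1\right) + \left(\tfrac{(n-a)(n-a+1)}{2}-1\right) + n
        \;=\; \frac{a^2 + (n-a)^2}{2} + \frac{3n}{2} - 2,
\]
with equality precisely when $T_{v_1} = T^{cat}_a$ and $T_{v_2} = T^{cat}_{n-a}$. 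It then remains to maximize this bound over the admissible split sizes $a \in \{1,\dots,n-1\}$.

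The function $a \mapsto a^2 + (n-a)^2$ is strictly convex and symmetric about $a = n/2$, so on the integer interval $\{1,\dots,n-1\}$ it attains its maximum exactly at the two endpoints $a=1$ and $a=n-1$, which describe the same unlabeled split, while every interior split gives a strictly smaller value (for $n=3$ there are no interior splits). Substituting $a=1$ simplifies the bound to $\tfrac{n(n+1)}{2}-1$, which is therefore the maximum value of $S$ on $\BT$. Moreover this bound is attained only when the split is an endpoint split \emph{and} both subtrees are caterpillars; since attaching a single leaf as the sibling of the root of $T^{cat}_{n-1}$ produces exactly $T^{cat}_n$ (the tree still has a unique cherry), the two equality conditions together force $T = T^{cat}_n$ and no other tree. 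This establishes the value and the uniqueness simultaneously.

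The argument is essentially routine once the recursion is in place; the points requiring care are combining the two equality conditions — optimality of the split and caterpillar-shaped subtrees — to pin down a single tree, and verifying that strict discrete convexity genuinely excludes all interior splits (which is where the base cases $n \le 2$ are needed, so that $n \ge 3$ holds in the inductive step). I expect the only genuinely delicate part to be phrasing the uniqueness cleanly, since the value maximization and the subtree-shape constraint must be enforced jointly rather than one after the other.
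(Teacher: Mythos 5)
Your proof is correct. Note, however, that the paper itself does not prove this statement at all: Lemma~\ref{Lem:Sackin_cat} is imported verbatim from \citet{Fischer2021a} (Theorem~1 there), so there is no in-paper argument to compare against. Your derivation is a valid standalone proof: the recursion $S(T)=S(T_{v_1})+S(T_{v_2})+n$ is exactly the clade-size/leaf-depth identity used throughout the paper (cf.\ Remark~\ref{Rem:recursiveness_sequences} and Proposition~\ref{Prop:recursiveness}), the bound $\frac{a^2+(n-a)^2}{2}+\frac{3n}{2}-2$ is computed correctly and collapses to $\frac{n(n+1)}{2}-1$ at $a=1$, and the two equality conditions (endpoint split plus caterpillar subtrees) do jointly force $T=(T^{cat}_1,T^{cat}_{n-1})=T^{cat}_n$, since that tree still has a unique cherry. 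The one place where your write-up could be tightened is the phrase ``strictly convex and symmetric about $a=n/2$, so \dots attains its maximum exactly at the two endpoints'': it is worth spelling out that for a strictly convex $g$ on $[1,n-1]$ one has $g(a)<\max\{g(1),g(n-1)\}=g(1)$ for every interior integer $a$, which is what makes every interior split strictly suboptimal and hence gives uniqueness. With that sentence made explicit, the argument is complete.
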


\begin{Prop}
\label{Prop:gfb_min_clade_meta_concave_imb_index}[adapted from \citet{Cleary2025}, Corollary 4.4]
    Let $f$ be strictly increasing and strictly concave. Then, the clade size metaconcept $\Phi^{\mathcal{N}}_f$ is a binary imbalance index. Moreover, the gfb-tree $T^{gfb}_n$ uniquely minimizes the clade size metaconcept on $\BT$.
\end{Prop}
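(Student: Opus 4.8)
The plan is to exploit the additivity of $\Phi^{\mathcal{N}}_f$ under the standard decomposition and then argue by strong induction on the leaf number $n$. Observe first that if $T = (T_1, T_2)$ with $|V_L(T_1)| = a$ and $|V_L(T_2)| = b = n-a$, then the clade sizes of $T$ are exactly $n$ (for the root) together with those of $T_1$ and $T_2$, giving the recursion
\[\Phi^{\mathcal{N}}_f(T) = f(n) + \Phi^{\mathcal{N}}_f(T_1) + \Phi^{\mathcal{N}}_f(T_2).\]
Proving the proposition amounts to two facts: that $T^{cat}_n$ is the unique maximizer on $\BT$ (condition (i) of an imbalance index) and that $T^{gfb}_n$ is the unique minimizer on $\BT$ for every $n$. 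The latter is strictly stronger than condition (ii), since $T^{gfb}_{2^h} = T^{fb}_h$, so the two facts together yield both the ``imbalance index'' and the ``moreover'' claims at once.

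For the maximization I would induct on $n$, the cases $n \le 2$ being immediate. Writing $M(a) := \Phi^{\mathcal{N}}_f(T^{cat}_a) = \sum_{j=2}^{a} f(j)$ and using the inductive hypothesis that $T^{cat}_a$ is the unique maximizer among trees with $a$ leaves, the recursion reduces the task to maximizing $M(a) + M(n-a)$ over $1 \le a \le n-1$. Since $M(a+1) - M(a) = f(a+1)$, replacing a split $(a,b)$ with $a \ge b \ge 2$ by the more lopsided split $(a+1,b-1)$ changes the value by $f(a+1) - f(b) > 0$ (as $a+1 > b$ and $f$ is strictly increasing). Hence the unique optimal split is $(n-1,1)$, and gluing $T^{cat}_{n-1}$ to a single leaf reproduces $T^{cat}_n$. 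Note this direction uses only strict monotonicity, not concavity.

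The minimization is the hard direction and is where strict concavity enters. Again inducting on $n$ and setting $m(a) := \Phi^{\mathcal{N}}_f(T^{gfb}_a)$, the recursion together with the inductive hypothesis reduces the claim to two sub-statements: (A) for the root split $(a^\ast, b^\ast)$ of $T^{gfb}_n$ one has $T^{gfb}_n = (T^{gfb}_{a^\ast}, T^{gfb}_{b^\ast})$, i.e.\ the maximal pending subtrees of the gfb-tree are themselves gfb-trees; and (B) the function $g(a) := m(a) + m(n-a)$ is uniquely minimized at $a = a^\ast$. Sub-statement (A) I would read off from the construction recalled in Remark~\ref{Rem:gfb_mb_from_fb_min_Sackin}: building $T^{gfb}_n$ from $T^{fb}_{h_n-1}$ by attaching the $r := n - 2^{h_n-1}$ cherries from left to right turns each maximal pending subtree into a fb-tree with further cherries attached left to right, hence a gfb-tree of the appropriate size; this also pins down $(a^\ast,b^\ast)$ explicitly (namely $(2^{h_n-2}+r,\,2^{h_n-2})$ when $r \le 2^{h_n-2}$, and $(2^{h_n-1},\,r)$ otherwise).

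The crux, and the step I expect to be the main obstacle, is sub-statement (B). Unlike the maximization, the optimal split is \emph{not} the most balanced one: already for $n=6$ the gfb split is $(4,2)$ rather than $(3,3)$, and indeed $m(3)+m(3) - m(4) - m(2) = 2f(3) - f(2) - f(4) > 0$ by strict concavity, so the balanced split is strictly worse. In general the optimal split jumps around the powers of two, so the increment sequence $D(a) := m(a) - m(a-1)$ is not monotone and a one-line convexity argument is unavailable. To prove (B) I would compare $g(a^\ast)$ with $g(a)$ for each competing split by expanding $m$ through its own recursion down to leaf counts that are powers of two (where the subtrees are fully balanced) and repeatedly invoking the strict concavity inequality $2f(x) > f(x-1) + f(x+1)$ to show that every deviation from the gfb split strictly increases the total; strict concavity simultaneously forces the minimizer to be unique, ruling out competitors such as $T^{mb}_n$ that merely tie for the Colless and Sackin minima. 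Carrying out this comparison cleanly --- ideally via a secondary induction exploiting the self-similarity of the gfb-tree across each dyadic block --- is the technical heart of the argument.
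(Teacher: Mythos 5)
The paper does not actually prove this proposition itself; it imports it verbatim from \citet{Cleary2025} (Corollary 4.4) in its ``Known results'' section, so there is no in-paper argument to compare yours against. Judged on its own terms, your maximization half is correct and complete: the recursion $\Phi^{\mathcal{N}}_f(T) = f(n) + \Phi^{\mathcal{N}}_f(T_1) + \Phi^{\mathcal{N}}_f(T_2)$, the induction, and the exchange step $M(a+1)+M(b-1)-M(a)-M(b) = f(a+1)-f(b) > 0$ for $a \geq b \geq 2$ correctly isolate $(n-1,1)$ as the unique optimal split using only strict monotonicity.

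The minimization half, however, contains a genuine gap, and it sits exactly where all the content of the proposition lies. Your reduction to sub-statements (A) and (B) is sound, and (A) is unproblematic (the paper even records that all pending subtrees of $T^{gfb}_n$ are gfb-trees). But (B) --- that $g(a) = m(a) + m(n-a)$ is uniquely minimized at the gfb root split $a^\ast$ --- is never proved; you describe a strategy (``expand $m$ through its own recursion down to powers of two and repeatedly invoke $2f(x) > f(x-1)+f(x+1)$'') and then state that carrying it out is ``the technical heart of the argument.'' That heart is missing. The difficulty is real and not cosmetic: as you yourself observe, the increments $D(a) = m(a) - m(a-1)$ are not monotone, the optimal split is not the balanced one, and the minimizer jumps discontinuously around powers of two, so no one-line concavity or convexity argument applies. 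Showing that \emph{every} competing split $(a, n-a)$ with $a \notin \{a^\ast, n-a^\ast\}$ gives $g(a) > g(a^\ast)$ requires either an explicit closed form for $m(a)$ (via the subtree-size counts of Theorem~\ref{theo:gfb-subtree-sizes}) combined with a careful telescoping of concavity inequalities, or a local exchange argument on trees (in the spirit of Lemmas~\ref{Lem:clade_Meta_convex} and~\ref{Lem:clade_Meta_increasing}, which the paper uses for the convex case) showing any non-gfb tree admits a strictly improving move. Until one of these is executed, the unique minimality of $T^{gfb}_n$ --- and hence both the ``moreover'' clause and condition (ii) of the imbalance-index definition via $T^{gfb}_{2^h} = T^{fb}_h$ --- remains unestablished.
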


\begin{Prop}
\label{Prop:cat_max_clade_meta_str_incr}[adapted from \citet{Cleary2025}, Theorem 4.3]
    Let $f$ be strictly increasing. Then, the caterpillar $T^{cat}_n$ uniquely maximizes the clade size metaconcept $\Phi^{\mathcal{N}}_f$ on $\BT$.
\end{Prop}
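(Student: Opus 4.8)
The plan is to prove the statement by strong induction on the leaf number $n$, exploiting the standard decomposition. First I would reduce to the order-$1$ case: when we compare trees in $\BT$ (all having $n$ leaves, hence $n-1$ inner vertices), every additional argument $o_i(T)$ is a fixed function of $n$ and therefore constant across the comparison. Consequently $g(\cdot) \coloneqq f(\cdot, o_1(T), \ldots, o_{\omega-1}(T))$ is a fixed, one-variable strictly increasing function, and $\Phi^{\mathcal{N}}_f(T) = \sum_{v \in \mathring{V}(T)} g(n_v)$ for every $T \in \BT$. Thus it suffices to show that $T^{cat}_n$ uniquely maximizes $T \mapsto \sum_{v \in \mathring{V}(T)} g(n_v)$ over $\BT$ for an arbitrary strictly increasing $g$.

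For the induction, write $G(m) \coloneqq \Phi^{\mathcal{N}}_g(T^{cat}_m) = \sum_{j=2}^{m} g(j)$ for $m \geq 2$ and $G(1) \coloneqq 0$, noting that the caterpillar on $m$ leaves has inner-vertex clade sizes exactly $\{2, 3, \ldots, m\}$. The base cases $n \in \{1, 2, 3\}$ are immediate, since there is only one binary tree on $1$, $2$, or $3$ leaves, and it is the caterpillar. For $n \geq 4$, take $T \in \BT$ with standard decomposition $T = (T_1, T_2)$, where $T_i$ has $n_i$ leaves and $n_1 + n_2 = n$. Since the clade size of a vertex is the same whether measured in $T$ or in the pending subtree containing it, and the root contributes $g(n)$, we obtain the recursion $\Phi^{\mathcal{N}}_g(T) = g(n) + \Phi^{\mathcal{N}}_g(T_1) + \Phi^{\mathcal{N}}_g(T_2)$. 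By the induction hypothesis (applied to the smaller leaf numbers $n_1, n_2$ and the same function $g$), each $\Phi^{\mathcal{N}}_g(T_i) \leq G(n_i)$, with equality if and only if $T_i = T^{cat}_{n_i}$. Hence $\Phi^{\mathcal{N}}_g(T) \leq g(n) + G(n_1) + G(n_2)$.

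It then remains to show that, among all splits $n_1 + n_2 = n$ with $1 \leq n_1 \leq n_2$, the quantity $G(n_1) + G(n_2)$ is maximized, uniquely, by the extreme split $\{1, n-1\}$, which yields $g(n) + G(1) + G(n-1) = G(n) = \Phi^{\mathcal{N}}_g(T^{cat}_n)$. I would prove this by comparing an arbitrary split with $2 \leq n_1 \leq n-2$ to the extreme one: writing $G(n-1) - [G(n_1) + G(n_2)]$ as a difference of two sums of $n_1 - 1$ consecutive $g$-values and pairing them term by term, each pair $g(n - n_1 + i)$ versus $g(i+1)$ satisfies $n - n_1 + i > i + 1$ (as $n_1 \leq n-2$), so strict monotonicity of $g$ forces the difference to be strictly positive. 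Equivalently, one can observe that the increment $G(n_1+1) + G(n-n_1-1) - [G(n_1) + G(n_2)] = g(n_1+1) - g(n-n_1)$ is negative for $n_1 < (n-1)/2$ and positive thereafter, so the sum strictly decreases and then strictly increases in $n_1$ and attains its maximum only at the endpoints.

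Combining the two inequalities, equality $\Phi^{\mathcal{N}}_g(T) = \Phi^{\mathcal{N}}_g(T^{cat}_n)$ forces both the split to be $\{1, n-1\}$ and the non-trivial subtree $T_2 = T^{cat}_{n-1}$, whence $T = (T^{cat}_1, T^{cat}_{n-1}) = T^{cat}_n$. I expect the main obstacle to be the uniqueness bookkeeping: one must keep the two sources of strictness (subtree optimality and split optimality) separate, so that equality in the final bound genuinely pins down both the shape of the split and the caterpillar structure of the subtrees, rather than only establishing non-strict maximality. The reduction to $\omega = 1$ and the term-by-term pairing are routine once set up correctly.
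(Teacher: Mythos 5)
Your argument is correct, and it is genuinely a proof, whereas the paper itself does not prove this statement at all: Proposition~\ref{Prop:cat_max_clade_meta_str_incr} is imported from \citet{Cleary2025} as a known result and is only ever \emph{used} (e.g., in Proposition~\ref{Prop:cat_max_clade_meta_BTandT}, where it is combined with Lemma~\ref{Lem:Seq_entries_Min_Max_all_meta} to obtain the entrywise dominance $\mathcal{N}^d(T)_i \leq \mathcal{N}^d(T^{cat}_n)_i$ and to extend the maximality to $\T$). The paper's in-house proofs of the analogous caterpillar statements (see Theorem~\ref{Theo:bal_meta_imbalance_index} for the BVM) go the other way around: they first establish entrywise dominance of the sorted sequences by an attach-a-cherry induction and then invoke Lemma~\ref{Lem:Seq_entries_Min_Max_all_meta}. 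Your route is instead the standard-decomposition recursion $\Phi^{\mathcal{N}}_g(T) = g(n) + \Phi^{\mathcal{N}}_g(T_1) + \Phi^{\mathcal{N}}_g(T_2)$ (which is exactly Remark~\ref{Rem:recursiveness_sequences}/Proposition~\ref{Prop:recursiveness}) plus a clean split-optimization step, and the term-by-term pairing of $g(n_2+i)$ against $g(1+i)$ for $i=1,\dots,n_1-1$ is airtight, as is the uniqueness bookkeeping. What each approach buys: yours is self-contained and elementary; the sequence-dominance route directly yields the entrywise inequality that the paper needs later for the $2$-positive extension to $\T$ (although your statement, being valid for \emph{all} strictly increasing $f$, recovers that dominance via Lemma~\ref{Lem:Seq_entries_Min_Max_all_meta}(1a) anyway). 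One small caveat: your opening reduction claims that every additional argument $o_i(T)$ is "a fixed function of $n$" on $\BT$; that is false in general (e.g., $o_i(T) = h(T)$), and the paper's Remark~\ref{Rem:only_first_order} explicitly restricts the higher-order reduction to additional values that are constant on each leaf-number class. Since the proposition concerns the first-order metaconcept (the paper works exclusively with these), the reduction is unnecessary and you should either drop it or add the same hypothesis the paper uses.
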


Finally, we recall a result from \citet{Cleary2025} regarding the number of subtrees of the gfb-tree $T^{gfb}_n$ for all possible subtree sizes.

\begin{Theo}[\citet{Cleary2025}, Theorem 4.12] \label{theo:gfb-subtree-sizes}
    Let $n \geq 1$, and let $gfb_n(i)$ denote the number of subtrees of $T^{gfb}_n$ of size $i$ for $i=1, \ldots, n$. Let $h_i = \lceil \log(i) \rceil$. Then, we have:
    \[gfb_n(i) = \begin{cases}
        \left\lfloor\frac{n}{i}\right\rfloor & \text{if } i = 2^{h_i} \text{ and if } \left((n \mod i) = 0 \text{ or } (n \mod i) \geq 2^{h_i -1}\right),\\
        \left\lfloor\frac{n}{i}\right\rfloor -1 & \text{if } i = 2^{h_i} \text{ and if } \left(0 < (n \mod i) < 2^{h_i -1}\right),\\
        1 &\text{if } i \neq 2^{h_i} \text{ and } \left((n-i) \mod 2^{h_i -1}\right) = 0,\\
        0 &\text{if } i \neq 2^{h_i} \text{ and } \left((n-i) \mod 2^{h_i -1}\right) > 0.
    \end{cases}\]
\end{Theo}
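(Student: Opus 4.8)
The plan is to argue directly from the explicit shape of the gfb-tree rather than from the clustering algorithm. Put $h := \lceil \log_2 n\rceil$ and $k := n - 2^{h-1}$, so that $0 \le k \le 2^{h-1}$. By Remark~\ref{Rem:gfb_mb_from_fb_min_Sackin}, $T^{gfb}_n$ arises from the fully balanced tree $T^{fb}_{h-1}$, all of whose $2^{h-1}$ leaves lie at depth $h-1$, by replacing its leftmost $k$ leaves with cherries. Index the leaves of $T^{fb}_{h-1}$ by $1,\dots,2^{h-1}$ from left to right, so the cherried leaves are exactly the prefix $1,\dots,k$. Each vertex $v$ of $T^{fb}_{h-1}$ at depth $d$ governs a contiguous block of $2^m$ consecutive leaves, where $m := h-1-d \in \{0,\dots,h-1\}$, and there are $2^{h-1-m}$ vertices at that level. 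I would classify every vertex of $T^{gfb}_n$ by the level $m$ of the $T^{fb}_{h-1}$-vertex it comes from, together with the $2k$ brand-new leaves created inside the cherries.

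The key observation is that, because the cherried leaves form a prefix, the effect on each level is uniform: a level-$m$ block becomes a pending subtree of clade size $2^m$ plus the number of its leaves among the first $k$. Hence, reading the $2^{h-1-m}$ level-$m$ blocks left to right, the first $\lfloor k/2^m\rfloor$ are fully cherried and acquire clade size $2^{m+1}$, at most one is partially cherried with clade size $2^m + (k \bmod 2^m)$, and the remaining $2^{h-1-m} - \lfloor k/2^m\rfloor - [\,k \bmod 2^m > 0\,]$ retain clade size $2^m$. (Here $[\,P\,]$ is $1$ if $P$ holds and $0$ otherwise.) At $m=0$ this reads correctly as well: the $k$ fully cherried single leaves are precisely the cherry parents (clade size $2$), there is no partial block, and the $2^{h-1}-k$ untouched leaves together with the $2k$ new cherry leaves give the $n$ leaves of clade size $1$, matching $gfb_n(1) = n$. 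So it remains to tally clade sizes $\ge 2$ by summing the level-$m$ contributions over $m = 0,\dots,h-1$.

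Next I aggregate by the target size $i$. If $i$ is not a power of two, it can only appear as a partially cherried block, and the relevant level is forced to be $m = h_i-1$ (the unique $m$ with $2^m < i < 2^{m+1}$). Such a block exists, and is then unique, exactly when $k \bmod 2^{h_i-1} = i - 2^{h_i-1}$, i.e.\ $k \equiv i \pmod{2^{h_i-1}}$. Since $n = 2^{h-1}+k$ and $2^{h_i-1}$ divides $2^{h-1}$ (because $h_i \le h$), this is equivalent to $(n-i)\bmod 2^{h_i-1} = 0$, giving the two non-power cases ($1$ and $0$) of the claim. If $i = 2^p$ with $1 \le p \le h-1$, exactly two contributions remain: the ``$2^m$'' count at level $m=p$ and the ``$2^{m+1}$'' count $\lfloor k/2^{p-1}\rfloor$ at level $m=p-1$ (where $m=0$ supplies the cherry parents when $p=1$). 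Adding them and simplifying via $\lfloor k/2^{p-1}\rfloor = 2\lfloor k/2^p\rfloor + [\,k\bmod 2^p \ge 2^{p-1}\,]$, together with $\lfloor n/i\rfloor = 2^{h-1-p} + \lfloor k/2^p\rfloor$ and $n\bmod i = k\bmod 2^p$, yields
\[ gfb_n(2^p) = \left\lfloor \tfrac{n}{i}\right\rfloor - [\,n\bmod i > 0\,] + [\,n\bmod i \ge 2^{h_i-1}\,], \]
which equals $\lfloor n/i\rfloor$ when $n\bmod i = 0$ or $n\bmod i \ge 2^{h_i-1}$ and equals $\lfloor n/i\rfloor - 1$ when $0 < n\bmod i < 2^{h_i-1}$, exactly the power-of-two dichotomy. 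The leftover value $p=h$ occurs only when $i = 2^h \le n$, forcing $n=2^h$ and $gfb_n(2^h)=1=\lfloor n/i\rfloor$, again as claimed.

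The routine parts are the floor and indicator identities; the main obstacle is getting the boundary bookkeeping exactly right so that these identities apply uniformly. In particular I must verify that a partially cherried block is counted once and never carries a power-of-two size, that treating the $k$ cherry parents as the fully cherried level-$0$ blocks (for which indeed $\lfloor k/2^{0}\rfloor = k$) makes $p=1$ fall under the general argument, and that the extreme level $p=h$ and the degenerate case $n=1$ are consistent. A cleaner-looking but more case-heavy alternative would be strong induction on $n$ using $T^{gfb}_n = (T^{gfb}_{2^{h-2}+k},\, T^{fb}_{h-2})$ when $k \le 2^{h-2}$ and $T^{gfb}_n = (T^{fb}_{h-1},\, T^{gfb}_{k})$ when $k > 2^{h-2}$, then matching the recurrence $gfb_n(i) = gfb_{\mathrm{left}}(i) + gfb_{\mathrm{right}}(i) + [\,i=n\,]$ against the formula; I expect the direct count above to be the less error-prone route.
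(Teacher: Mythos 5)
The paper does not actually prove Theorem~\ref{theo:gfb-subtree-sizes}: it is imported verbatim from \citet{Cleary2025} as a known result, so there is no in-paper proof to compare against. Your argument is nevertheless correct and essentially self-contained modulo one external input, namely the description of $T^{gfb}_n$ as $T^{fb}_{h-1}$ with $k=n-2^{h-1}$ cherries attached to a left-to-right prefix of its leaves; the paper records exactly this in Remark~\ref{Rem:gfb_mb_from_fb_min_Sackin} (again citing Cleary), so you are deriving the count from the structural characterization rather than from the greedy algorithm itself, which is legitimate though not logically independent of the cited source. The core bookkeeping checks out: the only clade sizes produced at level $m$ are $2^m$, $2^{m+1}$, and the single partial value $2^m+(k\bmod 2^m)$, which is never a power of two, so a non-power $i$ arises only as the unique partial block at level $h_i-1$, and the condition $k\equiv i\pmod{2^{h_i-1}}$ converts to $(n-i)\bmod 2^{h_i-1}=0$ precisely because $2^{h_i-1}$ divides $2^{h-1}$; for $i=2^p$ the two surviving contributions combine via $\lfloor k/2^{p-1}\rfloor=2\lfloor k/2^p\rfloor+[\,k\bmod 2^p\ge 2^{p-1}\,]$ and $\lfloor n/2^p\rfloor = 2^{h-1-p}+\lfloor k/2^p\rfloor$ to the stated dichotomy, and the boundary cases ($m=0$ giving the cherry parents, the root level, $i=n$, $i=2^h$, $n=1$) are all consistent. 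Your method closely parallels the layer-by-layer count the paper does carry out for the mb-tree analogue in Lemma~\ref{Lem:number_subtrees_size_mb} (Appendix~\ref{appendix:proof-lemmas}); the gfb version is in fact cleaner, because the prefix structure makes each level's block sizes fully explicit rather than only pinned to $\lfloor n/2^l\rfloor$ or $\lceil n/2^l\rceil$. One small slip: for $n\ge 2$ one has $1\le k\le 2^{h-1}$ rather than $0\le k$ (if $k=0$ then $\lceil\log_2 n\rceil$ would equal $h-1$), but nothing downstream depends on admitting $k=0$.
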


We are now in the position to state our new results.

\section{Results} \label{Sec:Results} 
This section is divided into two subsections. The first subsection examines the underlying tree shape sequences of the metaconcepts, highlighting their differences and similarities. The second subsection analyzes each metaconcept in terms of its minimizing and maximizing trees, as well as its minimum and maximum values. Finally, we investigate their locality and recursiveness.

\subsection{Tree shape sequences} \label{Subsec:Sequences}
In this subsection, we analyze and compare three sequences derived from a (binary) tree: the balance value sequence, the clade size sequence, and the leaf depth sequence. Each of these sequences serves as the foundation for a specific metaconcept.

A shared property of the three sequences associated with a rooted tree is that they can be computed recursively. We will exploit this property to analyze the recursiveness of our metaconcepts.

To formalize the recursive structure of these sequences, we introduce an operator that allows us to merge two sequences while preserving ascending order. Let $Seq_1$  and $Seq_2$ be two sequences of lengths $n_1$ and $n_2$, respectively. We define their ordered union as $Seq_1 \overset{\rightarrow}{\cup} Seq_2 \coloneqq Seq$, where $Seq$ is a sequence of length $n_1 + n_2$ containing all elements of $Seq_1$ and $Seq_2$ arranged in ascending order. For example, $(1,4,5,13) \overset{\rightarrow}{\cup} (2,2,4,7,8) = (1,2,2,4,4,5,7,8,13)$. Additionally, for $a \in \N$, we define $Seq_1 + a$ as the sequence obtained by increasing each element of  $Seq_1$ by $a$. For example, $(1,4,5,13) +1 = (2,5,6,14)$.

\begin{Rem}
\label{Rem:recursiveness_sequences}
    Let $T \in \BT$ be a binary tree with standard decomposition $T = \left(T_1, T_2\right)$ such that $T_1$ and $T_2$ have $n_1$ and $n_2$ leaves, respectively. Notice that all inner vertices of a maximal pending subtree $T_i$ have the same balance value in $T_i$ as in $T$ and the root of $T$ has balance value $\left|n_1 - n_2\right|$. Hence,
    \[\mathcal{B}\left(T\right) = \mathcal{B}(T_1) \overset{\rightarrow}{\cup} \mathcal{B}(T_2) \overset{\rightarrow}{\cup} (|n_1 - n_2|).\]
    Now, let $T \in \T$ be an arbitrary tree with standard decomposition $T = (T_1, \ldots, T_k)$ such that the maximal pending subtree $T_i$ has $n_i$ leaves. Note that all inner vertices of a maximal pending subtree $T_i$ have the same clade size in $T_i$ as in $T$ and the root of $T$ has clade size $n_1 + \ldots + n_k = n$. Also note that the depth of a leaf in a maximal pending subtree $T_i$ is one less than in $T$. Thus,
    \[\mathcal{N}(T) = \mathcal{N}(T_1) \overset{\rightarrow}{\cup} \ldots \overset{\rightarrow}{\cup} \mathcal{N}(T_k) \overset{\rightarrow}{\cup} \underbrace{(n_1 + \ldots + n_k)}_{=n}\]
    and
    \[\Delta(T) = \left(\Delta(T_1) \overset{\rightarrow}{\cup} \ldots \overset{\rightarrow}{\cup} \Delta(T_k)\right) +1.\]
\end{Rem}

Another shared property of these sequences is that none of them uniquely characterize a (binary) tree. That is, two non-isomorphic (binary) trees can have the same sequence (see Figures \ref{Fig:6_5_6} and \ref{Fig:9_42_44} in Appendix~\ref{appendix:figures}). Moreover, examples exist where two distinct binary trees with $n\geq 4$ leaves have the same/different $\mathcal{B}$ and/or the same/different $\mathcal{N}$ and/or the same/different $\Delta$. A (unique) minimal example in terms of the leaf number $n$ for each possible pair of sequences is given in Figures \ref{Fig:6_5_6}, \ref{Fig:9_42_44}, \ref{Fig:11_194_199}, \ref{Fig:11_201_205} and \ref{Fig:13_869_957} in Appendix~\ref{appendix:figures}. For an overview, see also Table \ref{Tab:Uebersicht_Bsp_Seq}, which indicates the corresponding figures for each case.

\begin{table}[htbp]
    \caption{This table provides an overview of where to find a minimal example of two distinct binary trees that either share or differ in two of the three sequences $\mathcal{B}$, $\mathcal{N}$, and $\Delta$. Note that all figures except for Figure \ref{Fig:13_869_957} show a unique minimal example.}
	\label{Tab:Uebersicht_Bsp_Seq}
	\centering
	\begin{tabular}{c|c|c|c}
	\textbf{First Sequence} & \textbf{Second Sequence} & \textbf{Figure} & $\boldsymbol{n}$\\
	\hline \hline
    $\mathcal{B}(T_1) = \mathcal{B}(T_2)$ & $\mathcal{N}(T_1) = \mathcal{N}(T_2)$ & \ref{Fig:9_42_44} & $9$\\
	$\mathcal{B}(T_1) \neq \mathcal{B}(T_2)$ & $\mathcal{N}(T_1) = \mathcal{N}(T_2)$ & \ref{Fig:11_201_205} & $11$\\
	$\mathcal{B}(T_1) = \mathcal{B}(T_2)$ & $\mathcal{N}(T_1) \neq \mathcal{N}(T_2)$ & \ref{Fig:13_869_957} & $13$\\
	\hline
	$\mathcal{B}(T_1) = \mathcal{B}(T_2)$ & $\Delta(T_1) = \Delta(T_2)$ & \ref{Fig:11_194_199} & $11$\\
	$\mathcal{B}(T_1) \neq \mathcal{B}(T_2)$ & $\Delta(T_1) = \Delta(T_2)$ & \ref{Fig:6_5_6} & $6$\\
	$\mathcal{B}(T_1) = \mathcal{B}(T_2)$ & $\Delta(T_1) \neq \Delta(T_2)$ & \ref{Fig:9_42_44} & $9$\\
    \hline
	$\mathcal{N}(T_1) = \mathcal{N}(T_2)$ & $\Delta(T_1) = \Delta(T_2)$ & \ref{Fig:11_194_199} & $11$\\
    $\mathcal{N}(T_1) \neq \mathcal{N}(T_2)$ & $\Delta(T_1) = \Delta(T_2)$ & \ref{Fig:6_5_6} & $6$\\
	$\mathcal{N}(T_1) = \mathcal{N}(T_2)$ & $\Delta(T_1) \neq \Delta(T_2)$ & \ref{Fig:9_42_44} & $9$
	\end{tabular}
\end{table}

\subsection{Metaconcepts}
\label{Subsec:Metaconcepts}

In this section, we analyze the three previously introduced metaconcepts. 
Specifically, we determine which families of the function $f$ ensure that a given metaconcept yields a (binary) imbalance index. Accordingly, we examine the trees that minimize and maximize each metaconcept based on the choice of $f$ and provide formulas for computing their minimum and maximum values. Finally, we analyze the locality and the recursiveness of the metaconcepts.

Throughout this manuscript, we focus exclusively on first-order metaconcepts. However, all results regarding minimizing and maximizing trees extend to higher-order metaconcepts that are equivalent to the first-order case. For examples of such functions, see the following remark.

\begin{Rem}
\label{Rem:only_first_order}
    The BVM, the LDM, and the binary CSM of order $\omega \geq 2$ with a function of the form $f(x,o_1, \ldots, o_{\omega-1}) = f_1(x) \cdot f_2(o_1, \ldots, o_{\omega-1}) + f_3(o_{1}, \ldots, o_{\omega-1})$ are equivalent to the first-order metaconcept with $f(x) = f_1(x)$, provided that $f_2(o_1, \ldots, o_{\omega-1}) > 0$ and the additional values $o_1, \ldots, o_{\omega-1}$ are the same for all trees with the same number of leaves (e.g., $o_i = n$ but $o_i \neq h(T)$). This equivalence holds because the additional values act as constants, and the number of summands in the calculation of these metaconcepts remains the same for all trees with $n$ leaves.

    For arbitrary trees, the number of summands in the CSM varies. Thus, the CSM of order $\omega \geq 2$ can only be guaranteed to be equivalent to the first-order metaconcept if the function is of the form $f(x,o_1, \ldots, o_{\omega-1}) = f_1(x) \cdot f_2(o_1, \ldots, o_{\omega-1})$, where $f_2(o_1, \ldots, o_{\omega-1}) > 0$. In this case, the metaconcept remains equivalent to the first-order metaconcept with function $f(x) = f_1(x)$.
\end{Rem}

\paragraph*{Summary of our main results}

First, we outline the conditions on the function $f$ that ensure that the respective metaconcept yields a (binary) imbalance index. For a detailed overview, including the minimizing trees on $\BT$, see Table \ref{Tab:Uebersicht_results}. In the next step, we analyze the locality and the recursiveness of the metaconcepts.

\begin{table}[ht]
\footnotesize
\caption{This table provides an overview of our results, indicating for which families of the function $f$ the metaconcepts qualify as (binary) imbalance indices. The column labels refer to four cases, all of which require $f$ to be strictly increasing. A checkmark ($\checkmark$) indicates that no further conditions on $f$ are needed to satisfy the corresponding property. When additional constraints on $f$ are required, they are explicitly stated in the respective cell. Conversely, a cross ($\times$) indicates that for at least one function in the given family, the metaconcept fails to be a (binary) imbalance index. The entry \enquote{depends} means that the binary minimizing tree(s) are not the same for all functions within that family. An entry in square brackets indicates that this result is adapted from \citet[Corollary 4.4]{Cleary2025}.}
\label{Tab:Uebersicht_results}
\begin{tabular}{ccc|c|c|c|c|}
    \cline{4-7}
    & &  &\multicolumn{4}{|c|}{$f$ strictly increasing and} \\ \cline{4-7}

    \multicolumn{3}{c|}{} & \textbf{--} & convex & str. concave & affine ($m > 0$, $a \in \R$) \\ \hline \cline{4-7}

    \multicolumn{1}{|c|}{\multirow{2}{*}{$\Phi^{\mathcal{B}}_{f}$}} & \multicolumn{2}{c||}{imb. index on $\BT$} & \bigcell{c}{\checkmark\\Theo. \ref{Theo:bal_meta_imbalance_index}} & \bigcell{c}{\checkmark\\Theo. \ref{Theo:bal_meta_imbalance_index}} & \bigcell{c}{\checkmark\\Theo. \ref{Theo:bal_meta_imbalance_index}} & \bigcell{c}{\checkmark, $\equiv C(T)$\\Rem. \ref{Rem:bal_meta_eqiuv_Colless}} \\ \cline{2-7}

    \multicolumn{1}{|c|}{} & \multicolumn{2}{c||}{min. tree(s) on $\BT$} & \bigcell{c}{depends\\Rem. \ref{Rem:examples_result_table}} & \bigcell{c}{e.g., $T^{mb}_n$\\Theo. \ref{Theo:mb_min_bal_meta_convex}} & \bigcell{c}{depends\\Rem. \ref{Rem:examples_result_table}} & \bigcell{c}{$\arg\min C(T)$\\Cor. \ref{Cor:bal_meta_affine}} \\ \hline\hline

    \multicolumn{1}{|c|}{\multirow{3}{*}{$\Phi^{\mathcal{N}}_{f}$}} & \multicolumn{1}{c|}{\multirow{2}{*}{imb. index on}} & \multicolumn{1}{c||}{$\BT$} & \multicolumn{1}{c|}{\bigcell{c}{$\times$\\Rem. \ref{Rem:examples_result_table}}} & \multicolumn{1}{c|}{\bigcell{c}{str. convex: \checkmark\\Cor. \ref{Cor:clade_meta_convex}}} & \multicolumn{1}{c|}{\bigcell{c}{[\checkmark]\\Prop. \ref{Prop:gfb_min_clade_meta_concave_imb_index}}} & \multicolumn{1}{c|}{\bigcell{c}{\checkmark, $\equiv S(T)$\\Rem. \ref{Rem:clade_meta_eqiuv_Sackin}}} \\ \cline{3-7}

    \multicolumn{1}{|c|}{} & \multicolumn{1}{c|}{} & \multicolumn{1}{c||}{$\T$} & \multicolumn{1}{c|}{\bigcell{c}{$\times$\\Rem. \ref{Rem:examples_result_table}}} & \multicolumn{1}{c|}{\bigcell{c}{str. convex, $2$-positive: \checkmark\\Cor. \ref{Cor:clade_meta_convex}}} & \multicolumn{1}{c|}{\bigcell{c}{$2$-positive: \checkmark\\Cor. \ref{Cor:clade_meta_concave_onT}}} & \multicolumn{1}{c|}{\bigcell{c}{$a \geq 0$: \checkmark, $a = 0 \Rightarrow \equiv S(T)$\\Rem. \ref{Rem:clade_meta_eqiuv_Sackin}, Prop. \ref{Prop:Nv_imbalance_index_f_affine}}} \\ \cline{2-7}

    \multicolumn{1}{|c|}{} & \multicolumn{2}{c||}{min. tree(s) on $\BT$} & \bigcell{c}{depends\\Theo. \ref{Theo:mb_clade_meta_convex},\\Prop. \ref{Prop:gfb_min_clade_meta_concave_imb_index}} & \bigcell{c}{str. convex: $T^{mb}_n$\\Theo. \ref{Theo:mb_clade_meta_convex}} & \bigcell{c}{$\left[T^{gfb}_n\right]$\\Prop. \ref{Prop:gfb_min_clade_meta_concave_imb_index}} & \bigcell{c}{$\arg\min S(T)$\\Prop. \ref{Prop:Nv_imbalance_index_f_affine}} \\ \hline\hline

    \multicolumn{1}{|c|}{\multirow{3}{*}{$\Phi^{\Delta}_{f}$}} & \multicolumn{1}{c|}{\multirow{2}{*}{imb. index on}} & \multicolumn{1}{c||}{$\BT$} & \multicolumn{1}{c|}{\bigcell{c}{$\times$\\Rem. \ref{Rem:examples_result_table}}} & \multicolumn{1}{c|}{\bigcell{c}{\checkmark\\Prop. \ref{Prop:Delta_imbalance_index_strincr_convex}}} & \multicolumn{1}{c|}{\bigcell{c}{$\times$\\Rem. \ref{Rem:examples_result_table}}} & \multicolumn{1}{c|}{\bigcell{c}{\checkmark, $\equiv S(T)$\\Rem. \ref{Rem:leaf_meta_eqiuv_Sackin}}} \\ \cline{3-7}

    \multicolumn{1}{|c|}{} & \multicolumn{1}{c|}{} & \multicolumn{1}{c||}{$\T$} & \multicolumn{1}{c|}{\bigcell{c}{$\times$\\Rem. \ref{Rem:examples_result_table}}} & \multicolumn{1}{c|}{\bigcell{c}{\checkmark\\Prop. \ref{Prop:Delta_imbalance_index_strincr_convex}}} & \multicolumn{1}{c|}{\bigcell{c}{$\times$\\Rem. \ref{Rem:examples_result_table}}} & \multicolumn{1}{c|}{\bigcell{c}{\checkmark, $\equiv S(T)$\\Rem. \ref{Rem:leaf_meta_eqiuv_Sackin}}} \\ \cline{2-7}

    \multicolumn{1}{|c|}{} & \multicolumn{2}{c||}{min. tree(s) on $\BT$} & \bigcell{c}{depends\\Rem. \ref{Rem:examples_result_table}} & \bigcell{c}{$\arg\min S(T)$\\Prop. \ref{Prop:Delta_imbalance_index_strincr_convex}} & \bigcell{c}{depends\\Rem. \ref{Rem:examples_result_table}} & \bigcell{c}{$\arg\min S(T)$\\Prop. \ref{Prop:Delta_imbalance_index_f_affine}} \\ \hline
\end{tabular}
\end{table}

\begin{Rem}
\label{Rem:examples_result_table}
    In this remark, we provide examples to illustrate some of the cases presented in Table \ref{Tab:Uebersicht_results}.

    First, we provide an example showing that the minimizing tree of the BVM can vary for strictly increasing (and possibly strictly concave) functions $f$. To cover both cases, consider two strictly increasing and strictly concave functions, $f_1$ and $f_2$, defined as follows: $f_1(x) = \log_2\left(\frac{1}{2}x +1\right)$ and $f_2(x) = \log_2\left(\frac{3}{2}x +1\right)$.

    Let $n = 5$, where three binary trees exist: $T^{gfb}_5$, $T_1 = \left(T^{fb}_2,T^{fb}_0\right)$, and $T^{cat}_5$. For $i \in \{1,2\}$, we have
    \begin{align*}
        \Phi^{\mathcal{B}}_{f_i}\left(T^{gfb}_5\right) &= 2 \cdot f_i(0) + 2 \cdot f_i(1), \\
        \Phi^{\mathcal{B}}_{f_i}(T_1) &= 3 \cdot f_i(0) + f_i(3), \\
        \Phi^{\mathcal{B}}_{f_i}\left(T^{cat}_5\right) &= f_i(0) + f_i(1) + f_i(2) + f_i(3)
    \end{align*}
    and hence
    \begin{align*}
        \Phi^{\mathcal{B}}_{f_1}\left(T^{gfb}_5\right) &\approx 1.17, \quad \Phi^{\mathcal{B}}_{f_2}\left(T^{gfb}_5\right) \approx 2.64, \\
        \Phi^{\mathcal{B}}_{f_1}(T_1) &\approx 1.32, \quad \Phi^{\mathcal{B}}_{f_2}(T_1) \approx 2.46, \\
        \Phi^{\mathcal{B}}_{f_1}\left(T^{cat}_5\right) &\approx 2.91, \quad \Phi^{\mathcal{B}}_{f_2}\left(T^{cat}_5\right) \approx 5.78.
    \end{align*}
    Thus, for the function $f_1$, the gfb-tree is the unique minimizer, while for $f_2$, tree $T_1$ is the unique minimizer of the BVM when $n = 5$.
    \smallskip

    Second, we demonstrate that the CSM is not a (binary) imbalance index for all strictly increasing functions $f$. Specifically, we show that the fb-tree is not the unique minimizing tree on $\BT$. Consider the tree $T_2 = \left(T^{gfb}_5, T^{gfb}_3\right)$, for which the clade size sequence is \[\mathcal{N}(T_2) = (2,2,2,3,3,5,8).\] Similarly, for the fully balanced tree $T^{fb}_3$, we have \[\mathcal{N}\left(T^{fb}_3\right) = (2,2,2,2,4,4,8).\]

    Now, define the function $f_3$ as follows:
    \[f_3(x) = \begin{cases}
        x &\text{if } x \leq 3, \\
        x+2 &\text{if } x > 3.
    \end{cases}\]
    With this function, we have
    \[\Phi^{\mathcal{N}}_{f_3}(T_2) = 29 < 30 = \Phi^{\mathcal{N}}_{f_3}\left(T^{fb}_3\right).\]
    Thus, $T_2$ attains a smaller value than $T^{fb}_3$, proving that the fb-tree is not always a (unique) minimizer. Consequently, the CSM is not a (binary) imbalance index for all strictly increasing functions $f$.
    \smallskip

    Third, we show that the LDM is not a (binary) imbalance index for all strictly increasing (and possibly strictly concave) functions $f$. Again, to cover both cases, we show that the fb-tree is not the unique tree minimizing the LDM on $\BT$ for a chosen strictly increasing and strictly concave function $f_4$, defined as follows: $f_4(x) = \frac{x}{x+\frac{1}{2}}$. This function is strictly increasing. Moreover, it is strictly concave, because $f_4'(x) = \frac{1}{2\left(x^2+x+\frac{1}{4}\right)}$ is strictly decreasing for $x \in \R_{\geq0}$. Then, we have for the fully balanced tree $T^{fb}_2$,
    \[\Phi^{\Delta}_{f_4}\left(T^{fb}_2\right) = 4 \cdot f_4(2) = 3.2.\]
    For the caterpillar tree $T^{cat}_4$, we have
    \[\Phi^{\Delta}_{f_4}\left(T^{cat}_4\right) = f_4(1) + f_4(2) + 2 \cdot f_4(3) \approx 3.18\]
    Thus,
    \[\Phi^{\Delta}_{f_4}\left(T^{fb}_2\right) > \Phi^{\Delta}_{f_4}\left(T^{cat}_4\right),\]
    showing that the LDM is not a (binary) imbalance index for all strictly increasing (and possibly strictly concave) functions $f$.
    \smallskip

    Fourth, we show that the minimizing tree for the LDM with a strictly increasing and strictly concave function $f$ also depends on the choice of $f$. In the previous calculation, we observed that the caterpillar minimizes the LDM for the function $f_4$ defined above. Now, let $f_5 = \log_2$ be another strictly increasing and strictly concave function.
    Then, we have
    \[\Phi^{\Delta}_{f_5}\left(T^{fb}_2\right) = 4 < 4.17 \approx \Phi^{\Delta}_{f_5}\left(T^{cat}_4\right).\]
    Thus, $T^{fb}_2$ attains the minimum for the function $f_5$, illustrating that the minimizing tree for the LDM depends on the choice of $f$.
\end{Rem}

\paragraph*{Choosing a suitable (binary) imbalance index derived from a metaconcept}

Before measuring tree balance, three key questions must be addressed: Are the trees binary or arbitrary? Which binary tree(s) should be considered the most balanced? Which aspect of the tree (balance values, clade sizes, or leaf depths) should be used to measure balance? Once these questions are answered, the next step is to determine which imbalance index, derived from which metaconcept, is most suitable.

To support this decision, we provide four decision trees in Figures \ref{Fig:decision_tree_first_min} and \ref{Fig:decision_tree_first_meta}. Each figure contains two decision trees: one for binary trees and one for arbitrary trees. The decision trees in Figure \ref{Fig:decision_tree_first_min} begin with a choice of binary minimizing tree(s), while those in Figure \ref{Fig:decision_tree_first_meta} start with the aspect of the tree to be considered, i.e., the class of metaconcepts, and then proceed to the selection of binary minimizing trees.

In both figures, the notations \enquote{$\equiv C(T)$} and \enquote{$\equiv S(T)$} indicate that the resulting imbalance index is equivalent to the Colless or Sackin index, respectively. The label \enquote{$\argmin S(T)$} indicates that the binary minimizing trees coincide with those of the Sackin index, although the imbalance index itself may not be equivalent to the Sackin index (as it can lead to different rankings of non-extremal trees).

To illustrate the different options for binary minimizing trees, Figure \ref{Fig:min_trees_n12} shows examples of $T^{mb}_n$, $T^{gfb}_n$, $\argmin S(T)$, and, for completeness, also $\argmin C(T)$ for $n = 12$ leaves.

\begin{figure}[htbp]
    \centering
    \includegraphics[width=0.75\linewidth]{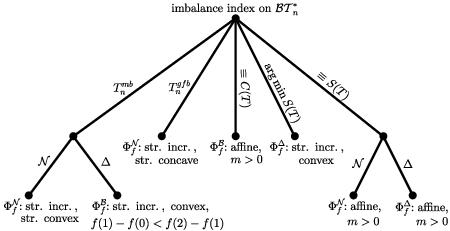}\\
    \vspace*{0.2cm}
    \includegraphics[width=0.6\linewidth]{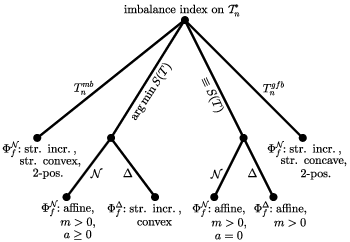}
    \caption{In these decision trees, one first selects the binary minimizing tree(s), followed by the class of metaconcepts from which the resulting (binary) imbalance index should be derived. The leaves of the trees are labeled with the corresponding metaconcept that satisfies the previously chosen properties, assuming the function $f$ is chosen as indicated.}
    \label{Fig:decision_tree_first_min}
\end{figure}

\begin{figure}[htbp]
    \centering
    \includegraphics[width=\linewidth]{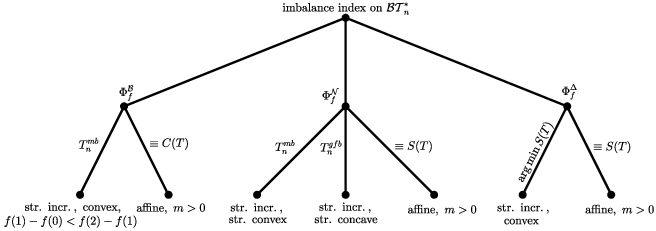}\\
    \vspace*{0.2cm}
    \includegraphics[width=0.75\linewidth]{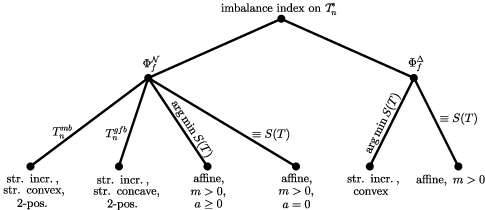}
    \caption{In these decision trees, one first selects the metaconcept from which the resulting (binary) imbalance index should be derived, and then chooses the corresponding binary minimizing tree(s). The leaves are labeled with the condition that the function $f$ must satisfy to ensure the previously chosen properties of the resulting (binary) imbalance index.}
    \label{Fig:decision_tree_first_meta}
\end{figure}

\begin{figure}[htbp]
    \centering
    \includegraphics[width=1\linewidth]{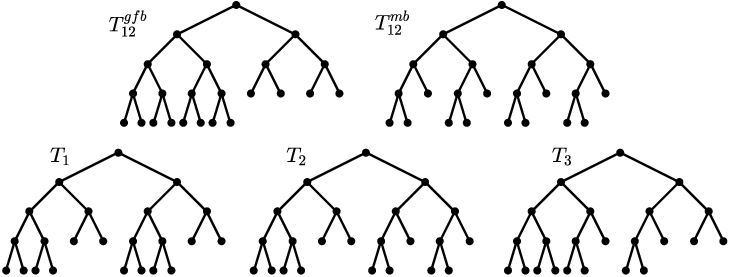}
    \caption{Depicted are all trees that minimize the Sackin index on $\mathcal{BT}^{\ast}_{12}$, including the gfb-tree and the mb-tree, i.e., $\argmin\limits_{T \in \mathcal{BT}^{\ast}_{12}} S(T) = \left\{T^{gfb}_{12}, T^{mb}_{12}, T_1, T_2, T_3\right\}$. Among these, all trees except $T_3$ also minimize the Colless index on $\mathcal{BT}^{\ast}_{12}$, i.e., $\argmin\limits_{T \in \mathcal{BT}^{\ast}_{12}} C(T) = \left\{T^{gfb}_{12}, T^{mb}_{12}, T_1, T_2\right\}$.}
    \label{Fig:min_trees_n12}
\end{figure}

\newpage 
\paragraph*{Calculating the metaconcepts in \textsf{R}}

Here, we provide \textsf{R} code to calculate the three metaconcepts using the \textsf{R} packages \texttt{ape}~\cite{Paradis2019} and \texttt{treebalance}~\cite{Fischer2023}.

%\begin{tcolorbox}[breakable,title=Calculating the metaconcepts in \textsf{R}]
%\begin{minted}{R}
\begin{codebox}
{Calculating the metaconcepts in \textsf{R}}
library(ape)
library(treebalance)

#trees in Newick format
binary_tree_newick <- "((((,),(,)),((,),(,))),((,),(,)));" #T^{gfb}_{12}
tree_newick <- "((((,),(,)),(,)),((,,),(,,)));" #tree that is not binary
#trees in phylo format
binary_tree <- ape::read.tree(text = binary_tree_newick)
tree <- ape::read.tree(text = tree_newick)

#function to calculate balance value metaconcept (BVM)
#input: binary tree in phylo format and function f (default: identity)
BVM <- function(binary_tree, f = function(b_v) {return(b_v)}) {
  if (!is.binary(binary_tree)) {
    stop("The input tree must be binary.")
  }
  n <- length(binary_tree$tip.label) #number of leaves
  #list of all clade sizes (inner vertices and leaves)
  all_clade_sizes <- get.subtreesize(binary_tree)
  Descs <- getDescMatrix(binary_tree) #determine children of vertices (matrix)
  #Rows n+1 to n+(n-1)=2n-1 in Descs represent the n-1 inner vertices.
  balance_values <- NULL #initialize list of balance values
  for (i in (n+1):(2*n-1)) { #go through all inner vertices
    #calculate the balance value of vertex i
    balance_values[length(balance_values)+1] <- abs(all_clade_sizes[Descs[i, 1]]
                                                  - all_clade_sizes[Descs[i, 2]])
  }
  return(sum(f(balance_values))) #calculate BVM
}
BVM(binary_tree)

#function to calculate clade size metaconcept (CSM)
#input: tree in phylo format and function f (default: identity)
CSM <- function(tree, f = function(n_v) {return(n_v)}) {
  #list of all clade sizes (inner vertices and leaves)
  all_clade_sizes <- get.subtreesize(tree)
  n <- length(tree$tip.label) #number of leaves
  num_vertices <- length(all_clade_sizes) #number of all vertices
  #list of clade sizes of all inner vertices
  inner_clade_sizes <- all_clade_sizes[(n+1):num_vertices]
  return(sum(f(inner_clade_sizes))) #calculate CSM
}
CSM(binary_tree)
CSM(tree)

#function to calculate leaf depth metaconcept (LDM)
#input: tree in phylo format and function f (default: identity)
LDM <- function(tree, f = function(delta) {return(delta)}) {
  n <- length(tree$tip.label) #number of leaves
  Descs <- getDescMatrix(tree) #determine children of vertices (matrix)
  #list of leaf depths (vertices 1 to n represent leaves, n+1 represents root)
  leaf_depths <- getNodesOfDepth(Descs, root = n+1, n = n)$nodeDepths[1:n]
  return(sum(f(leaf_depths))) #calculate LDM
}
LDM(binary_tree)
LDM(tree)
\end{codebox}
%\end{tcolorbox}

\paragraph*{General result regarding minimizing trees and sequences}

To investigate the extremal trees associated with these metaconcepts, we frequently use the following lemma. Note that the first part of this lemma is a generalization of \cite[Theorem 2]{Fischer2021b}.

\begin{Lem}
\label{Lem:Seq_entries_Min_Max_all_meta}
    Let $Seq$ be a sequence of length $l$, sorted in ascending order, which can be determined for every tree $T \in \mathcal{T}$, where $\mathcal{T} \subseteq \T$. Denote the $i$-th entry of $Seq(T)$ by $Seq(T)_i$. Let $f: \R \rightarrow \R$ be a function, and define the functional $\Phi^{Seq}_f: \mathcal{T} \rightarrow \R$ by
    \[\Phi^{Seq}_f(T) \coloneqq \sum\limits_{i=1}^{l} f\left(Seq(T)_i\right).\]
    Then, we have:
    \begin{enumerate}
        \item
            \begin{enumerate}
                \item If a tree $T \in \mathcal{T}$ minimizes the functional $\Phi^{Seq}_f$ on $\mathcal{T}$ for all strictly increasing functions $f$, then for all $\widetilde{T} \in \mathcal{T}$, we have
                \[Seq(T)_i \leq Seq(\widetilde{T})_i \text{ for all } i \in \left\{1,\ldots,l\right\}.\]
                \item Conversely, if a tree $T \in \mathcal{T}$ satisfies for all $\widetilde{T} \in \mathcal{T}$ and all $i \in \left\{1,\ldots,l\right\}$  
                \[Seq(T)_i \leq Seq(\widetilde{T})_i,\]
                then $T$ minimizes the functional $\Phi^{Seq}_f$ on $\mathcal{T}$ for all (not necessarily strictly) increasing functions.
            \end{enumerate}

        \item If $T$ \emph{uniquely} minimizes the functional for some increasing function $f$, then we have for all $\widetilde{T} \in \mathcal{T} \setminus \left\{T\right\}$
        \[Seq(T)_i < Seq(\widetilde{T})_i \text{ for at least one } i \in \left\{1,\ldots,l\right\}.\]
        Conversely, if
        \[Seq(T)_i \leq Seq(\widetilde{T})_i \text{ for all } i \in \left\{1,\ldots,l\right\}\]
        and
        \[Seq(T)_i < Seq(\widetilde{T})_i \text{ for at least one } i \in \left\{1,\ldots,l\right\}\]
        for all $\widetilde{T} \in \mathcal{T} \setminus \left\{T\right\}$, then $T$ (uniquely) minimizes the functional for all (strictly) increasing functions $f$.
    \end{enumerate}

    Both statements also hold in the maximization case, where \enquote{minimizing} is replaced by \enquote{maximizing}, and all inequalities are reversed.
\end{Lem}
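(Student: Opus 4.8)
The plan is to treat the two directions in each part separately: the sufficiency directions (part~1(b) and the converse in part~2) are immediate from termwise monotonicity, while the necessity directions (part~1(a) and the forward statement of part~2) require an explicit construction and form the heart of the argument.

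First I would dispatch the easy directions. For 1(b), assume $Seq(T)_i \leq Seq(\widetilde{T})_i$ for every $i$. Since $f$ is increasing, $f(Seq(T)_i) \leq f(Seq(\widetilde{T})_i)$ for each $i$, and summing over $i = 1, \ldots, l$ yields $\Phi^{Seq}_f(T) \leq \Phi^{Seq}_f(\widetilde{T})$, so $T$ minimizes. The converse half of part~2 works identically: if in addition there is always at least one index $i$ with $Seq(T)_i < Seq(\widetilde{T})_i$, then strict monotonicity of $f$ turns that single term into a strict inequality, making the summed inequality strict, hence $T$ is the unique minimizer.

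The main work is the necessity direction 1(a), which I would prove by contraposition. Suppose $T$ is \emph{not} entrywise dominated, i.e.\ there exist $\widetilde{T} \in \mathcal{T}$ and an index $j$ with $Seq(T)_j > Seq(\widetilde{T})_j$; I must produce a \emph{single} strictly increasing $f$ for which $T$ fails to minimize. The key observation is that, since both sequences are sorted in ascending order, the threshold $t^{*} \coloneqq Seq(T)_j$ guarantees that $Seq(T)$ has strictly more entries $\geq t^{*}$ than $Seq(\widetilde{T})$ does: indices $j, \ldots, l$ give at least $l-j+1$ such entries for $T$, whereas the first $j$ entries of $\widetilde{T}$ are all $\leq Seq(\widetilde{T})_j < t^{*}$, leaving at most $l-j$ for $\widetilde{T}$. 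Writing $A \coloneqq |\{i : Seq(T)_i \geq t^{*}\}|$ and $B \coloneqq |\{i : Seq(\widetilde{T})_i \geq t^{*}\}|$, we thus have $A > B$. I would then take the steep-step function
\[ f(x) = \varepsilon x + M \cdot \1_{\{x \geq t^{*}\}}, \]
which is strictly increasing for any $\varepsilon, M > 0$, since the upward jump preserves strict monotonicity. Evaluating gives $\Phi^{Seq}_f(T) - \Phi^{Seq}_f(\widetilde{T}) = \varepsilon\bigl(\sum_i Seq(T)_i - \sum_i Seq(\widetilde{T})_i\bigr) + M(A-B)$; because $A - B \geq 1$, choosing $M$ large relative to the fixed, bounded $\varepsilon$-term makes this difference positive, so $T$ does not minimize $\Phi^{Seq}_f$, a contradiction. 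The forward direction of part~2 is the same idea under the assumption that $T$ is a \emph{unique} minimizer for a given increasing $f$: if no index strictly separated $T$ from some $\widetilde{T} \neq T$, then $Seq(T)_i \geq Seq(\widetilde{T})_i$ for all $i$ would give $\Phi^{Seq}_f(T) \geq \Phi^{Seq}_f(\widetilde{T})$, contradicting unique minimality.

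Finally, the maximization statements follow from the symmetric version of each argument: the easy directions again reduce to termwise monotonicity, and for the necessity direction, when $Seq(T)_j < Seq(\widetilde{T})_j$ one sets the threshold at $t^{*} = Seq(\widetilde{T})_j$, so that now $\widetilde{T}$ has strictly more entries $\geq t^{*}$, and the \emph{same} steep-step function forces $\Phi^{Seq}_f(\widetilde{T}) > \Phi^{Seq}_f(T)$, contradicting maximality. I expect the only real obstacle to be bookkeeping: setting up the threshold counts correctly in the presence of ties and confirming that the jump-up step function is genuinely strictly increasing on all of $\R$. Once that is in place, the large-$M$ domination argument is routine.
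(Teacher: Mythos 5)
Your proposal is correct and follows essentially the same route as the paper: termwise monotonicity for the sufficiency directions, and for the necessity direction a strictly increasing ``identity plus a large jump at a threshold'' function, where sortedness guarantees the dominated tree has strictly fewer entries above the threshold. The only cosmetic differences are that you argue by contraposition with an arbitrary violating index (the paper picks the smallest one and phrases it as a contradiction) and you scale the linear part by $\varepsilon$ rather than bounding the jump size against the identity-function difference.
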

\begin{proof}\leavevmode
    \begin{enumerate}
        \item
            \begin{enumerate}
                \item We prove this assertion by contradiction. Let $T$ minimize the functional $\Phi^{Seq}_f$ for all strictly increasing functions $f$. Assume that there exists a tree $\widetilde{T} \in \mathcal{T}$ such that $Seq(T)_i > Seq(\widetilde{T})_i$ for at least one $i \in \left\{1, \ldots, l\right\}$. For the rest of the proof let $i_{min}$ be the smallest $i$ with this property. The strategy of the proof is now to construct a function $\widetilde{f}$ that is strictly increasing and yields $\Phi^{Seq}_{\widetilde{f}}(T) > \Phi^{Seq}_{\widetilde{f}}(\widetilde{T})$, leading to a contradiction.
        
                Note that the sequence $Seq$ is sorted in ascending order. Thus, $T$ has more entries in its sequence $Seq(T)$ whose value is at least $Seq(T)_{i_{min}}$ than $\widetilde{T}$ has in its sequence $Seq(\widetilde{T})$. Let $m_{T}$ be the number of entries in $Seq(T)$ with $Seq(T)_j \geq Seq(T)_{i_{min}}$, i.e., $m_{T} = l - i_{min} + 1$. Moreover, let $m_{\widetilde{T}}$ be the number of entries in $Seq(\widetilde{T})$ with $Seq(\widetilde{T})_j \geq Seq(T)_{i_{min}}$. Then we have  $m_{\widetilde{T}} < m_{T}$ and thus $m_{\widetilde{T}} - m_{T} \leq -1$. 
                
                The idea of the construction of $\widetilde{f}$ now is to take the identity function $\id$ and add a penalty term $x$ to values greater or equal to $Seq(T)_{i_{min}}$. Let $D \coloneqq \Phi^{Seq}_{\id}(\widetilde{T}) - \Phi^{Seq}_{\id}(T)$ be the difference of the functional applied to $T$ and $\widetilde{T}$ when using the identity function $\id$, which is strictly increasing, thus implying $D\geq 0$. 

                Now, let $x\in \R_{>D}$. We then define the strictly increasing function $\widetilde{f}$ as follows: 
                \[\widetilde{f}(s) \coloneqq \begin{cases}
                                            s, \text{ if } s < Seq(T)_{i_{min}}\\
                                            s + x, \text{ if } s \geq Seq(T)_{i_{min}}\\
                                        \end{cases}.\]
                This yields
                \begin{align*}
                    \Phi^{Seq}_{\widetilde{f}}(\widetilde{T}) - \Phi^{Seq}_{\widetilde{f}}(T) &= \left(\Phi^{Seq}_{\id}(\widetilde{T}) + m_{\widetilde{T}} \cdot x\right) - \left(\Phi^{Seq}_{\id}(T) + m_{T} \cdot x\right)\\
                    &= D + \underbrace{\left(m_{\widetilde{T}} - m_{T}\right)}_{\leq -1} \cdot \underbrace{x}_{> D} < 0.
                \end{align*}
                This contradicts the assumption that $T$ minimizes the functional for all strictly increasing functions and thus completes the proof for this part.

                \item Now, let $T \in \mathcal{T}$ and suppose that for all $\widetilde{T} \in \mathcal{T}$ and for all $i\in \{1,\ldots,l\}$, we have $Seq(T)_i \leq Seq(\widetilde{T})_i.$
                For any increasing function $f$, it follows that applying $f$ to each entry preserves the order, i.e.,
                $f(Seq(T)_i) \leq f(Seq(\widetilde{T})_i)$ for all  $i \in \left\{1,\ldots,l\right\}$.
                Summing over all indices, we obtain
                \[ \Phi^{Seq}_f(T) = \sum\limits_{i=1}^l f(Seq(T)_i) \leq \sum\limits_{i=1}^l f(Seq(\widetilde{T})_i) = \Phi^{Seq}_f(\widetilde{T}).\]
                Thus, $T$ minimizes $\Phi^{Seq}_f$ for all increasing functions $f$, which completes the proof of this part.
            \end{enumerate}

        \item First, assume that $T$ uniquely minimizes the functional $\Phi^{Seq}_f$ for some increasing function $f$. We want to show that
        \[Seq(T)_i < Seq(\widetilde{T})_i \text{ for at least one } i \in \left\{1,\ldots,l\right\}
        \text{ and for all }\widetilde{T} \in \mathcal{T} \setminus \{T\}.\] Assume that this is not the case, i.e., assume there exists $\widehat{T} \in \mathcal{T} \setminus \{T\}$ such that
        \[Seq(T)_i \geq Seq(\widehat{T})_i \text{ for all } i \in \left\{1,\ldots,l\right\}.\]
        Then, while $T$ might minimize the functional, $T$ cannot minimize the functional uniquely, because, by the first part of the lemma, $\Phi^{Seq}_f(T) \geq \Phi^{Seq}_f(\widehat{T})$. This contradicts the assumption and completes the proof for this part.
        \smallskip

        For the second assertion, let $T \in \mathcal{T}$ and assume that
        \[Seq(T)_i \leq Seq(\widetilde{T})_i \text{ for all } i \in \left\{1,\ldots,l\right\}\]
        and
        \[Seq(T)_i < Seq(\widetilde{T})_i \text{ for at least one } i \in \left\{1,\ldots,l\right\}\]
        for all $\widetilde{T} \in \mathcal{T} \setminus \{T\}$. Then, applying any (strictly) increasing function $f$, we obtain
        \[ f(Seq(T)_i) \leoq f(Seq(\widetilde{T})_i) \text{ for all } i \in \left\{1,\ldots,l\right\}\]
        with strict inequality for at least one index precisely if $f$ is strictly increasing. Summing over all indices gives
        \[\Phi^{Seq}_f(T) = \sum\limits_{i=1}^l f(Seq(T)_i) \leoq \sum\limits_{i=1}^l f(Seq(\widetilde{T})_i) = \Phi^{Seq}_f(\widetilde{T}),\]
        which shows that $T$ (uniquely) minimizes the functional for all (strictly) increasing functions $f$. This completes the proof for minimization.
    \end{enumerate}

    The proof for the respective maximization statements follows analogously by reversing all inequalities. Thus, the entire proof is complete.
\end{proof}

Note that if $\mathcal{T} = \BT$ and $Seq \in \left\{\mathcal{B}, \mathcal{N}, \Delta\right\}$, then the sequences have the same length for all considered trees. Moreover, the functional $\Phi^{Seq}_f$ corresponds to the respective metaconcept.

Having established this useful lemma, we can now begin our analysis of the three classes of metaconcepts. We start with the balance value metaconcept.

\subsubsection{Balance value metaconcept \texorpdfstring{$\Phi^{\mathcal{B}}_f$}{BVM}}
\label{Sec:BVM}

In the following, we  prove one of our main results, namely that the BVM is a binary imbalance index for all strictly increasing functions $f$. This generalizes existing results in the literature, which were previously proven only for specific functions $f$, such as the Colless index, the corrected Colless index, and the quadratic Colless index. Additionally, we demonstrate that all imbalance indices induced by strictly increasing and affine functions $f$ are equivalent to the Colless index. Furthermore, we show that the mb-tree (uniquely) minimizes the BVM if $f$ is strictly increasing and (locally strictly) convex. In a second step, we compute the minimum and maximum values of the BVM. \medskip

We first investigate the relationship between the BVM $\Phi^{\mathcal{B}}_f$ and the Colless index, the corrected Colless index, and the quadratic Colless index by focusing on the specific properties of the function $f$ that induces each of these indices.

\begin{Rem}
\label{Rem:bal_meta_eqiuv_Colless}
    Let $f$ be an affine function, i.e., $f(b) = m \cdot b + a$, and $T \in \BT$. Recalling that a binary tree with $n$ leaves has $n-1$ inner vertices, we have
    \[\Phi^{\mathcal{B}}_f\left(T\right) = \sum\limits_{b \in \mathcal{B}\left(T\right)} f(b) = \sum\limits_{b \in \mathcal{B}\left(T\right)} (m \cdot b + a) = \left(m \cdot \sum\limits_{b \in \mathcal{B}\left(T\right)} b\right) + (n-1) \cdot a \stackrel{\text{cf. Table \ref{Tab:imbalance_indices_BT}}}{=} m \cdot C\left(T\right) + (n-1) \cdot a.\]
    It follows immediately that the BVM with strictly increasing and affine $f$ (i.e., $m > 0$) is equivalent to the Colless index on $\BT$.

    Furthermore, by Remark \ref{Rem:only_first_order}, the corrected Colless index is equivalent to the Colless index. Additionally, we note that the functions inducing the Colless index (i.e., the identity function) and the quadratic Colless index (i.e., $f_{QC}(b) = b^2$) are both strictly increasing. The function inducing the Colless index is affine, whereas the function inducing the quadratic Colless index is strictly convex for $b \geq 0$.
\end{Rem}

With this in mind, we now turn our attention to the extremal trees of the BVM.

\paragraph{Extremal trees}\leavevmode\\

We begin by analyzing the trees that maximize, respectively minimize, the BVM for (strictly) increasing functions $f$.

\begin{Theo}
\label{Theo:bal_meta_imbalance_index}
    Let $f$ be a (strictly) increasing function.
    \begin{enumerate}
        \item The caterpillar $T^{cat}_n$ is the (unique) tree maximizing the balance value metaconcept $\Phi^{\mathcal{B}}_f$ on $\BT$.
        \item If $n = 2^h$ with $h \geq 0$, then the fully balanced tree $T^{fb}_h$ is the (unique) tree minimizing $\Phi^{\mathcal{B}}_f$ on $\BT$.
    \end{enumerate}
    In particular, the balance value metaconcept is a binary imbalance index for all strictly increasing functions $f$.
\end{Theo}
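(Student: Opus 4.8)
The plan is to treat the two extremal claims separately and, in both cases, reduce them to the entrywise comparison of balance value sequences furnished by Lemma~\ref{Lem:Seq_entries_Min_Max_all_meta}; note that all trees in $\BT$ have balance value sequences of the common length $n-1$, so the lemma applies verbatim with $Seq = \mathcal{B}$.

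The minimization claim (part 2) is the easy half. Since every balance value is nonnegative and the fully balanced tree is characterized (via \cite[Corollary 1]{Coronado2020a}) as the unique tree in $\BT$ with $\mathcal{B}(T) = (0,\ldots,0)$ when $n = 2^h$, I would simply observe that $\mathcal{B}(T^{fb}_h)_i = 0 \leq \mathcal{B}(\widetilde{T})_i$ for every $\widetilde{T} \in \BT$ and every $i$, with a strict inequality for at least one $i$ whenever $\widetilde{T} \neq T^{fb}_h$. Feeding this into the minimization direction of Lemma~\ref{Lem:Seq_entries_Min_Max_all_meta} (parts 1(b) and 2) immediately yields that $T^{fb}_h$ minimizes $\Phi^{\mathcal{B}}_f$ for all increasing $f$ and minimizes it uniquely for all strictly increasing $f$.

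The maximization claim (part 1) is where the real work lies. First I would compute the caterpillar's balance value sequence: peeling off one leaf at a time yields balance values $n-2, n-3, \ldots, 1, 0$ at the successive inner vertices, so $\mathcal{B}(T^{cat}_n) = (0, 1, \ldots, n-2)$, i.e.\ $\mathcal{B}(T^{cat}_n)_i = i-1$. The heart of the argument is then to show that this sequence entrywise dominates every other, namely $\mathcal{B}(\widetilde{T})_i \leq i-1$ for all $\widetilde{T} \in \BT$ and all $i$. I would prove this by induction on $n$ using the recursive decomposition of Remark~\ref{Rem:recursiveness_sequences}, namely $\mathcal{B}(T) = \mathcal{B}(T_1) \overset{\rightarrow}{\cup} \mathcal{B}(T_2) \overset{\rightarrow}{\cup} (|n_1 - n_2|)$ for the standard decomposition $T = (T_1, T_2)$ with $n_1 \geq n_2$. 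The cleanest bookkeeping is to pass to cumulative counts $c_S(t) := |\{x \in S : x \leq t\}|$, since entrywise domination of equal-length sorted sequences is equivalent to the reverse domination of these counts, and counts are additive under $\overset{\rightarrow}{\cup}$. The induction hypothesis gives $c_{\mathcal{B}(T_j)}(t) \geq \min(t+1, n_j - 1)$, while appending the root value contributes $\1[n_1 - n_2 \leq t]$; a short case analysis, according to whether $t$ lies below $n_2-1$, between $n_2-1$ and $n_1-1$, or above, then shows $c_{\mathcal{B}(T)}(t) \geq \min(t+1, n-1) = c_{\mathcal{B}(T^{cat}_n)}(t)$, which is exactly the desired domination. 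I expect this inductive step to be the main obstacle, in particular correctly accounting for the root's balance value $n_1 - n_2$ when merging the two subtree sequences.

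With the domination in hand, the maximization direction of Lemma~\ref{Lem:Seq_entries_Min_Max_all_meta} shows that $T^{cat}_n$ maximizes $\Phi^{\mathcal{B}}_f$ for every increasing $f$. For uniqueness under strictly increasing $f$, by part 2 of the lemma it suffices to show that $T^{cat}_n$ is the only tree in $\BT$ whose balance sequence equals $(0, 1, \ldots, n-2)$: the entry $n-2$ forces an inner vertex of balance value $n-2$, but any non-root vertex $v$ satisfies $b_v \leq n_v - 2 < n-2$, so the root must have a leaf child and an $(n-1)$-leaf subtree $T'$ with $\mathcal{B}(T') = (0, \ldots, n-3)$; induction then identifies $T' = T^{cat}_{n-1}$ and hence $T = T^{cat}_n$. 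Combining the unique maximization by $T^{cat}_n$ with the unique minimization by $T^{fb}_h$ for strictly increasing $f$ gives, by definition, that $\Phi^{\mathcal{B}}_f$ is a binary imbalance index.
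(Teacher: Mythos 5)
Your proposal is correct, and it reaches the same two pillars as the paper's proof -- entrywise domination of the balance value sequences followed by an appeal to Lemma~\ref{Lem:Seq_entries_Min_Max_all_meta} -- but it gets to the domination step by a genuinely different induction. The paper's part~2 is identical to yours. For part~1, however, the paper takes a minimal-counterexample argument based on \emph{cherry attachment}: every $T\in\BT$ arises from some $T_{n-1}\in\mathcal{BT}^{\ast}_{n-1}$ by attaching a cherry, this operation inserts a $0$ into the sequence and increases each existing balance value by at most one, and hence $\mathcal{B}(T)_i \leq \mathcal{B}(T_{n-1})_{i-1}+1 \leq \mathcal{B}(T^{cat}_{n-1})_{i-1}+1 = \mathcal{B}(T^{cat}_n)_i$. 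You instead induct via the standard decomposition $T=(T_1,T_2)$ and Remark~\ref{Rem:recursiveness_sequences}, translating domination into the cumulative-count inequality $c_{\mathcal{B}(T)}(t)\geq\min(t+1,n-1)$; I checked the three-way case split on $t$ (using that the root contributes $\1[n_1-n_2\leq t]$ with $n_1-n_2\leq n_1-1\leq t$ in the top regime, and that a non-root vertex $v$ satisfies $b_v\leq n_v-2$), and it closes. The approaches also diverge on uniqueness: the paper simply notes that any $T\neq T^{cat}_n$ has at least two cherries, so $\mathcal{B}(T)_2=0<1=\mathcal{B}(T^{cat}_n)_2$, which is a one-line argument; you instead characterize $T^{cat}_n$ as the only tree with sequence $(0,1,\ldots,n-2)$ via a second induction, which is more work but yields the slightly stronger structural fact that the caterpillar is determined by its balance value sequence. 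Your root-decomposition route has the advantage of reusing the recursion machinery already set up in Remark~\ref{Rem:recursiveness_sequences} and of being the kind of argument that generalizes to other sequences computed by $\overset{\rightarrow}{\cup}$; the paper's cherry-attachment route is shorter and avoids the bookkeeping with cumulative counts. Both are valid proofs of the theorem.
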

\begin{proof}
    Let $f$ be a (strictly) increasing function.
    \begin{enumerate}
        \item We will show that the caterpillar (uniquely) maximizes $\Phi^{\mathcal{B}}_f$ on $\BT$. Specifically, we first show that for all trees $T \in \BT$, we have $\mathcal{B}(T)_i \leq \mathcal{B}\left(T^{cat}_n\right)_i$ for all $i \in \left\{1, \ldots, n-1\right\}.$ For $n \leq 3$, there is nothing to show, as there exists only one binary tree with $n$ leaves. Let $n \geq 4$ be the smallest number of leaves for which there exists a tree $T \in \BT$ such that $\mathcal{B}(T)_i > \mathcal{B}\left(T^{cat}_n\right)_i$ for at least one $i \in \left\{1, \ldots, n-1\right\}$. By assumption, the statement of the theorem holds for $T^{cat}_{n-1}$. Let $T_{n-1}$ be a tree from which $T$ can be obtained by attaching a cherry to one of its leaves. Note that $T^{cat}_n$ can be obtained in the same way from $T^{cat}_{n-1}$. Thus, for both trees, we have $\mathcal{B}(T)_{1} = \mathcal{B}\left(T^{cat}_n\right)_1 = 0$, corresponding to the parent of the attached cherry in each respective tree.

        Note that $\mathcal{B}\left(T^{cat}_n\right) = (0,1, \ldots, n-2)$ and, consequently, $\mathcal{B}(T^{cat}_{n-1}) = (0,1, \ldots, n-3)$. Moreover, attaching a cherry to $T_{n-1}$ to obtain $T$ increases each balance value by at most one. Therefore,
        \[\mathcal{B}\left(T^{cat}_n\right)_i = \mathcal{B}\left(T^{cat}_{n-1}\right)_{i-1} +1 \geq \mathcal{B}(T_{n-1})_{i-1} +1 \geq \mathcal{B}(T)_i \text{ for } i = 2, \ldots, n-1,\]
        contradicting the assumption that $\mathcal{B}\left(T^{cat}_n\right)_i < \mathcal{B}(T)_i$ for at least one $i \in \left\{1, \ldots, n-1\right\}$.

        Next, we show that $\mathcal{B}(T)_i < \mathcal{B}\left(T^{cat}_n\right)_i$ for at least one $i$. Since $T \neq T^{cat}_n$, this follows directly from the fact that $T$ has at least two cherries. Hence, $\mathcal{B}(T)_2 = 0 < 1 = \mathcal{B}\left(T^{cat}_n\right)_2$.

        Now, together with Lemma \ref{Lem:Seq_entries_Min_Max_all_meta}, this implies that the caterpillar (uniquely) maximizes $\Phi^{\mathcal{B}}_f$ on $\BT$ for any (strictly) increasing function $f$.

        \item Now, we show that the fb-tree of height $h$ (uniquely) minimizes $\Phi^{\mathcal{B}}_f$ on $\BT$ with $n = 2^h$ for any (strictly) increasing function $f$. Recall that we have $\mathcal{B}(T) = (0, \ldots, 0)$ if and only if $T = T^{fb}_h$. In particular, for all $i \in \left\{1,2, \ldots, n-1\right\}$, we have
        \[ \mathcal{B}\left(T^{fb}_h\right)_i = 0 \leq \mathcal{B}(T)_i,\] and if $T \neq T^{fb}_h$, then
        \[\mathcal{B}\left(T^{fb}_h\right)_i = 0 < \mathcal{B}(T)_i\]
        for at least one $i \in \left\{1,2, \ldots, n-1\right\}$. It now follows from Lemma \ref{Lem:Seq_entries_Min_Max_all_meta} that the fb-tree (uniquely) minimizes $\Phi^{\mathcal{B}}_f$ in this case, which completes the proof.
    \end{enumerate}
    By both parts of the proof, the BVM is a binary imbalance index for all strictly increasing functions. This completes the proof.
\end{proof}

Hence, by Theorem \ref{Theo:bal_meta_imbalance_index}, we have identified a family of binary imbalance indices, some of which are already known. As shown in Table \ref{Tab:imbalance_indices_BT} and Remark \ref{Rem:bal_meta_eqiuv_Colless}, this family includes the Colless index, the equivalent corrected Colless index, and the quadratic Colless index. 
\smallskip

Next, we consider the minimizing trees of the BVM with affine functions $f$. 

\begin{Cor}
\label{Cor:bal_meta_affine}
    Let $f$ be a strictly increasing affine function, i.e., $f(b) = m \cdot b + a$ with $m,a \in \R$ and $m > 0$. Then, for all $n$, the trees that minimize the balance value metaconcept $\Phi^{\mathcal{B}}_f$ are the same as those that minimize the Colless index. In particular, both the gfb-tree and the mb-tree achieve this minimum.
\end{Cor}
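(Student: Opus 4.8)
The plan is to reduce this corollary to the already-established Remark \ref{Rem:bal_meta_eqiuv_Colless} and Proposition \ref{Prop:gfb_mb_min_Colless}. Remark \ref{Rem:bal_meta_eqiuv_Colless} gives the explicit identity $\Phi^{\mathcal{B}}_f(T) = m \cdot C(T) + (n-1) \cdot a$ for any affine $f(b) = m \cdot b + a$. The key observation is that for trees with a \emph{fixed} number of leaves $n$, the term $(n-1) \cdot a$ is a constant that does not depend on the tree shape at all, and the slope $m$ is strictly positive by hypothesis. Thus $\Phi^{\mathcal{B}}_f$ is a strictly increasing affine transformation of $C(T)$ on each fixed-$n$ set $\BT$.

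First I would fix an arbitrary leaf number $n$ and invoke the identity from Remark \ref{Rem:bal_meta_eqiuv_Colless}. Then, for any two trees $T_1, T_2 \in \BT$, I would note that
\[\Phi^{\mathcal{B}}_f(T_1) - \Phi^{\mathcal{B}}_f(T_2) = m \cdot \left(C(T_1) - C(T_2)\right),\]
so that, since $m > 0$, we have $\Phi^{\mathcal{B}}_f(T_1) < \Phi^{\mathcal{B}}_f(T_2)$ if and only if $C(T_1) < C(T_2)$. In particular, $T$ minimizes $\Phi^{\mathcal{B}}_f$ on $\BT$ if and only if $T$ minimizes $C$ on $\BT$, which establishes that the two minimizing sets coincide (this is exactly the notion of equivalence already recorded for strictly increasing affine $f$ in Remark \ref{Rem:bal_meta_eqiuv_Colless}).

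Finally, I would apply Proposition \ref{Prop:gfb_mb_min_Colless}, which states that both the gfb-tree $T^{gfb}_n$ and the mb-tree $T^{mb}_n$ minimize the Colless index on $\BT$. By the equivalence just shown, both trees therefore also minimize $\Phi^{\mathcal{B}}_f$, yielding the \enquote{in particular} claim. Honestly, there is no real obstacle here: the entire argument is a one-line consequence of a linear relabeling of the objective together with two cited results. The only thing to be mildly careful about is that the constant offset $(n-1) \cdot a$ genuinely is tree-independent within a fixed leaf number, so that $a$ (which may be negative) never affects the ordering of trees of the same size; this is immediate from the count of inner vertices being $n-1$ for every binary tree on $n$ leaves.
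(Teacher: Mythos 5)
Your proposal is correct and follows the same route as the paper: the paper's proof is a one-liner citing the equivalence from Remark \ref{Rem:bal_meta_eqiuv_Colless} (together with $m>0$), and your argument simply spells out why the affine identity $\Phi^{\mathcal{B}}_f(T) = m \cdot C(T) + (n-1)\cdot a$ makes the minimizing sets coincide before invoking Proposition \ref{Prop:gfb_mb_min_Colless} for the \enquote{in particular} claim. No gaps; the extra care about the tree-independence of the offset $(n-1)\cdot a$ is sound and merely makes explicit what the paper leaves implicit.
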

\begin{proof}
    The proof follows directly from the equivalence of the BVM to the Colless index, as stated in Remark \ref{Rem:bal_meta_eqiuv_Colless}, given that $m > 0$.
\end{proof}

We remark that all binary trees with $n$ leaves minimizing the Colless index are completely characterized (Remark~\ref{Rem:colless-minima}), implying that we also have a full characterization of the trees minimizing the BVM $\Phi^{\mathcal{B}}_f$ for strictly increasing affine functions $f$.
\smallskip

Next, we prove that the mb-tree minimizes the BVM for all strictly increasing and convex functions $f$.

\begin{Theo}
\label{Theo:mb_min_bal_meta_convex}
    Let $f$ be a strictly increasing and convex function. Then, the mb-tree $T^{mb}_n$ minimizes the balance value metaconcept $\Phi^{\mathcal{B}}_f$ on $\BT$ for all $n$. Moreover, the mb-tree uniquely minimizes $\Phi^{\mathcal{B}}_f$ on $\BT$ if additionally $f(1) - f(0) < f(2) - f(1)$, i.e., if $f$ is additionally locally strictly convex.
\end{Theo}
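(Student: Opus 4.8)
The plan is to exploit the additive recursive structure of the balance value metaconcept together with induction on the leaf number $n$, reducing the whole statement to a single inequality that compares the balanced root split against every other split. Note first that a direct entry-wise domination argument via Lemma \ref{Lem:Seq_entries_Min_Max_all_meta} cannot succeed here, since (as Remark \ref{Rem:examples_result_table} shows) the minimizing tree genuinely depends on $f$, so convexity must enter essentially. From Remark \ref{Rem:recursiveness_sequences}, for $T = (T_1, T_2)$ with $n_i = |V_L(T_i)|$ and $d := |n_1 - n_2|$ we have $\Phi^{\mathcal{B}}_f(T) = \Phi^{\mathcal{B}}_f(T_1) + \Phi^{\mathcal{B}}_f(T_2) + f(d)$, i.e., the metaconcept is additive across the standard decomposition. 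Writing $M(m) := \Phi^{\mathcal{B}}_f(T^{mb}_m)$, and recalling that every inner vertex of $T^{mb}_m$ is balanced so that its balance value sequence consists only of $0$'s and $1$'s with exactly $C(T^{mb}_m) = c_m$ ones (where $c_m$ is the minimum Colless value of Proposition \ref{Prop:Min_Colless}), I obtain the closed form $M(m) = (m-1)f(0) + c_m \cdot \kappa$ with $\kappa := f(1) - f(0) > 0$.

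Assuming inductively that $T^{mb}_m$ minimizes $\Phi^{\mathcal{B}}_f$ on $\BT$ for all $m < n$, the recursion and the induction hypothesis give $\Phi^{\mathcal{B}}_f(T) \geq M(n_1) + M(n_2) + f(d)$ for any $T \in \BT$ with root split $(n_1, n_2)$. Since $T^{mb}_n$ has the balanced root split $(\lceil n/2\rceil, \lfloor n/2\rfloor)$ with $d = n \bmod 2$, it then suffices to prove the single-split inequality
\[ M(n_1) + M(n_2) + f(d) \; \geq \; M(\lceil n/2\rceil) + M(\lfloor n/2\rfloor) + f(n \bmod 2) \; = \; M(n). \]
Substituting the closed form and cancelling the common $(n-2)f(0)$, this reduces to $(c_{n_1} + c_{n_2})\kappa + f(d) \geq (c_{\lceil n/2\rceil} + c_{\lfloor n/2\rfloor})\kappa + f(n \bmod 2)$.

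The crux is to establish this last inequality, and here I expect the main obstacle. A tempting route — showing that $M$, equivalently $c_m$, is discretely convex so that the balanced split minimizes both summands — fails, since $c_m$ is not convex (its first differences $0,1,-1,2,0,0,-2,\dots$ are not monotone). Instead I would combine two independent facts. First, minimality of the Colless index (Proposition \ref{Prop:gfb_mb_min_Colless}) yields $c_{n_1} + c_{n_2} + d \geq c_n = c_{\lceil n/2\rceil} + c_{\lfloor n/2\rfloor} + (n \bmod 2)$, because a tree with split $(n_1,n_2)$ built from Colless-minimal subtrees has Colless value $c_{n_1}+c_{n_2}+d \ge c_n$. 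Second, convexity of $f$ gives nondecreasing increments, hence $f(d) - f(0) \geq d\,\kappa$ for every integer $d \geq 0$. Writing $A := c_{n_1}+c_{n_2}$, $B := c_{\lceil n/2\rceil}+c_{\lfloor n/2\rfloor}$, $d_0 := n \bmod 2 \in \{0,1\}$, and using $d \equiv n \equiv d_0 \pmod 2$: when $d_0 = 0$ the Colless recursion reads $A + d \geq B$, so $(A-B)\kappa + (f(d)-f(0)) \geq (A - B + d)\kappa \geq 0$; when $d_0 = 1$ it reads $A + d \geq B + 1$, and the target $(A-B)\kappa + (f(d) - f(0)) - \kappa \geq (A - B + d - 1)\kappa \geq 0$ again closes. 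This proves minimality.

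For uniqueness under the additional hypothesis $f(1)-f(0) < f(2)-f(1)$, the same skeleton is run while tracking strictness. If some subtree $T_i$ is not $T^{mb}_{n_i}$, the induction hypothesis (uniqueness for $m<n$) makes the first inequality strict. Otherwise both subtrees are mb-trees but $T \neq T^{mb}_n$, forcing a non-balanced root split, so $d > d_0$ and hence $d \geq 2$ by parity; local strict convexity then upgrades the increment bound to $f(d) - f(0) > d\,\kappa$, making the single-split inequality strict in both parity cases. Thus $\Phi^{\mathcal{B}}_f(T) > M(n)$ whenever $T \neq T^{mb}_n$, which is the claimed unique minimality. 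The base cases $n \leq 3$, where only one binary tree exists, start the induction.
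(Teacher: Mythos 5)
Your proof is correct, but it takes a genuinely different route from the paper's. The paper argues globally: after normalizing to $f(0)=0$, it compares the entire balance value sequence of an arbitrary $T$ against that of $T^{mb}_n$ (all zeros and exactly $c_n$ ones) and reduces everything, via Lemmas~\ref{Lem:collesGEQ2}, \ref{Lem:BAL_convex_Tmb} and \ref{Lem:BAL_kf1_fk_Tmb}, to the sufficient condition $f(b)\geq b\cdot f(1)$ for $b\geq 2$, using only $\sum_i b_i = C(T)\geq c_n$ and a case split on whether $T$ has more or fewer positive balance values than $T^{mb}_n$ has ones. You instead induct on $n$ through the standard decomposition, derive the closed form $M(m)=(m-1)f(0)+c_m\kappa$, and compress the whole problem into a single root-split inequality, which you close with the Colless subadditivity $c_{n_1}+c_{n_2}+d\geq c_n = c_{\lceil n/2\rceil}+c_{\lfloor n/2\rfloor}+(n\bmod 2)$ together with the same convexity consequence $f(d)-f(0)\geq d\,(f(1)-f(0))$, strict for $d\geq 2$ under local strict convexity. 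Both arguments ultimately rest on the Colless minimality of $T^{mb}_n$ and on the increment bound extracted from convexity; the paper's version isolates the weaker hypothesis $b\cdot f(1)\leq f(b)$ as a reusable lemma that could in principle cover non-convex $f$, while yours avoids the normalization trick and the $l\geq k$ versus $l<k$ bookkeeping by letting the recursion and the parity of $d$ do the work. Your uniqueness argument (either a non-mb subtree triggers strictness via the induction hypothesis, or both subtrees are maximally balanced and the root split is unbalanced, forcing $d\geq 2$ by parity and hence a strict increment bound) is also sound, since a binary tree with balanced root and maximally balanced subtrees is by definition $T^{mb}_n$.
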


To prove this theorem we need three more lemmas. Recalling that $c_n$ denotes the minimum value of the Colless index on $\BT$, we first show that $c_n \geq 2$ for all $n \geq 4$ that are not powers of two.

\begin{Lem}
\label{Lem:collesGEQ2}
    Let $n \in \N_{\geq 4}$ be such that $n \neq 2^h$ for all $h \in \N$. Then, we have $c_n \geq 2$.
\end{Lem}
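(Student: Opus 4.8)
The plan is to work directly with the explicit formula for $c_n$ provided by Proposition \ref{Prop:Min_Colless}, namely $c_n = \sum_{i=1}^{h_n-1} 2^i \cdot s(2^{-i} n)$. Since $s \geq 0$, every summand is nonnegative, so it suffices to exhibit, for each $n$ under consideration, either a single index $i$ whose summand is at least $2$, or two indices whose summands sum to at least $2$; all remaining summands can simply be discarded. The key observation that drives everything is that $2^{-i} n$ is a half-integer precisely when $n \bmod 2^i = 2^{i-1}$, and in that case $s(2^{-i} n) = \tfrac{1}{2}$, so the corresponding summand equals $2^{i-1}$.

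I would then split into two cases according to the parity of $n$. If $n$ is odd, then $n \geq 5$ (as $n \geq 4$), hence $h_n \geq 3$, so both indices $i = 1$ and $i = 2$ lie in the summation range $\{1,\ldots,h_n-1\}$. Here $2^{-1}n = n/2$ is a half-integer, contributing $1$ at $i=1$; and since $n \bmod 4 \in \{1,3\}$ we get $s(2^{-2}n) = \tfrac14$, contributing $1$ at $i=2$. These two summands already sum to $2$, so $c_n \geq 2$.

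If $n$ is even but not a power of two, I would write $n = 2^a m$ with $m \geq 3$ odd and $a \geq 1$. Then $n \geq 3 \cdot 2^a > 2^{a+1}$, which forces $h_n \geq a+2$, so the index $i = a+1$ lies in the summation range. Because $m$ is odd, $2^{-(a+1)} n = m/2$ is a half-integer, so $s(2^{-(a+1)}n) = \tfrac12$ and the single summand at $i = a+1$ equals $2^a \geq 2$. Hence $c_n \geq 2$ in this case as well, completing the case analysis.

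The main point to be careful about is not the values of the chosen summands, which follow immediately from the half-integer phenomenon, but rather verifying that the chosen indices genuinely fall within $\{1,\ldots,h_n-1\}$. This is exactly where the hypotheses $n \geq 4$ and $n \neq 2^h$ enter: excluding powers of two guarantees the required gap between $n$ and $2^{h_n-1}$, which yields $h_n \geq 3$ in the odd case and $h_n \geq a+2$ in the even case. Once these index bounds are secured, the nonnegativity of the remaining terms finishes the argument.
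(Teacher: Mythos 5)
Your proof is correct. All the arithmetic checks out: for odd $n\geq 5$ we have $h_n\geq 3$, so both $i=1$ and $i=2$ lie in the summation range, contributing $2\cdot\tfrac12=1$ and $4\cdot\tfrac14=1$ respectively; for $n=2^a m$ with $m\geq 3$ odd and $a\geq 1$ we have $n>2^{a+1}$, hence $h_n\geq a+2$, and the single index $i=a+1$ contributes $2^{a+1}\cdot\tfrac12=2^a\geq 2$. The half-integer criterion $n\bmod 2^i=2^{i-1}\Leftrightarrow s(2^{-i}n)=\tfrac12$ is also correct, and nonnegativity of $s$ lets you discard the rest. However, your route is genuinely different from the paper's. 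The paper argues structurally: it takes a Colless-minimizing tree, notes that $c_n=0$ would force $n=2^h$ by the characterization of trees with all vertices perfectly balanced, and then rules out $c_n=1$ by showing that a unique vertex of balance value $1$ would force a chain of parity constraints (the subtree below it must have exactly $3$ leaves, its parent must be perfectly balanced, hence have another size-$3$ subtree containing a second unbalanced vertex) ending in a contradiction. Your argument instead leans entirely on the closed-form expression for $c_n$ from Proposition~\ref{Prop:Min_Colless}, reducing the claim to elementary $2$-adic bookkeeping. What you buy is a short, purely computational verification with no case analysis on tree structure; what the paper's version buys is independence from the (nontrivial, imported) explicit formula, needing only the much weaker fact that $C(T)=0$ implies $T=T^{fb}_h$. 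Since Proposition~\ref{Prop:Min_Colless} is stated in the paper before this lemma, your use of it is legitimate.
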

\begin{proof}
    Let $n$ be as stated in the lemma. Let $T$ be a tree minimizing the Colless index, i.e., $C(T) = c_n$. In particular, this implies $n \geq 5$. Seeking a contradiction, assume $c_n \leq 1$. Note that $c_n = 0$ would imply $n = 2^h$ as only $T^{fb}_h$ can obtain $c_n = 0$ (\citet[Corollary 1]{Coronado2020a}). Thus, we necessarily have $c_n = 1$. This, however, means that we have precisely one vertex $u$ in $T$ with balance value $1$, so its children, say $v$ and $w$, induce subtrees of sizes $n_v$ and $n_w$ with $n_w = n_v +1$. This implies that precisely one of the values $n_v$ and $n_w$, say $n_v$, is odd, and thus $n_u = n_v + n_w$ is odd, too. Now if $n_v > 1$, then $v$ would be an inner vertex with $b_v \geq 1$ (as $\left\lceil\frac{n_v}{2}\right\rceil > \left\lfloor\frac{n_v}{2}\right\rfloor$), a contradiction to $c_n = 1$. So we must have $n_v = 1$. However, as $b_u = 1$, this implies $n_w = 2$. So $n_u = 2+1 = 3$. Thus, $u$ cannot be the root of the tree as $n \geq 5$. So $u$ must have a parent $a$ of balance value $0$ (as $u$ is the only vertex with balance value $1$). Thus, $a$ has two children vertices, both of which induce a subtree of size $3$ -- and thus, as $3$ is odd, a vertex of balance value $1$. This contradiction completes the proof.
\end{proof}

Note that we already know for (strictly) increasing $f$ that $T^{mb}_n$ (uniquely) minimizes the BVM $\Phi_f^{\mathcal{B}}$ on $\BT$ if $n=2^h$ for some $h\in \mathbb{N}$. This is due to the fact that in this case, $T^{mb}_n$ coincides with $T_h^{fb}$ (see Theorem \ref{Theo:bal_meta_imbalance_index}). The following two lemmas addresses the case in which $n$ is not a power of two and shows that under certain conditions, $T^{mb}_n$ is then still the (unique) minimizer of $\Phi_f^{\mathcal{B}}$.

\begin{Lem}
\label{Lem:BAL_convex_Tmb}
    Let $f$ be strictly increasing. Let $n \in \N$ such that $n \neq 2^h$ for all $h \in \N$. If we have for all sequences $b_1, \ldots, b_k$ with $k = C\left(T^{mb}_n\right)$, $b_1 + \ldots + b_k \geq k$, and $b_i > 1$ for some $i \in \{1,\ldots,k\}$ that $k \cdot f(1) \leoq f(b_1) + \ldots + f(b_k)$, then $T^{mb}_n$ (uniquely) minimizes the balance value metaconcept $\Phi^{\mathcal{B}}_f$ on $\BT$.
\end{Lem}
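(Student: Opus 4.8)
The plan is to argue directly via the balance value sequences rather than through Lemma~\ref{Lem:Seq_entries_Min_Max_all_meta}, since the sequences of competing trees need not be comparable entrywise here. First I would pin down the sequence of the minimizer candidate. Writing $k := C\left(T^{mb}_n\right)$, note that every inner vertex of $T^{mb}_n$ is balanced, so each of its balance values is $0$ or $1$; the number of ones is exactly the sum of the balance values, i.e.\ $C\left(T^{mb}_n\right) = c_n = k$ (by Proposition~\ref{Prop:gfb_mb_min_Colless}, the mb-tree attains the minimum Colless value). Since a binary tree on $n$ leaves has $n-1$ inner vertices, this gives $\mathcal{B}\left(T^{mb}_n\right) = (0,\ldots,0,1,\ldots,1)$ with $n-1-k$ zeros followed by $k$ ones, whence $\Phi^{\mathcal{B}}_f\left(T^{mb}_n\right) = (n-1-k)\,f(0) + k\,f(1)$. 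Note also that $k \geq 1$, because $c_n = 0$ forces $n = 2^h$, which is excluded.

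Next, for an arbitrary $T \in \BT$ with sorted sequence $\mathcal{B}(T) = (b_1 \leq \cdots \leq b_{n-1})$, I would compare positionwise and split the difference at the cut between the zeros and ones of the mb-tree:
\[
\Phi^{\mathcal{B}}_f(T) - \Phi^{\mathcal{B}}_f\left(T^{mb}_n\right)
= \sum_{i=1}^{n-1-k}\left(f(b_i)-f(0)\right)
+ \sum_{i=n-k}^{n-1}\left(f(b_i)-f(1)\right).
\]
The first sum is nonnegative since $f$ is increasing and every $b_i \geq 0$. For the second sum it suffices to show that the top $k$ entries $b_{n-k},\ldots,b_{n-1}$ satisfy $f(b_{n-k})+\cdots+f(b_{n-1}) \geoq k\,f(1)$, and this is precisely the conclusion of the hypothesis applied to the length-$k$ sequence $(b_{n-k},\ldots,b_{n-1})$ -- provided I can check the hypothesis's two premises for this subsequence.

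The crux, and the only real obstacle, is verifying that these top $k$ entries sum to at least $k$; I would handle it by a case split on the number $p$ of nonzero balance values of $T$. If $p \geq k$, the top $k$ entries are all among the nonzero values, hence each is $\geq 1$ and their sum is $\geq k$. If $p < k$, the top $k$ entries comprise all $p$ nonzero balance values together with $k-p$ zeros, so their sum equals the full sum $C(T) \geq c_n = k$ (again using that $T^{mb}_n$ minimizes Colless). For the second premise -- that some top entry exceeds $1$ -- I would invoke the uniqueness of the mb-tree: $T^{mb}_n$ is the only binary tree all of whose inner vertices are balanced, so any $T \neq T^{mb}_n$ has an inner vertex with balance value $> 1$; as $k \geq 1$, this maximal value $b_{n-1}$ lies among the top $k$, so the premise holds.

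Finally I would assemble the pieces. For $T = T^{mb}_n$ the difference is $0$, while for every $T \neq T^{mb}_n$ both premises hold, so the hypothesis yields $\sum_{i=n-k}^{n-1} f(b_i) \geoq k\,f(1)$, and combining with the nonnegative first sum gives $\Phi^{\mathcal{B}}_f(T) \geoq \Phi^{\mathcal{B}}_f\left(T^{mb}_n\right)$. Reading $\geoq$ as the non-strict inequality (the non-strict case of the hypothesis) shows that $T^{mb}_n$ minimizes $\Phi^{\mathcal{B}}_f$; reading it as strict (the strict case) shows the minimum is attained uniquely, since then the bound is strict for every $T \neq T^{mb}_n$. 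Everything outside the sum-of-top-$k$ estimate is routine bookkeeping with the monotonicity of $f$ and the minimality of the Colless value, so I expect that estimate to be where the attention is needed.
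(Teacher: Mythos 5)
Your proof is correct and follows essentially the same route as the paper's: both compare $\Phi^{\mathcal{B}}_f(T)$ against the sequence of $k$ ones and $n-1-k$ zeros of $T^{mb}_n$ (with $k=c_n$ by Colless-minimality of the mb-tree), use $C(T)\geq k$ together with the existence of a balance value exceeding $1$ for any $T\neq T^{mb}_n$ to invoke the hypothesis, and split on whether the number of positive balance values is at least $k$. The only organizational difference is that the paper disposes of the case of at least $k$ positive entries directly from the monotonicity of $f$ without the hypothesis, whereas you route both cases through the hypothesis via the top-$k$ entries; both are valid.
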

\begin{proof}
    Let $f$ and $n$ be as stated in the lemma. Note that this implies that the smallest value of $n$ we need to consider is $n=3$ (as $n = 1 = 2^0$ and $n = 2 = 2^1$). In this case, however, there is only one possible binary tree, so there is nothing to show.
    Now, let $n \in \N_{\geq 4}$ be such that $n \neq 2^h$ for all $h \in \N$. In particular, we can assume $n \geq 5$. Let $k = C\left(T^{mb}_n\right)$, which satisfies $k \leq C(T)$ for all $T \in \BT$ by Proposition~\ref{Prop:gfb_mb_min_Colless}. We also know by Lemma \ref{Lem:collesGEQ2} that $k \geq 2$. Furthermore, the balance value sequence of the mb-tree consists of $k$ entries of $1$ and $n-1-k$ entries of 0, since all $n-1$  inner vertices are balanced. Consequently,
    \[\Phi^{\mathcal{B}}_f\left(T^{mb}_n\right) = k \cdot f(1) + (n-1 -k) \cdot f(0).\]
     Now let $T \in \BT\setminus\left\{T^{mb}_n\right\}$ and let $b_1, \ldots, b_l \geq 1$ be the $l \geq 1$ entries of $\mathcal{B}(T)$ that are positive. Then, we have:
    \[\Phi^{\mathcal{B}}_f(T) = \sum\limits_{i = 1}^{l} f(b_i) + (n-1-l) \cdot f(0).\]

    This leads to
    \begin{align}
        \Phi^{\mathcal{B}}_f(T) - \Phi^{\mathcal{B}}_f\left(T^{mb}_n\right) &= \left(\sum\limits_{i = 1}^{l} f(b_i) + (n-1-l) \cdot f(0)\right) - (k \cdot f(1) + (n-1 -k) \cdot f(0)) \nonumber\\
        &= \sum\limits_{i = 1}^{l} f(b_i) - (l-k) \cdot f(0) - k \cdot f(1) \label{Eq:fb_f0_f1}
    \end{align}

    Next, we distinguish two cases. Note that $b_1 + \ldots + b_l = C(T) \geq k$ and at least one $b_i > 1$ (otherwise all inner vertices of $T$ would be balanced, contradicting $T \neq T^{mb}_n$).
    \begin{itemize}
        \item First, let $l \geq k$. In this case, we have $l - k \geq 0$. Then,
            \[\text{Eq. }\eqref{Eq:fb_f0_f1} \stackrel{f \text{ str. incr.}}{\geq} \sum\limits_{i = 1}^{l} f(b_i) -(l-k) \cdot f(1) - k \cdot f(1) = \sum\limits_{i = 1}^{l} f(b_i) - l \cdot f(1) > 0,\]
            where the last inequality follows from the fact that $b_1, \ldots, b_l \geq 1$ with at least one $b_i > 1$ and the fact that $f$ is strictly increasing.
        \item Second, let $l < k$. Then, define $b_i \coloneqq 0$ for $i = l+1, \ldots, k$. In this case, we can conclude that
            \[\text{Eq. }\eqref{Eq:fb_f0_f1} = \left(\left(\sum\limits_{i=1}^k f(b_i)\right) - (k-l) \cdot f(0)\right) - (l-k) \cdot f(0) - k \cdot f(1) = \sum\limits_{i = 1}^{k} f(b_i) - k \cdot f(1) \geoq 0,\]
            where the last inequality follows from the properties of $f$ assumed in this lemma.
    \end{itemize}
    Thus, in both cases, we have $\Phi^{\mathcal{B}}_f(T) - \Phi^{\mathcal{B}}_f\left(T^{mb}_n\right)\geoq 0$ and thus, for strictly increasing $f$ satisfying $k \cdot f(1) \leoq f(b_1) + \ldots + f(b_k)$, the mb-tree is (strictly) more balanced than $T$. This completes the proof.
\end{proof}

Finally, we use the previous lemma to show that a certain family of functions $f$ satisfies the inequality in Lemma \ref{Lem:BAL_convex_Tmb} and thus yields the mb-tree as (unique) minimizer of the BVM $\Phi^{\mathcal{B}}_f$ with these functions $f$, too.

\begin{Lem}
\label{Lem:BAL_kf1_fk_Tmb}
    Let $f$ be a non-negative and strictly increasing function. Let $n \in \N$ such that $n \neq 2^h$ for all $h \in \N$. If we have $b \cdot f(1) \leoq f(b)$ for all $b \in \N_{\geq 2}$, then 
    $T^{mb}_n$ (uniquely) minimizes the balance value metaconcept $\Phi^{\mathcal{B}}_f$ on $\BT$.
\end{Lem}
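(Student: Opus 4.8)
The plan is to derive this lemma directly from Lemma~\ref{Lem:BAL_convex_Tmb}. That lemma has already reduced the (unique) minimality of $T^{mb}_n$ to a purely arithmetic condition on balance-value sequences, so it suffices to show that the present hypothesis ``$b \cdot f(1) \leoq f(b)$ for all $b \in \N_{\geq 2}$'', together with non-negativity and strict monotonicity of $f$, forces the inequality required there: namely, that for every sequence $b_1, \ldots, b_k$ of nonnegative integers with $k = C\left(T^{mb}_n\right)$, $b_1 + \ldots + b_k \geq k$, and $b_i > 1$ for at least one $i$, one has $k \cdot f(1) \leoq f(b_1) + \ldots + f(b_k)$. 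Once this is checked, Lemma~\ref{Lem:BAL_convex_Tmb} immediately yields the claim.

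To verify the condition, I would fix such a sequence and sort its entries into three groups according to whether $b_i = 0$, $b_i = 1$, or $b_i \geq 2$, writing $n_0, n_1, n_2$ for the respective counts and $S_2 \coloneqq \sum_{b_i \geq 2} b_i$. Then $n_0 + n_1 + n_2 = k$, and since at least one entry exceeds $1$ we have $n_2 \geq 1$. The constraint $b_1 + \ldots + b_k \geq k$ reads $n_1 + S_2 \geq k$, which rearranges to the key bound $S_2 \geq n_0 + n_2$. I would also record that non-negativity gives $f(0) \geq 0$ and strict monotonicity gives $f(1) > f(0) \geq 0$, so in particular $f(1) > 0$ (this positivity is precisely what lets me multiply the bound on $S_2$ through by $f(1)$).

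The heart of the argument is then a single chain of estimates: applying the hypothesis $f(b_i) \geoq b_i \cdot f(1)$ to each entry with $b_i \geq 2$, using $S_2 \geq n_0 + n_2$ and $n_0 + n_1 + n_2 = k$, and finally discarding the non-negative term $n_0 f(0)$, one obtains
\[
\sum_{i=1}^{k} f(b_i) = n_0 f(0) + n_1 f(1) + \sum_{b_i \geq 2} f(b_i) \geoq n_0 f(0) + n_1 f(1) + S_2 \cdot f(1) \geq n_0 f(0) + k \cdot f(1) \geq k \cdot f(1).
\]
For the uniqueness statement, the strict form $b \cdot f(1) < f(b)$ applied to the (at least one) entry with $b_i \geq 2$ makes the first inequality strict, giving $\sum_i f(b_i) > k \cdot f(1)$.

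I do not expect a genuine obstacle here, since all the real structural work has been front-loaded into Lemma~\ref{Lem:BAL_convex_Tmb}; what remains is careful bookkeeping of the three entry-types. The only points requiring attention are converting the sum constraint into the bound $S_2 \geq n_0 + n_2$, ensuring $f(1) > 0$ so that this bound survives multiplication by $f(1)$ (which is exactly where non-negativity of $f$ enters), and tracking where $n_2 \geq 1$ is invoked to obtain the strict inequality in the unique-minimizer case.
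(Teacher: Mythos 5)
Your proposal is correct and follows essentially the same route as the paper: both reduce to verifying the hypothesis of Lemma~\ref{Lem:BAL_convex_Tmb} by partitioning the sequence entries into those equal to $0$, equal to $1$, and at least $2$ (your $n_0,n_1,n_2$ and $S_2$ are the paper's $l_0,l_1,l_2$ and $\sum_{i=1}^{l_2} b_i$), and then running the identical chain of inequalities, discarding the non-negative $f(0)$ terms and applying $b\cdot f(1)\leoq f(b)$ to the large entries. The bookkeeping, including where strictness enters for uniqueness, matches the paper's argument.
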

\begin{proof}
    If $n \leq 3$, there is only one binary tree and thus there is nothing to show. Now, let $n \geq 4$. Since we have $n \neq 2^h$ for all $h \in \N$, we have $n \geq 5$. For this case, we show that the condition of Lemma \ref{Lem:BAL_convex_Tmb} holds for $f$. Let $k = C\left(T^{mb}_n\right)$ and $B = (b_1, \ldots, b_k)$ be any sequence with $b_1, \ldots, b_k \in \N$ such that $b_1 + \ldots + b_k \geq k$ and $b_i > 1$ for at least one $i \in \{1, \ldots, k\}$. Note that by Lemma \ref{Lem:collesGEQ2}, we have $k \geq 2$. Our goal is to show that $k \cdot f(1) \leoq f(b_1) + \ldots + f(b_k)$.

    Let $f$ be as stated in the lemma, in particular, assume $b \cdot f(1) \leoq f(b)$ for all $b \in \N_{\geq 2}$. Moreover, let $b_1, \ldots, b_{l_2} > 1$ be the $l_2 \geq 1$ entries of $B$ that are strictly greater than $1$. Additionally, let $l_0 \geq 0$ denote the number of entries of $B$ that are equal to $0$, and let $l_1 \geq 0$ be the number of entries equal to $1$. By assumption, $b_1 + \ldots + b_k \geq k$, and therefore, summing the non-zero values of $B$, we must have
    \[b_1 + \ldots + b_{l_2} + l_1 \geq k.\]
    Thus, we can derive the following inequality:
    \[\sum\limits_{i = 1}^{k} f(b_i) = \sum\limits_{i = 1}^{l_2} f(b_i) + l_0 \cdot \underbrace{f(0)}_{\geq 0} + l_1 \cdot f(1) \geq \sum\limits_{i = 1}^{l_2} f(b_i) + l_1 \cdot f(1) \geoq \sum\limits_{i = 1}^{l_2} b_i \cdot f(1) + l_1 \cdot f(1) \geq k \cdot f(1).\]

   Therefore, by Lemma \ref{Lem:BAL_convex_Tmb}, the mb-tree (uniquely) minimizes the BVM for all non-negative and strictly increasing functions $f$ satisfying $b \cdot f(1) \leoq f(b)$. This completes the proof.
\end{proof}

Now, we are in a position to prove Theorem \ref{Theo:mb_min_bal_meta_convex} using Lemma \ref{Lem:BAL_kf1_fk_Tmb}.

\begin{proof}[Proof of Theorem \ref{Theo:mb_min_bal_meta_convex}]
    In order to prove the theorem, first note that if $n = 2^h$ for some $h \in \N$, then $T^{mb}_n = T^{fb}_{h}$ and hence, by Theorem \ref{Theo:bal_meta_imbalance_index}, the mb-tree is the unique minimizer of the BVM $\Phi^{\mathcal{B}}_f$ for all strictly increasing functions $f$, and in particular for the function chosen in this theorem. 
    
    Hence, we can assume $n \neq 2^h$ for all $h \in \N$ and start by proving that if $f$ is strictly increasing, convex and satisfies $f(1) - f(0) < f(2) - f(1)$, then the mb-tree uniquely minimizes the BVM.  For the proof, we want to apply Lemma \ref{Lem:BAL_kf1_fk_Tmb}. Therefore, we need to show that we can assume $f$ to be non-negative. We even show that we can assume $f(0) = 0$. We can obtain this assumption from the equivalence of the BVM with function $f$ to the BVM with function $\widehat{f}(x) \coloneqq f(x) - f(0)$. Note that $\widehat{f}$ retains the properties of being strictly increasing, convex, and satisfying $\widehat{f}(1) - \widehat{f}(0) < \widehat{f}(2) - \widehat{f}(1)$. This implies that $f$ and $\widehat{f}$ have the same extremal properties, which is why we can without loss of generality assume $f \equiv \widehat{f}$ in the following; in particular, we may assume $f(0) = 0$.

    Using $f(0) = 0$, we observe that
    \[f(2) - f(1) > f(1) - f(0) = f(1).\]
    This means that the increment from $f(1)$ to $f(2)$ is greater than $f(1)$. By the convexity of $f$, it follows (recursively) that the increments from $f(x-1)$ to $f(x)$ for all $x \geq 2$ do not decrease, i.e., \[f(x) - f(x-1) \geq f(2) - f(1) > f(1).\]
    Thus, we obtain for all $b \geq 2$
    \[f(b) = f(1) + \sum\limits_{i = 2}^{b} \underbrace{f(i) - f(i-1)}_{> f(1)} > b \cdot f(1).\]

    Now, all requirements of Lemma \ref{Lem:BAL_kf1_fk_Tmb} are satisfied and we can conclude that the mb-tree uniquely minimizes the BVM for such functions $f$.

    For proving that the mb-tree (not necessarily uniquely) minimizes the BVM if $f$ only satisfies $f(1) - f(0) \leq f(2) - f(1)$ (which is satisfied for convex functions), then the strict inequalities of the proof above may no longer be strict. However, applying Lemma \ref{Lem:BAL_kf1_fk_Tmb} again, we conclude that the mb-tree still minimizes the BVM. This completes the proof.
\end{proof}

Next, based on the results of this subsection, we determine the extremal values of the balance value metaconcept.

\paragraph{Extremal values}\leavevmode\\

Building on results from the last section, we now determine the minimum and maximum values of the BVM. We first state the maximum value for all $n$ and the minimum value for $n = 2^h$ if $f$ is an increasing function. 

\begin{Prop}\label{Prop:Bal_Max_Min_val}
    Let $f$ be an increasing function. Then, the maximum value of the balance value metaconcept $\Phi^{\mathcal{B}}_f$ on $\BT$ is given by $\sum\limits_{i = 0}^{n-2} f(i)$. Furthermore, if $n = 2^h$, the minimum value of $\Phi^{\mathcal{B}}_f$ on $\BT$ is $(n-1) \cdot f(0)$.
\end{Prop}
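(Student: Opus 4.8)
The plan is to exploit the extremal-tree characterizations already established in this subsection, together with the explicit structure of the relevant balance value sequences. The maximum value follows directly from Theorem~\ref{Theo:bal_meta_imbalance_index}(1): the caterpillar $T^{cat}_n$ maximizes $\Phi^{\mathcal{B}}_f$ on $\BT$ for any increasing $f$. Since I already noted in the proof of that theorem that $\mathcal{B}\left(T^{cat}_n\right) = (0,1,\ldots,n-2)$, evaluating the metaconcept on the caterpillar immediately gives
\[
\Phi^{\mathcal{B}}_f\left(T^{cat}_n\right) = \sum_{b \in \mathcal{B}\left(T^{cat}_n\right)} f(b) = \sum_{i=0}^{n-2} f(i),
\]
which is the claimed maximum. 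For the edge case $n=1$ the sum is empty and equals zero, consistent with the convention that a sum over the empty set is zero, matching the fact that the single-vertex tree has no inner vertices.

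For the minimum value when $n=2^h$, I would similarly invoke Theorem~\ref{Theo:bal_meta_imbalance_index}(2): the fully balanced tree $T^{fb}_h$ minimizes $\Phi^{\mathcal{B}}_f$ on $\BT$ in this case. The key structural fact, recalled earlier from \cite[Corollary 1]{Coronado2020a}, is that $\mathcal{B}(T) = (0,\ldots,0)$ if and only if $T = T^{fb}_h$. Hence the balance value sequence of $T^{fb}_h$ consists entirely of zeros, and since a binary tree with $n$ leaves has exactly $n-1$ inner vertices, there are $n-1$ such entries. Therefore
\[
\Phi^{\mathcal{B}}_f\left(T^{fb}_h\right) = \sum_{i=1}^{n-1} f(0) = (n-1)\cdot f(0),
\]
which is the asserted minimum value.

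There is essentially no main obstacle here: the proposition is a direct computational consequence of results already proven. The only points requiring mild care are purely bookkeeping. First, one must confirm that the extremal trees identified in Theorem~\ref{Theo:bal_meta_imbalance_index} are guaranteed to be extremal for merely \emph{increasing} (not strictly increasing) $f$ as well; inspecting that theorem, the non-strict versions of the inequalities hold precisely for increasing functions, so the caterpillar attains the maximum and the fb-tree attains the minimum even without strictness, which is exactly what is needed since uniqueness is not claimed in this proposition. Second, one verifies the count of $n-1$ inner vertices for binary trees, which is standard. Assembling these observations completes the proof, and I would close by noting that the minimum value for general $n$ (not a power of two) is deferred, since in that regime the minimizing tree depends on $f$ and the sequence is no longer all-zero.
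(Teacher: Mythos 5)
Your proposal is correct and follows essentially the same route as the paper: both invoke Theorem~\ref{Theo:bal_meta_imbalance_index} to identify the caterpillar and the fb-tree as extremal, then evaluate $\Phi^{\mathcal{B}}_f$ using $\mathcal{B}\left(T^{cat}_n\right) = (0,\ldots,n-2)$, $\mathcal{B}\left(T^{fb}_h\right) = (0,\ldots,0)$, and the fact that a binary tree with $n$ leaves has $n-1$ inner vertices. Your extra care about the non-strict (merely increasing) case is a valid reading of the parenthetical formulation of that theorem and matches the paper's intent.
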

\begin{proof}
    By Theorem \ref{Theo:bal_meta_imbalance_index}, the caterpillar attains the maximum value, while the fb-tree attains the minimum value. The formulas for the maximum and minimum values now directly follow from the facts that $\mathcal{B}(T^{cat}_n) = (0, \ldots, n-2)$ and $\mathcal{B}\left(T^{fb}_h\right) = (0, \ldots, 0)$ as well as $|\mathcal{B}(T)| = n-1$ for all binary trees with $n$ leaves.
\end{proof}

Finally, we determine the minimum value of the BVM for all $n$ when $f$ is not only strictly increasing but also either affine or convex. Recall that $c_n$ (given in Proposition \ref{Prop:Min_Colless}) denotes the minimum value of the Colless index on $\BT$.

\begin{Prop}
    For any $n \geq 1$, let $c_n$ be the minimum value of the Colless index on $\BT$.
    \begin{enumerate}
        \item If $f$ is strictly increasing and convex, the minimum value of the balance value metaconcept $\Phi^{\mathcal{B}}_f$ on $\BT$ for all $n$ is
        \[c_n \cdot f(1) + (n-1-c_n) \cdot f(0).\]
        \item If $f$ is strictly increasing and affine, i.e., $f(x) = m \cdot x+a$, the minimum value of $\Phi^{\mathcal{B}}_f$ on $\BT$ for all $n$ is
        \[m \cdot c_n + (n-1) \cdot a.\]
    \end{enumerate}
\end{Prop}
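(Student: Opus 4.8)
The plan is to reduce both parts to evaluating $\Phi^{\mathcal{B}}_f$ at a tree whose minimality has already been established, so that the only remaining work is a direct computation of the relevant balance value sequence. For part~1, Theorem~\ref{Theo:mb_min_bal_meta_convex} guarantees that the mb-tree $T^{mb}_n$ minimizes $\Phi^{\mathcal{B}}_f$ on $\BT$ whenever $f$ is strictly increasing and convex; for part~2, Remark~\ref{Rem:bal_meta_eqiuv_Colless} already expresses $\Phi^{\mathcal{B}}_f$ in closed form in terms of the Colless index. Thus in neither case do I expect to prove a new optimality statement — the substance is bookkeeping.

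For part~1, I would first invoke Theorem~\ref{Theo:mb_min_bal_meta_convex} to conclude that the minimum of $\Phi^{\mathcal{B}}_f$ on $\BT$ equals $\Phi^{\mathcal{B}}_f(T^{mb}_n)$. Next I would describe the balance value sequence of the mb-tree: since every inner vertex of $T^{mb}_n$ is balanced, each balance value is either $0$ or $1$, so $\mathcal{B}(T^{mb}_n)$ consists solely of zeros and ones. The number of ones equals $\sum_{v \in \mathring{V}(T^{mb}_n)} b_v = C(T^{mb}_n)$, and by Proposition~\ref{Prop:gfb_mb_min_Colless} this Colless value equals the minimum $c_n$. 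Hence $\mathcal{B}(T^{mb}_n)$ has exactly $c_n$ entries equal to $1$ and, since a binary tree on $n$ leaves has $n-1$ inner vertices, exactly $n-1-c_n$ entries equal to $0$. Summing $f$ over these entries yields the claimed value $c_n \cdot f(1) + (n-1-c_n)\cdot f(0)$.

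For part~2, I would simply substitute into the identity from Remark~\ref{Rem:bal_meta_eqiuv_Colless}, namely $\Phi^{\mathcal{B}}_f(T) = m \cdot C(T) + (n-1)\cdot a$. Because $m>0$, this is a strictly increasing function of $C(T)$, so it attains its minimum over $\BT$ exactly at the Colless minimizers, where $C(T)=c_n$, giving $m\cdot c_n + (n-1)\cdot a$. I do not anticipate a genuine obstacle here; the only point requiring care is the count of balance-value-$1$ vertices in the mb-tree in part~1, and a useful internal consistency check is that an affine function is in particular convex, so substituting $f(1)=m+a$ and $f(0)=a$ into the part~1 formula must reproduce the part~2 formula — which it does.
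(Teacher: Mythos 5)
Your proposal is correct and follows essentially the same route as the paper: part~1 combines Theorem~\ref{Theo:mb_min_bal_meta_convex} with Proposition~\ref{Prop:gfb_mb_min_Colless} and the observation that $\mathcal{B}(T^{mb}_n)$ has exactly $c_n$ ones and $n-1-c_n$ zeros, and part~2 reads off the value from the affine identity in Remark~\ref{Rem:bal_meta_eqiuv_Colless}. The consistency check relating the two formulas is a nice addition but not needed.
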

\begin{proof}\leavevmode
    \begin{enumerate}
        \item Let $f$ be a strictly increasing and convex function. From Proposition \ref{Prop:gfb_mb_min_Colless} and Theorem \ref{Theo:mb_min_bal_meta_convex}, we know that the mb-tree minimizes both the Colless index and the BVM. In particular, $C\left(T^{mb}_n\right) = c_n$.

            By definition, the balance value sequence of the mb-tree consists only of ones and zeros, it follows that there are exactly $c_n$ ones and $n-1-c_n$ zeros. Consequently, the minimum value of the BVM is given by
            \[\Phi^{\mathcal{B}}_f\left(T^{mb}_n\right) = c_n \cdot f(1) + (n-1-c_n) \cdot f(0).\]

        \item Now, let $f$ be a strictly increasing and affine function. The correctness of the formula for the minimum value of $\Phi^{\mathcal{B}}_f$ follows directly from the equivalence between the BVM and the Colless index, as stated in Remark \ref{Rem:bal_meta_eqiuv_Colless}.
    \end{enumerate}
    This completes the proof.
\end{proof}

Next, we analyze the metaconcepts generalizing the Sackin index. We begin with the clade size metaconcept.

\subsubsection{Clade size metaconcept \texorpdfstring{$\Phi^{\mathcal{N}}_f$}{CSM}}
\label{Sec:CSM}

In the following, we prove that the clade size metaconcept (CSM) is an imbalance index on $\BT$ (or $\T$) for all strictly increasing (and $2$-positive) functions $f$ that are either affine (i.e., $f(x) = m \cdot x + a$ with $m > 0$ (and $a \geq 0$)) or strictly convex. Recall that by Proposition \ref{Prop:gfb_min_clade_meta_concave_imb_index}, the CSM is an imbalance index on $\BT$ if $f$ is strictly increasing and strictly concave. Here, we will show that the CSM is an imbalance index on $\T$ if $f$ is additionally $2$-positive. In the next step, we calculate the minimum and maximum values of the CSM. \medskip

We first highlight the relationship between the CSM $\Phi^{\mathcal{N}}_f$ and several established tree imbalance indices related to clade sizes, such as the Sackin index, the average leaf depth, the $\widehat{s}$-shape statistic, and the total cophenetic index. We focus on the specific properties of the functions $f$ that induce each of these indices.

\begin{Rem}
\label{Rem:clade_meta_eqiuv_Sackin}
    Let $f$ be an affine function, i.e., $f(n_v) = m \cdot n_v + a$, and let $T \in \T$. Then, we have
    \[\Phi^{\mathcal{N}}_f(T) = \sum\limits_{n_v \in \mathcal{N}(T)} f(n_v) = \sum\limits_{n_v \in \mathcal{N}(T)} (m \cdot n_v + a) = \left(m \cdot \sum\limits_{n_v \in \mathcal{N}(T)} n_v\right) + |\mathcal{N}(T)| \cdot a = m \cdot S(T) + |\mathcal{N}(T)| \cdot a.\]
    It follows immediately that if $m > 0$, the CSM is equivalent to the Sackin index on $\BT$, since in this case, $|\mathcal{N}(T)| = n-1$ for each tree $T \in \BT$ and therefore, $|\mathcal{N}(T)| \cdot a$ is a constant. Additionally, the CSM is equivalent to the Sackin index on $\T$ if $m > 0$ and $a = 0$.

    However, if $m > 0$ and $a > 0$, the CSM is not equivalent to the Sackin index on $\T$. For example, consider the trees $T_1$ and $T_2$ shown in Figure \ref{Fig:clade_meta_notequiv_Sacking_T}. The Sackin indices are $S(T_1) = 23$ and $S(T_2) = 26$, meaning $S(T_1) < S(T_2)$. However, for the function $f(n_v) = n_v + 2$, we find that $\Phi^{\mathcal{N}}_f(T_1) = 35$ and $\Phi^{\mathcal{N}}_f(T_2) = 34$, so $\Phi^{\mathcal{N}}_f(T_1) > \Phi^{\mathcal{N}}_f(T_2)$. This confirms that the CSM and the Sackin index are not equivalent on $\T$.

    By Remark \ref{Rem:only_first_order}, we know that the average leaf depth is equivalent to the Sackin index. Additionally, the total cophenetic index is induced by the CSM, because on $\T$ it can also be expressed as
    \[\Phi(T) = \sum\limits_{v \in \mathring{V(T)} \setminus \{\rho\}} \binom{n_v}{2} = \left(\sum\limits_{v \in \mathring{V}(T)} \binom{n_v}{2}\right) - \binom{n}{2} = \sum\limits_{n_v \in \mathcal{N}(T)} \left(\binom{n_v}{2} - \frac{\binom{n}{2}}{|\mathring{V}(T)|}\right) = \Phi^{\mathcal{N}}_{f_{\Phi}}(T),\]
    where $f_{\Phi}\left(n_v,n,|\mathring{V}(T)|\right) = \binom{n_v}{2} - \frac{\binom{n}{2}}{|\mathring{V}(T)|}$. Hence, on $\T$ the total cophenetic index is induced by the third-order CSM, whereas on $\BT$ it is induced by the second-order CSM, because then $\mathring{V}(T) = n-1$.

    Thus, by Remark \ref{Rem:only_first_order}, on $\BT$ the total cophenetic index is equivalent to the first-order CSM with the function $f_{\widetilde{\Phi}}(n_v) = \binom{n_v}{2}$. Note that \citet{Knuver2024} introduced a function $\Phi^{\ast\ast}$ to measure network balance. When restricted to trees, this function coincides with $\Phi^{\mathcal{N}}_{f_{\widetilde{\Phi}}}$ (for further details, see \cite[page 95]{Knuver2024}.

    Finally, note that the functions $f$ that induce the Sackin index (i.e., the identity function), the $\widehat{s}$-shape statistic (i.e., $f_{\widehat{s}}(n_v) = \log(n_v -1)$), and $f_{\widetilde{\Phi}}$ are all strictly increasing. Moreover, the function for the Sackin index is affine, while the function for the $\widehat{s}$-shape statistic is strictly concave but not $2$-positive, since $f_{\widehat{s}}(2) = 0$. In contrast, $f_{\widetilde{\Phi}}$ is strictly convex.
\end{Rem}

\begin{figure}[htbp]
    \centering
    \includegraphics[scale=1.3]{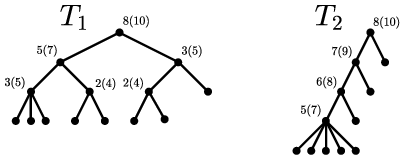}
    \caption{$T_1$ and $T_2$ are ranked differently by the Sackin index and the CSM with $f(n_v) = n_v + 2$. The inner vertices are labeled with their clade size $n_v$ followed by $f(n_v) = n_v + 2$ in brackets.}
    \label{Fig:clade_meta_notequiv_Sacking_T}
\end{figure}

With this in mind, we now turn our attention to the extremal trees of the CSM.

\paragraph{Extremal trees}\leavevmode\\

We begin our analysis of the extremal trees of the CSM by determining its maximum on $\BT$ and $\T$. By Proposition \ref{Prop:cat_max_clade_meta_str_incr}, we already know that the caterpillar uniquely attains the maximum for all strictly increasing functions $f$ on $\BT$. However, we now extend this result.

\begin{Prop}
\label{Prop:cat_max_clade_meta_BTandT}
    The caterpillar $T^{cat}_n$ (uniquely) maximizes the clade size metaconcept $\Phi^{\mathcal{N}}_{f}$ on $\BT$, provided that $f$ is a (strictly) increasing function. Moreover, the caterpillar (uniquely) maximizes the clade size metaconcept on $\T$, if $f$ is (strictly) increasing and $2$-positive, i.e., $f(x) > 0$ for $x \geq 2$.
\end{Prop}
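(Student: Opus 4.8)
The plan is to bootstrap from the binary case, which is already settled by Proposition~\ref{Prop:cat_max_clade_meta_str_incr}, and to treat the two new features separately: upgrading from strictly increasing to merely increasing $f$ on $\BT$, and lifting the result from $\BT$ to $\T$, where the real difficulty lies.

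On $\BT$ all clade size sequences have the common length $n-1$, so Lemma~\ref{Lem:Seq_entries_Min_Max_all_meta} applies verbatim. Since Proposition~\ref{Prop:cat_max_clade_meta_str_incr} says $T^{cat}_n$ maximizes $\Phi^{\mathcal{N}}_f$ for every strictly increasing $f$, the forward direction of Lemma~\ref{Lem:Seq_entries_Min_Max_all_meta} (maximization form) gives the pointwise domination $\mathcal{N}(T)_i \leq \mathcal{N}(T^{cat}_n)_i$ for all $i \in \{1,\dots,n-1\}$ and all $T \in \BT$. Feeding this back into the converse direction of the same lemma yields that $T^{cat}_n$ maximizes $\Phi^{\mathcal{N}}_f$ for every (not necessarily strictly) increasing $f$, which is exactly the non-unique statement on $\BT$.

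For $\T$ the obstacle is that a non-binary tree has strictly fewer than $n-1$ inner vertices, so the sequences no longer share a length and Lemma~\ref{Lem:Seq_entries_Min_Max_all_meta} cannot be invoked directly; this is where $2$-positivity enters. I would argue by refinement: given $T \in \T$ that is not binary, resolve its multifurcations to obtain a binary tree $T'$ on the same leaf set. Resolving a polytomy at a vertex $v$ leaves the leaf set below every pre-existing vertex untouched, so every clade size already present in $\mathcal{N}(T)$ reappears in $\mathcal{N}(T')$; the only change is the insertion of new inner vertices, each the parent of at least two pending subtrees and hence of clade size at least $2$. Under the hypothesis $f(x) > 0$ for all $x \geq 2$, every new summand $f(n_w)$ is strictly positive, so $\Phi^{\mathcal{N}}_f(T') > \Phi^{\mathcal{N}}_f(T)$ whenever $T$ is non-binary. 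Note that this strict gain rests on $2$-positivity alone and therefore persists even when $f$ is only increasing.

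Combining the pieces finishes the proof. If $f$ is strictly increasing and $2$-positive, then any non-binary $T$ satisfies $\Phi^{\mathcal{N}}_f(T) < \Phi^{\mathcal{N}}_f(T') \leq \Phi^{\mathcal{N}}_f(T^{cat}_n)$ by the refinement step and the binary result, while any binary $T \neq T^{cat}_n$ satisfies $\Phi^{\mathcal{N}}_f(T) < \Phi^{\mathcal{N}}_f(T^{cat}_n)$ by Proposition~\ref{Prop:cat_max_clade_meta_str_incr}; hence $T^{cat}_n$ is the unique maximizer on $\T$. If $f$ is merely increasing and $2$-positive, non-binary trees are still strictly dominated (via $2$-positivity), and binary trees satisfy $\Phi^{\mathcal{N}}_f(T) \leq \Phi^{\mathcal{N}}_f(T^{cat}_n)$ by the second paragraph, so $T^{cat}_n$ is a maximizer, possibly non-unique. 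The cases $n \in \{1,2\}$ are trivial since only one tree exists. The main obstacle throughout is precisely the mismatch in sequence lengths on $\T$; the refinement-plus-$2$-positivity device is what repairs it and also clarifies why $2$-positivity, rather than mere monotonicity, is the correct hypothesis for the arbitrary-tree case.
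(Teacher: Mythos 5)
Your proposal is correct and takes essentially the same route as the paper: the binary case is settled by combining Proposition~\ref{Prop:cat_max_clade_meta_str_incr} with Lemma~\ref{Lem:Seq_entries_Min_Max_all_meta}, and the arbitrary case is handled by resolving multifurcations into a binary tree and invoking $2$-positivity for the newly inserted inner vertices, each of which has clade size at least $2$. The only (cosmetic) difference is that you compare $T$ directly with its binary resolution $T'$ via $\Phi^{\mathcal{N}}_f(T) < \Phi^{\mathcal{N}}_f(T') \leq \Phi^{\mathcal{N}}_f\left(T^{cat}_n\right)$, whereas the paper first derives an entrywise domination of the descending clade size sequences and then applies $2$-positivity to the tail of the caterpillar's sum; both assemble the same inequalities from the same ingredients.
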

\begin{proof}
    For the proof, we first show that for all trees $T \in \T$, the inequality
    \[\mathcal{N}^d(T)_i \leq \mathcal{N}^d\left(T^{cat}_n\right)_i \text{ for all } i \in \{1,2,\ldots,|\mathring{V}(T)|\}\]
    holds, where $\mathcal{N}^d$ denotes the clade size sequences in descending order. The main results then follow directly from this statement.

    We begin by considering the case where $T$ is not a binary tree, i.e., $T \in \T \setminus \BT$. Our goal is to transform $T$ into a binary tree while ensuring that the clade sizes of its vertices do not decrease. Since $T$ is not binary, it must contain an inner vertex with at least three children. Let $v$ be such a vertex, and denote its children by $v_1, \ldots, v_k$ with $k \geq 3$.

    We construct a new tree $T'$ from $T$ as follows: Delete the edges $(v,v_i)$ for all $2 \leq i \leq k$, introduce a new vertex $w$, and add the edges $(v,w)$ and $(w,v_i)$ for all $2 \leq i \leq k$. For an illustration, see Figure \ref{Fig:caterpillar_T_to_binary}. Note that repeating this process eventually yields a binary tree.

    Next, we compare the clade size sequences of $T$ and $T'$. Since all original vertices of $T$ remain in $T'$ with their clade sizes unchanged, the only difference is the introduction of the new vertex $w$, which contributes an additional value to the clade size sequence of $T'$. In particular, the transformation does not decrease any clade sizes.

    Thus, it suffices to establish the claim for binary trees. However, this follows directly from Proposition \ref{Prop:cat_max_clade_meta_str_incr} and Lemma \ref{Lem:Seq_entries_Min_Max_all_meta}. Hence, we have $\mathcal{N}^d(T)_i \leq \mathcal{N}^d\left(T^{cat}_n\right)_i$ for all $i \in \{1,2,\ldots,|\mathring{V}(T)|\}$ and for all $T \in \T$. In addition, by the same two results, we get $\mathcal{N}(T)_i < \mathcal{N}\left(T^{cat}_n\right)_i$ for at least one $i \in \{1,2,\ldots,n-1\}$ and for all $T \in \BT \setminus \{T^{cat}_n\}$. Thus, again by Lemma \ref{Lem:Seq_entries_Min_Max_all_meta}, the caterpillar (uniquely) maximizes $\Phi^{\mathcal{N}}_f$ on $\BT$ if $f$ is (strictly) increasing.
   
   To complete the proof, we now show that for any $T \in \T \setminus \BT$ with $T \neq T^{cat}_n$, we have $\Phi^{\mathcal{N}}_f(T^{cat}_n) > \Phi^{\mathcal{N}}_f(T)$ whenever $f$ is increasing and $2$-positive. Let $m = |\mathring{V}(T)| < n-1$. Then, we compute
    \[\Phi^{\mathcal{N}}_{f}\left(T^{cat}_n\right) = \sum\limits_{i = 1}^{m} f\left(\mathcal{N}^d\left(T^{cat}_n\right)_i\right) + \underbrace{\sum\limits_{i = m+1}^{n-1} f\left(\mathcal{N}^d\left(T^{cat}_n\right)_i\right)}_{>0, \text{ since $f$ $2$-pos.}} > \sum\limits_{i = 1}^{m} f\left(\mathcal{N}^d\left(T^{cat}_n\right)_i\right) \stackrel{f \text{ incr.}}\geq \sum\limits_{i = 1}^{m} f\left(\mathcal{N}^d\left(T\right)_i\right) = \Phi^{\mathcal{N}}_{f}\left(T\right).\]
    Thus, the caterpillar (uniquely) maximizes $\Phi^{\mathcal{N}}_f$ on $\T$ whenever $f$ is (strictly) increasing and $2$-positive, thereby completing the proof.
\end{proof}

\begin{figure}[htbp]
    \centering
    \includegraphics[scale=1.5]{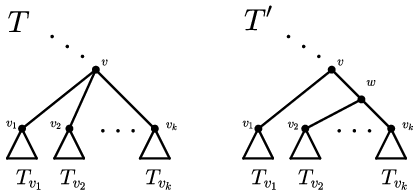}
    \caption{Trees $T$ and $T'$ as described in the first part of the proof of Proposition \ref{Prop:cat_max_clade_meta_BTandT}.}
    \label{Fig:caterpillar_T_to_binary}
\end{figure}

Next, we use this result to conclude that the CSM is an imbalance index for a certain family of functions $f$.

\begin{Cor}
\label{Cor:clade_meta_concave_onT}
    Let $f$ be a $2$-positive function, i.e., $f(x) > 0$ if $x \geq 2$, that is also strictly increasing and strictly concave. Under these conditions, the clade size metaconcept $\Phi^{\mathcal{N}}_f$ is an imbalance index.
\end{Cor}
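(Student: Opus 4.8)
The plan is to read off the two defining properties of an imbalance index on the domain $\T$ directly from the two extremal results already at our disposal, taking care to match each property with the hypothesis that guarantees it. Recall that $\Phi^{\mathcal{N}}_f$ qualifies as an imbalance index precisely when (i) the caterpillar $T^{cat}_n$ is the unique maximizer of $\Phi^{\mathcal{N}}_f$ over all of $\T$ for every $n \geq 1$, and (ii) the fully balanced tree $T^{fb}_h$ is the unique minimizer of $\Phi^{\mathcal{N}}_f$ over $\BT$ for every $n = 2^h$. Crucially, maximization must be verified over the full arbitrary-tree domain $\T$, whereas minimization is only required over the binary trees $\BT$; this asymmetry is what dictates which of the earlier propositions is invoked for which property.

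For the maximization property (i), I would invoke Proposition \ref{Prop:cat_max_clade_meta_BTandT}, whose hypotheses---$f$ strictly increasing and $2$-positive---are exactly the assumptions of the corollary. This immediately yields that the caterpillar uniquely maximizes $\Phi^{\mathcal{N}}_f$ on $\T$. It is precisely here that $2$-positivity is indispensable: when passing from a non-binary tree to a binary refinement, one adds further inner vertices, and $2$-positivity ensures that these contribute strictly positive summands, thereby excluding any non-binary maximizer. For the minimization property (ii), I would appeal to Proposition \ref{Prop:gfb_min_clade_meta_concave_imb_index}, whose hypotheses---$f$ strictly increasing and strictly concave---are again supplied by the corollary. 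That proposition gives the gfb-tree $T^{gfb}_n$ as the unique minimizer on $\BT$; since $T^{gfb}_{2^h} = T^{fb}_h$, the fb-tree is the unique minimizer on $\BT$ whenever $n = 2^h$, which is exactly condition (ii).

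Combining these two observations shows that $\Phi^{\mathcal{N}}_f$ satisfies both defining properties and is hence an imbalance index. I do not expect a substantive obstacle, as the corollary is in essence a bookkeeping synthesis of Propositions \ref{Prop:cat_max_clade_meta_BTandT} and \ref{Prop:gfb_min_clade_meta_concave_imb_index}; the only point requiring genuine care is to confirm that the strict-concavity hypothesis used for minimization and the $2$-positivity hypothesis used for maximization are both present among the assumptions and that, together, they cover the full pair of extremal statements over the respective domains.
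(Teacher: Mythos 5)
Your proposal is correct and matches the paper's own argument, which likewise derives the corollary directly from Proposition \ref{Prop:cat_max_clade_meta_BTandT} (for unique maximization by the caterpillar on $\T$, using $2$-positivity) and Proposition \ref{Prop:gfb_min_clade_meta_concave_imb_index} (for unique minimization by the gfb-tree on $\BT$, using strict concavity). The paper states this in one line; your elaboration of which hypothesis serves which extremal property is accurate and consistent with it.
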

\begin{proof}
    The proof is a direct consequence of Proposition \ref{Prop:gfb_min_clade_meta_concave_imb_index} and Proposition \ref{Prop:cat_max_clade_meta_BTandT}.
\end{proof}

By Proposition \ref{Prop:gfb_min_clade_meta_concave_imb_index}, we already know that the gfb-tree uniquely minimizes the CSM on $\BT$ if $f$ is strictly increasing and strictly concave. Next, we will show that the mb-tree uniquely minimizes the CSM on $\BT$ if $f$ is strictly increasing and strictly convex.

\begin{Theo}
\label{Theo:mb_clade_meta_convex}
    Let $f$ be a strictly increasing and strictly convex function. Then, $T^{mb}_n$ uniquely minimizes the clade size metaconcept $\Phi^{\mathcal{N}}_f$ on $\BT$.
\end{Theo}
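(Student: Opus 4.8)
The plan is to exploit the recursive decomposition of the clade size sequence together with a convexity argument on the value function. By Remark~\ref{Rem:recursiveness_sequences}, if $T \in \BT$ has standard decomposition $T = (T_1,T_2)$ with $T_i$ having $n_i$ leaves, then the root contributes the single clade size $n$ while the subtrees contribute their own clade size sequences, so $\Phi^{\mathcal{N}}_f(T) = f(n) + \Phi^{\mathcal{N}}_f(T_1) + \Phi^{\mathcal{N}}_f(T_2)$. I would prove the theorem by strong induction on $n$, establishing \emph{simultaneously} two statements: (i) $T^{mb}_n$ is the unique minimizer of $\Phi^{\mathcal{N}}_f$ on $\BT$; and (ii) the optimal-value function $g(n) \coloneqq \Phi^{\mathcal{N}}_f(T^{mb}_n)$ is strictly convex in the discrete sense, i.e.\ the increment $g(n+1)-g(n)$ is strictly increasing in $n$. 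Statement (ii) is the strictly-convex counterpart of the phenomenon that drives the concave case toward the gfb-tree in Proposition~\ref{Prop:gfb_min_clade_meta_concave_imb_index}.

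For the uniqueness part of the induction step, the recursion shows that any minimizer $T=(T_1,T_2)$ satisfies $\Phi^{\mathcal{N}}_f(T) = f(n) + \Phi^{\mathcal{N}}_f(T_1) + \Phi^{\mathcal{N}}_f(T_2)$, and by the induction hypothesis $\Phi^{\mathcal{N}}_f(T_i) \ge g(n_i)$ with equality exactly when $T_i = T^{mb}_{n_i}$. Hence a minimizer must have both maximal pending subtrees equal to mb-trees, and it only remains to choose the root split $n_1 + n_2 = n$ minimizing $g(n_1)+g(n_2)$. Using the strict convexity of $g$ from the induction hypothesis, a standard exchange argument finishes this: comparing a split $(n_1,n_2)$ with $n_1 \le n_2$ and $n_2 - n_1 \ge 2$ to the more balanced split $(n_1+1,n_2-1)$, the change equals $[g(n_1+1)-g(n_1)] - [g(n_2)-g(n_2-1)] < 0$ since the increments of $g$ are strictly increasing and $n_1 \le n_2-2$. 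Iterating drives the split to the unique balanced one $\{\lfloor n/2\rfloor,\lceil n/2\rceil\}$, and since $T^{mb}_n = (T^{mb}_{\lfloor n/2\rfloor}, T^{mb}_{\lceil n/2\rceil})$, we conclude $T = T^{mb}_n$ uniquely.

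The main obstacle is propagating the convexity statement (ii) through the induction, and this is where the genuine work lies. I would compute the second difference $g(n+1) - 2g(n) + g(n-1)$ by substituting the recursion $g(m) = f(m) + g(\lfloor m/2\rfloor) + g(\lceil m/2\rceil)$. The $f$-terms contribute $f(n+1) - 2f(n) + f(n-1) > 0$ by strict convexity of $f$. The two recursive $g$-terms require a short floor/ceiling analysis split by the parity of $n$: I expect that for $n = 2k$ they combine precisely into the second difference of $g$ at $k$ (which is positive by the induction hypothesis, as $k < n$), whereas for $n = 2k+1$ they cancel to exactly $0$. In either case the second difference of $g$ is strictly positive, which extends (ii) to $n$. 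After verifying the small base cases (e.g.\ $n = 1,2,3$) directly, both statements hold for all $n$, and in particular $T^{mb}_n$ uniquely minimizes $\Phi^{\mathcal{N}}_f$ on $\BT$, as claimed.
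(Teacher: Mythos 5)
Your proof is correct, but it takes a genuinely different route from the paper. The paper argues locally: it adapts the swap lemmas of Mir et al.\ for the total cophenetic index (Lemmas~\ref{Lem:clade_Meta_convex} and~\ref{Lem:clade_Meta_increasing}), picks an unbalanced inner vertex $z$ of a hypothetical minimizer whose children are already balanced, and derives a contradiction by exchanging subtrees below $z$; this relies on the locality of the CSM (Proposition~\ref{Prop:locality}, a forward reference in the paper). You instead run a global strong induction that couples uniqueness of the minimizer with strict discrete convexity of the optimal-value function $g(n)=\Phi^{\mathcal{N}}_f\bigl(T^{mb}_n\bigr)$, reduces the induction step to choosing the root split minimizing $g(n_1)+g(n_2)$, and settles that by an exchange argument on the increments of $g$. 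I checked your parity analysis: for $n=2k$ the recursive $g$-terms in the second difference collapse to $g(k+1)-2g(k)+g(k-1)$, and for $n=2k+1$ they cancel exactly, so in both cases $g(n+1)-2g(n)+g(n-1)\geq f(n+1)-2f(n)+f(n-1)>0$; the base cases $n\leq 3$ check out directly. Your approach buys self-containedness (no appeal to locality) and, as a by-product, the strict convexity of the minimum-value sequence together with the recursion $g(n)=f(n)+g(\lfloor n/2\rfloor)+g(\lceil n/2\rceil)$, which immediately yields the minimum value without the subtree-counting Lemma~\ref{Lem:number_subtrees_size_mb}; the paper's approach buys a uniform treatment parallel to the known total-cophenetic-index proof and reuses lemmas that have independent interest for non-minimizing trees.
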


Before we can prove this statement, we need two helpful lemmas. In proving these lemmas, we will rely on the locality property of the CSM, which states that if two trees differ only in a rooted subtree, then the difference in their CSM values is entirely determined by these subtrees. To maintain the flow of the manuscript, we postpone the formal statement and proof of this property to Proposition~\ref{Prop:locality}. Furthermore, we note that the proofs of these lemmas and the main theorem proceed analogously to the proofs presented by \citet{Mir2013} for the total cophenetic index. We include them here for completeness.

\begin{Lem}
\label{Lem:clade_Meta_convex}
    Let $T \in \mathcal{BT}^{\ast}_{n \geq 4}$ and suppose that $T$ contains a subtree $T_z$ rooted at an inner vertex $z$ with $n_T(z) \geq 4$. Suppose that $a$ and $b$ are the children of $z$ and suppose that they are both inner vertices, inducing subtrees $T_a = (T_1,T_2)$ and $T_b = (T_3,T_4)$.
    Moreover, let $n_i$ denote the number of leaves of $T_i$ with $i \in \{1,\ldots,4\}$ and assume $n_1 \geq n_2$, $n_3 \geq n_4$, and $n_1 > n_3$. If $T$ minimizes the clade size metaconcept $\Phi^{\mathcal{N}}_f$ for strictly convex $f$, then $n_4 \geq n_2$.
\end{Lem}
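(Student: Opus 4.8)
The plan is to argue by contradiction: assuming $n_4 < n_2$, I would exhibit a tree $T'$ with $\Phi^{\mathcal{N}}_f(T') < \Phi^{\mathcal{N}}_f(T)$, contradicting the minimality of $T$. The tree $T'$ is obtained from $T$ by a purely local rearrangement inside $T_z$: keep the vertices $z$, $a$, $b$ in place, but reattach the four pending subtrees so that $a$ now has children inducing $T_1$ and $T_4$, while $b$ has children inducing $T_2$ and $T_3$ (that is, swap $T_2$ and $T_4$). This yields a binary tree $T'_z$ on the same leaf set rooted at $z$, and $T$ and $T'$ differ only in the subtree rooted at $z$.

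Next I would invoke the locality of the CSM (Proposition~\ref{Prop:locality}), which gives $\Phi^{\mathcal{N}}_f(T) - \Phi^{\mathcal{N}}_f(T') = \Phi^{\mathcal{N}}_f(T_z) - \Phi^{\mathcal{N}}_f(T'_z)$. Since the rearrangement leaves every subtree $T_1,\dots,T_4$ intact and leaves the clade size $N \coloneqq n_1+n_2+n_3+n_4$ of $z$ unchanged, the only clade sizes that change are those of $a$ and $b$: from $n_1+n_2$ and $n_3+n_4$ in $T_z$ to $n_1+n_4$ and $n_2+n_3$ in $T'_z$. Hence the difference reduces to $\Phi^{\mathcal{N}}_f(T_z) - \Phi^{\mathcal{N}}_f(T'_z) = \bigl[f(n_1+n_2)+f(n_3+n_4)\bigr] - \bigl[f(n_1+n_4)+f(n_2+n_3)\bigr]$, and it remains to show this is strictly positive.

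This is where strict convexity enters, and it is the crux of the argument. Writing $p \coloneqq n_1-n_3$ and $q \coloneqq n_2-n_4$, the assumption $n_1>n_3$ together with the contradiction hypothesis $n_2>n_4$ gives $p,q>0$. Both partitions $\{n_1+n_2,\,n_3+n_4\}$ and $\{n_1+n_4,\,n_2+n_3\}$ sum to $N$, and their gaps are $|(n_1+n_2)-(n_3+n_4)| = p+q$ and $|(n_1+n_4)-(n_2+n_3)| = |p-q|$, respectively, with $|p-q| < p+q$ since $p,q>0$. The key analytic fact I would use is that for strictly convex $f$ and fixed sum $N$, the quantity $f\bigl(\tfrac{N+d}{2}\bigr)+f\bigl(\tfrac{N-d}{2}\bigr)$ is strictly increasing in the gap $d \geq 0$; equivalently, the less balanced of two partitions of $N$ yields the strictly larger value of $f(\cdot)+f(\cdot)$. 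Applying this with $d = p+q$ versus $d=|p-q|$ shows the displayed difference is strictly positive, so $\Phi^{\mathcal{N}}_f(T) > \Phi^{\mathcal{N}}_f(T')$, the desired contradiction.

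I expect the main obstacle to be a clean justification of this convexity fact, in particular handling the possibility that the swap \enquote{overshoots} (when $q>p$, the sums at $a$ and $b$ reverse their order). The symmetric gap formulation above sidesteps the casework, but if one instead argued via increments of the form $f(x)-f(x-\delta)$, one would need to separate the cases $q\leq p$ and $q>p$; I would therefore prefer the gap formulation, proving the monotonicity of $d \mapsto f\bigl(\tfrac{N+d}{2}\bigr)+f\bigl(\tfrac{N-d}{2}\bigr)$ directly from the increasing-increments characterization of strict convexity recalled in the preliminaries. A minor point to record is that the rearrangement genuinely changes the multiset of clade sizes (guaranteed by $|p-q|<p+q$), so that the inequality is strict rather than an equality.
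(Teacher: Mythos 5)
Your proposal is correct and follows essentially the same route as the paper: the same swap of $T_2$ and $T_4$ inside $T_z$, the same appeal to locality (Proposition~\ref{Prop:locality}) to reduce everything to the four-term inequality $f(n_1+n_2)+f(n_3+n_4) > f(n_1+n_4)+f(n_2+n_3)$. The only difference is cosmetic: the paper closes with a telescoping comparison of unit increments (which, since $n_1+n_2 > n_3+n_2$, needs no case split on $p$ versus $q$ either), whereas you use the equivalent fixed-sum/larger-gap formulation of strict convexity.
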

\begin{proof}[Proof adapted from \cite{Mir2013}, Lemma 11]
    Let $T$ minimize $\Phi^{\mathcal{N}}_f$, and assume that $f$ is strictly convex. For the sake of contradiction, suppose that $n_2 > n_4$. Construct $T'_z$ from $T_z$ by swapping the positions of $T_2$ and $T_4$, so that in $T'_z$, vertex $a$ has pending subtrees $T_1$ and $T_4$, while vertex $b$ has pending subtrees $T_3$ and $T_2$. Let $T'$ be the tree obtained from $T$ by replacing $T_z$ in $T$ with $T'_z$.

    By assumption, we have $n_1 + n_2 > n_1 + n_4$ and $n_3 + n_2 > n_3 + n_4$. Defining $\lambda \coloneqq n_2 - n_4 > 0$, we can write $n_1 + n_4 = n_1 + n_2 - \lambda$ and $n_3 + n_2 = n_3 + n_4 + \lambda$. Then,
    \begin{align*}
        \Phi^{\mathcal{N}}_f(T) - \Phi^{\mathcal{N}}_f(T') &\stackrel{\text{Prop. }\ref{Prop:locality}}{=} \Phi^{\mathcal{N}}_f(T_z) - \Phi^{\mathcal{N}}_f(T'_z) = f(n_1 + n_2) + f(n_3 + n_4) - \left(f(n_1 + n_4) + f(n_3 + n_2)\right)\\
        &= f(n_1 + n_2) - f(n_1 + n_2 - \lambda) - (f(n_3 + n_4 + \lambda) - f(n_3 + n_4))\\
        &= \left(\sum\limits_{i = 1}^{\lambda} f(n_1 + n_2 +1 -i) - f(n_1 + n_2 -i)\right)\\
        &- \left(\sum\limits_{i = 1}^{\lambda} f(n_3 + n_4 + \lambda +1 -i) - f(n_3 + n_4 + \lambda -i)\right) \stackrel{f \text{ str. convex}}{>} 0.
    \end{align*}
    The last inequality holds because the assumption $n_1 > n_3$ yields $n_1 + n_2 > n_3 + n_2 = n_3 + n_4 + \lambda$. Together with $f$ being strictly convex, the $i$-th increment in the first sum is strictly greater than the $i$-th increment in the second sum. This contradicts the minimality of $T$ and thus completes the proof.
\end{proof}

\begin{Lem}
\label{Lem:clade_Meta_increasing}
    Let $T \in \mathcal{BT}^{\ast}_{n \geq 3}$ and suppose that $T$ contains a subtree $T_z$ rooted at an inner vertex $z$ with $n_T(z) \geq 3$ and such that the children of $z$ consist of an inner vertex $a$ and a leaf $x$ of $T$.
    Further, let $T_a = (T_1,T_2)$ with $n_1 \geq n_2$. If $T$ minimizes the clade size metaconcept $\Phi^{\mathcal{N}}_f$ for a strictly increasing function $f$, then $n_1 = n_2 = 1$.
\end{Lem}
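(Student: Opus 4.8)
The plan is to argue by contradiction. Since $n_1 \geq n_2 \geq 1$, the case $n_1 = 1$ already forces $n_2 = 1$ and hence the claim; so the only scenario to rule out is $n_1 \geq 2$. Assuming $T$ minimizes $\Phi^{\mathcal{N}}_f$ while $n_1 \geq 2$, I would exhibit a competitor tree $T'$ with strictly smaller metaconcept value, contradicting minimality.

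The local modification is the natural one, in the spirit of \citet{Mir2013}: keep the subtrees $T_1$ and $T_2$ intact but re-attach the leaf $x$ lower down. Concretely, replace $T_z$ by $T'_z$ in which the root $z$ has children $T_1$ and a fresh inner vertex $a'$, and $a'$ in turn has children $T_2$ and the leaf $x$. The resulting $T'_z$ is again binary and has the same leaf set as $T_z$ (namely $n_z = n_1 + n_2 + 1$ leaves), so the tree $T'$ obtained by substituting $T'_z$ for $T_z$ in $T$ lies in $\BT$ with the same number of leaves as $T$ and is thus an admissible competitor.

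The key step is the clade-size bookkeeping: $T_z$ and $T'_z$ differ in exactly one clade size. Every inner vertex inside $T_1$ or $T_2$ retains its clade size, and $z$ retains clade size $n_z$ in both trees; the sole change is that the inner vertex $a$ with clade size $n_1 + n_2$ in $T_z$ is replaced by the inner vertex $a'$ with clade size $n_2 + 1$ in $T'_z$. Invoking the locality of the CSM (Proposition~\ref{Prop:locality}), this yields
\[
\Phi^{\mathcal{N}}_f(T) - \Phi^{\mathcal{N}}_f(T') = \Phi^{\mathcal{N}}_f(T_z) - \Phi^{\mathcal{N}}_f(T'_z) = f(n_1+n_2) - f(n_2+1).
\]
Since $n_1 \geq 2$ gives $n_1 + n_2 > n_2 + 1$, strict monotonicity of $f$ makes this difference strictly positive, so $\Phi^{\mathcal{N}}_f(T') < \Phi^{\mathcal{N}}_f(T)$, contradicting minimality. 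Hence $n_1 = 1$, and therefore $n_2 = 1$ as well.

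I expect no serious obstacle here; the one point requiring care is the clade-size bookkeeping, i.e.\ confirming that pushing the leaf down changes only the clade size contributed by the single vertex directly below $z$ (from $n_1 + n_2$ to $n_2 + 1$) while leaving every other clade size---including those of the roots of $T_1$ and $T_2$---untouched. A secondary point worth flagging is that pairing $x$ with the \emph{smaller} subtree $T_2$ is what makes the modification uniformly beneficial: it replaces $n_1 + n_2$ by the smallest attainable value $n_2 + 1$, so strict positivity of the difference follows from $n_1 \geq 2$ alone, whereas pairing with $T_1$ would only succeed when $n_2 \geq 2$.
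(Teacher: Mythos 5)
Your proposal is correct and matches the paper's own argument essentially verbatim: the paper also swaps the leaf $x$ with the subtree $T_1$ (which produces exactly your $T'_z$, with $a$ playing the role of your $a'$), invokes the locality of the CSM via Proposition~\ref{Prop:locality}, and concludes from $f(n_1+n_2) - f(n_2+1) > 0$ that minimality is violated when $n_1 \geq 2$. Your clade-size bookkeeping and the observation that pairing $x$ with the smaller subtree is what makes the argument work are both accurate.
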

\begin{proof}[Proof adapted from \citet{Mir2013}, Lemma 12]
    Let $T$ minimize $\Phi^{\mathcal{N}}_f$, and assume that $f$ is strictly increasing. For the sake of contradiction, suppose that $n_1 > 1$. Let $T'_z$ be the tree obtained from $T_z$ by switching the positions of the leaf $x$ and the subtree $T_1$, meaning that the pending subtrees of $a$ in $T'_z$ are $T_2$ and the leaf $x$. Moreover, let $T'$ be the tree obtained from $T$ by replacing the subtree $T_z$ with $T'_z$. Then,
    \[\Phi^{\mathcal{N}}_f(T) - \Phi^{\mathcal{N}}_f(T') \stackrel{\text{Prop. }\ref{Prop:locality}}{=} \Phi^{\mathcal{N}}_f(T_z) - \Phi^{\mathcal{N}}_f(T'_z) = f(n_1 + n_2) - f(n_2 + 1) \stackrel{f \text{ str. increasing}}{>} 0.\]
    This contradicts the minimization of $T$ and completes the proof.
\end{proof}

Now, we are in a position to prove Theorem \ref{Theo:mb_clade_meta_convex}. 

\begin{proof}[Proof of Theorem \ref{Theo:mb_clade_meta_convex}, adapted from \citet{Mir2013}, Theorem 13]
    First, note that for $n = 1,2,3$, there is only one binary tree, and thus there is nothing to show. Now, assume $T \in \BT\setminus\{T^{mb}_n\}$ minimizes $\Phi^{\mathcal{N}}_f$. Let $T$ have an inner vertex $z$ that is not balanced, but whose children are balanced inner vertices or leaves. Let $a$ and $b$ be the children of $z$ with $n_a \geq n_b +2$. In particular, $a$ is an inner vertex. If $b$ is a leaf, then by Lemma \ref{Lem:clade_Meta_increasing}, we have $n_a = 2$. This contradicts the assumption that $z$ is not balanced. Hence, both $a$ and $b$ must be inner vertices. Moreover, due to the choice of $z$, $a$ and $b$ are balanced. Now, we can express the structure of $T_z$ as in Lemma \ref{Lem:clade_Meta_convex}, i.e., $T_a = (T_1,T_2)$ and $T_b = (T_3,T_4)$ with $n_1 \geq n_2$ and $n_3 \geq n_4$. Exploiting the balance of $a$ and $b$, we deduce that $n_2 \in \{n_1 - 1, n_1\}$ and $n_4 \in \{n_3 - 1, n_3\}$. Further, since $n_1 + n_2 = n_a \geq n_b + 2 = n_3 + n_4 + 2$, we can conclude that $2n_1 \geq 2n_3 + 1$, and thus $n_1 > n_3$. Therefore, it follows that $n_1 > n_3 \geq n_4$ and, by Lemma~\ref{Lem:clade_Meta_convex}, we have $n_4 \geq n_2$.
    Hence, we know that $n_2 = n_1 -1$, and thus $n_2 = n_3 = n_4$. Finally, for the balance value of $z$, we compute $b_T(z) = (n_1 + n_2) - (n_3 + n_4) = 2n_2 +1 - 2n_2 = 1$. This contradicts the assumption that $z$ is not balanced and completes the proof. Hence, $T^{mb}_n$ uniquely minimizes the CSM $\Phi^{\mathcal{N}}_f$ on $\BT$ if $f$ is thus strictly increasing and strictly convex.
\end{proof}

Using Theorem \ref{Theo:mb_clade_meta_convex}, we can identify another family of functions $f$ that induces (binary) imbalance indices. Note that, by Remark \ref{Rem:clade_meta_eqiuv_Sackin}, the total cophenetic index is equivalent to one of these functions.

\begin{Cor}
\label{Cor:clade_meta_convex}
    Let $f$ be strictly increasing and strictly convex. Then, the clade size metaconcept $\Phi^{\mathcal{N}}_f$ is an imbalance index on $\BT$. Moreover, if $f$ is also $2$-positive, i.e., $f(x) > 0$ if $x \geq 2$, then the clade size metaconcept $\Phi^{\mathcal{N}}_f$ is an imbalance index on $\T$.
\end{Cor}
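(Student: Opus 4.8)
The plan is to verify directly the two defining conditions of a(n) (binary) imbalance index, drawing entirely on the extremal-tree results already established in this subsection. Since the corollary makes two assertions --- one on $\BT$ and one on $\T$ --- I would treat them in turn, but in both cases the argument reduces to invoking Proposition~\ref{Prop:cat_max_clade_meta_BTandT} for the maximizing tree and Theorem~\ref{Theo:mb_clade_meta_convex} for the minimizing tree, then matching the trees these results furnish to the caterpillar and the fully balanced tree required by the definition.

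For the claim on $\BT$, condition~(i) requires the caterpillar to uniquely maximize $\Phi^{\mathcal{N}}_f$ on $\BT$; since $f$ is strictly increasing, this is exactly the first assertion of Proposition~\ref{Prop:cat_max_clade_meta_BTandT}. For condition~(ii), I would apply Theorem~\ref{Theo:mb_clade_meta_convex}, which states that $T^{mb}_n$ is the unique minimizer on $\BT$ whenever $f$ is strictly increasing and strictly convex. The only bookkeeping step is to recall from the preliminaries that $T^{mb}_{2^h} = T^{fb}_h$, so that for $n = 2^h$ the unique minimizer is precisely the fully balanced tree demanded by the definition. This settles the first assertion.

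For the claim on $\T$, condition~(ii) is unchanged --- it is always stated on $\BT$ for leaf numbers that are powers of two --- and hence continues to hold by the identical argument, using that strict convexity already gives $T^{mb}_{2^h} = T^{fb}_h$ as the unique minimizer. The remaining point is condition~(i), which now demands that the caterpillar uniquely maximize over the larger domain $\T$. Here I would again cite Proposition~\ref{Prop:cat_max_clade_meta_BTandT}, whose second assertion handles exactly this situation under the additional hypothesis that $f$ be $2$-positive. With both conditions verified, $\Phi^{\mathcal{N}}_f$ is an imbalance index on $\T$.

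In truth there is no genuine obstacle here: the corollary is an assembly of Proposition~\ref{Prop:cat_max_clade_meta_BTandT} and Theorem~\ref{Theo:mb_clade_meta_convex}, mirroring the structure of the proof of Corollary~\ref{Cor:clade_meta_concave_onT}. The only places requiring a modicum of care are the identification $T^{mb}_{2^h} = T^{fb}_h$, which ensures that the minimizer supplied by the convex theorem coincides with the fully balanced tree in the power-of-two regime, and the observation that $2$-positivity enters only to extend the \emph{maximization} result from $\BT$ to $\T$ and plays no role in the minimization half.
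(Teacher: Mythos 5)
Your proof is correct and follows exactly the paper's route: the corollary is stated there as a direct consequence of Proposition~\ref{Prop:cat_max_clade_meta_BTandT} and Theorem~\ref{Theo:mb_clade_meta_convex}, which is precisely the assembly you carry out. Your additional remarks --- the identification $T^{mb}_{2^h}=T^{fb}_h$ and the observation that $2$-positivity is needed only for the maximization on $\T$ --- are accurate and merely make explicit what the paper leaves implicit.
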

\begin{proof}
    The proof is a direct consequence of Proposition \ref{Prop:cat_max_clade_meta_BTandT} and Theorem \ref{Theo:mb_clade_meta_convex}.
\end{proof}

In Remark \ref{Rem:clade_meta_eqiuv_Sackin}, we observed that the CSM with an affine function $f(n_v) = m \cdot n_v + a$ is equivalent to the Sackin index on $\BT$ if $m > 0$, and on $\T$ if $m > 0$ and $a = 0$. In both cases, it follows directly that the CSM is a (binary) imbalance index. Further, we show in the following that the CSM is an imbalance index on $\T$ if $m > 0$ and $a \geq 0$ (rather than only if $a = 0$).

\begin{Prop}
\label{Prop:Nv_imbalance_index_f_affine}
    Let $f$ be an affine function, i.e., $f(n_v) = m \cdot n_v + a$ with $m,a \in \R$. Then, the clade size metaconcept $\Phi^{\mathcal{N}}_f$ is an imbalance index on $\T$ if $m > 0$ and $a \geq 0$. Moreover, the clade size metaconcept $\Phi^{\mathcal{N}}_f$ is an imbalance index on $\BT$ if $m > 0$.

    Furthermore, for all $n$, the minimizing trees on $\BT$ of the clade size metaconcept, if $m > 0$, coincide with those of the Sackin index. Specifically, these are either $T^{fb}_{h_n}$ or trees that employ precisely two leaf depths, namely $h_n-1$ and $h_n$, where $h_n = \lceil\log_2(n)\rceil$. In particular, both the gfb-tree and the mb-tree minimize the clade size metaconcept for all $n$.
\end{Prop}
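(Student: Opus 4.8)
The plan is to reduce the entire statement to the affine identity recorded in Remark~\ref{Rem:clade_meta_eqiuv_Sackin}, namely
\[\Phi^{\mathcal{N}}_f(T) = m \cdot S(T) + |\mathcal{N}(T)| \cdot a,\]
and then to borrow the established behaviour of the Sackin index together with the general maximization result Proposition~\ref{Prop:cat_max_clade_meta_BTandT}. Accordingly, I would split the argument into the binary domain $\BT$ (which simultaneously yields the claim about minimizing trees) and the full domain $\T$, and in each domain verify conditions (i) and (ii) of the imbalance-index definition.

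For the domain $\BT$, the key observation is that $|\mathcal{N}(T)| = n-1$ is the same for every $T \in \BT$, so the term $|\mathcal{N}(T)| \cdot a$ is constant and $\Phi^{\mathcal{N}}_f$ is an increasing affine transformation of $S(T)$ whenever $m > 0$. Hence $\Phi^{\mathcal{N}}_f$ is equivalent to the Sackin index on $\BT$, and since the Sackin index is a binary imbalance index, so is $\Phi^{\mathcal{N}}_f$; this settles the $\BT$ assertion. Because equivalent indices induce the same ranking, the trees minimizing $\Phi^{\mathcal{N}}_f$ on $\BT$ are precisely those minimizing the Sackin index, which by Lemma~\ref{Lem:Sackin_mintree_min_value} are $T^{fb}_{h_n}$ together with the trees using only the two leaf depths $h_n-1$ and $h_n$; by Remark~\ref{Rem:gfb_mb_from_fb_min_Sackin} both $T^{gfb}_n$ and $T^{mb}_n$ lie in this set. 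This establishes the last two assertions.

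For the domain $\T$ I would check the two defining conditions in turn. For the maximization condition (i), an affine $f(x) = m x + a$ with $m > 0$ and $a \geq 0$ is strictly increasing and $2$-positive, since $f(x) \geq f(2) = 2m + a \geq 2m > 0$ for all $x \geq 2$; Proposition~\ref{Prop:cat_max_clade_meta_BTandT} then yields that $T^{cat}_n$ is the unique maximizer of $\Phi^{\mathcal{N}}_f$ on $\T$. For the minimization condition (ii), which by definition only concerns $\BT$ at leaf numbers $n = 2^h$, I would reuse the equivalence to the Sackin index on $\BT$ established above: since the Sackin index is an imbalance index, its unique minimizer on $\BT$ for $n = 2^h$ is $T^{fb}_h$, and equivalence transfers this uniqueness to $\Phi^{\mathcal{N}}_f$. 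Together, (i) and (ii) show that $\Phi^{\mathcal{N}}_f$ is an imbalance index on $\T$.

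I do not anticipate a real obstacle, since all the substantive work is already contained in the cited results; the argument is essentially a matter of correctly tracking which domain each condition refers to. The only points deserving a moment of care are verifying $2$-positivity of the affine $f$ from the hypothesis $a \geq 0$ (needed on $\T$ but not on $\BT$), and noting that condition~(ii) lives on $\BT$, so the constant-$|\mathcal{N}(T)|$ equivalence to the Sackin index already delivers it without any further constraint on $a$.
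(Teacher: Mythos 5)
Your proof is correct, and its overall structure coincides with the paper's: both reduce everything on $\BT$ (including the characterization of the minimizing trees and condition (ii) at $n=2^h$) to the equivalence $\Phi^{\mathcal{N}}_f(T) = m\cdot S(T) + |\mathcal{N}(T)|\cdot a$ with $|\mathcal{N}(T)| = n-1$ constant, combined with Lemma~\ref{Lem:Sackin_mintree_min_value} and Remark~\ref{Rem:gfb_mb_from_fb_min_Sackin}. The one step you handle differently is the maximization on $\T$: you verify that $m>0$ and $a\geq 0$ make $f$ strictly increasing and $2$-positive and then invoke the general Proposition~\ref{Prop:cat_max_clade_meta_BTandT}, whereas the paper argues directly from the affine identity, writing $\Phi^{\mathcal{N}}_f(T^{cat}_n) = m\cdot S(T^{cat}_n) + (n-1)\cdot a > m\cdot S(T) + |\mathcal{N}(T)|\cdot a$ using that the Sackin index is an imbalance index and that $|\mathcal{N}(T)| \leq n-1$ on $\T$. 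Both routes are valid; yours delegates to the already-proved $2$-positivity result and is slightly shorter, while the paper's computation makes explicit how the hypothesis $a \geq 0$ interacts with the varying number of inner vertices of non-binary trees. No gap either way.
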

\begin{proof}
    The part regarding $\BT$ and the minimizing trees on $\BT$ for $n \neq 2^{h_n}$ follows directly from Remark \ref{Rem:clade_meta_eqiuv_Sackin}, Lemma \ref{Lem:Sackin_mintree_min_value}, and Remark \ref{Rem:gfb_mb_from_fb_min_Sackin}, since in this case the CSM is equivalent to the Sackin index.

    To prove the rest of this proposition, we must show that the caterpillar is the unique tree maximizing the CSM on $\T$, and the fully balanced tree is the unique tree minimizing the CSM on $\BT$ if $n = 2^{h_n}$.
    The latter again follows from the equivalence to the Sackin index. Finally, we show that the caterpillar is the unique tree maximizing the CSM on $\T$. In Remark \ref{Rem:clade_meta_eqiuv_Sackin}, we saw that the CSM is an affine function of the Sackin index if $f(n_v) = m \cdot n_v + a$ with $m,a \in \R$ for all $T \in \T$, i.e.,
    \[\Phi^{\mathcal{N}}_f(T) = m \cdot S(T) + |\mathcal{N}(T)| \cdot a.\]
    Note that $|\mathcal{N}(T)| = |\mathring{V}(T)|$. Now, assume $m > 0$ and $a \geq 0$. We can exploit the fact that the Sackin index is an imbalance index. Let $T \in \T$ be an arbitrary tree with $T \neq T^{cat}_n$. Then,
    \[\Phi^{\mathcal{N}}_f\left(T^{cat}_n\right) = m \cdot S\left(T^{cat}_n\right) + \underbrace{\left|\mathcal{N}\left(T^{cat}_n\right)\right|}_{= n-1} \cdot a > m \cdot S(T) + \underbrace{\left|\mathcal{N}(T)\right|}_{\leq n-1} \cdot a = \Phi^{\mathcal{N}}_f(T).\]
    Here, the strict inequality follows from the fact that the Sackin index is an imbalance index, and that $m > 0$ and $a \geq 0$.
    Thus, the caterpillar maximizes the CSM on $\T$ if $m > 0$ and $a \geq 0$. This completes the proof.
\end{proof}

So far, we have only considered the minimization of the CSM on $\BT$. In the next proposition, we extend our analysis to arbitrary trees.

\begin{Prop}
\label{Prop:clade_size_meta_star_min}
    Let $T^{star}_n$ be the star tree on $n$ leaves, and let $f$ be a $2$-positive, i.e., $f(x) > 0$ if $x \geq 2$, (though not necessarily increasing) function. Then, the star tree is the unique tree that minimizes the clade size metaconcept $\Phi^{\mathcal{N}}_f$ on $\T$.
\end{Prop}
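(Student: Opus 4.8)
The plan is to exploit two structural facts about the clade size sequence: that it always records the root's clade size $n$, and that every inner vertex has clade size at least two. First I would dispose of the trivial case $n = 1$, where the unique tree has no inner vertex, so that $\mathcal{N}(T)$ is empty, $\Phi^{\mathcal{N}}_f(T) = 0$, and the statement holds vacuously (the star is the only tree). Hence I may assume $n \geq 2$ throughout.

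The central observation is that for any $T \in \T$ on $n \geq 2$ leaves, every inner vertex $v \in \mathring{V}(T)$ has out-degree at least two (by definition of a rooted tree, there are no out-degree-one vertices), and therefore satisfies $n_T(v) \geq 2$. In particular, the root $\rho$ always contributes the clade size $n_T(\rho) = n$ to $\mathcal{N}(T)$. Since the star tree $T^{star}_n$ has the root as its only inner vertex, I obtain $\Phi^{\mathcal{N}}_f(T^{star}_n) = f(n)$.

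Next I would compare an arbitrary tree to the star. Writing the metaconcept as
\[
\Phi^{\mathcal{N}}_f(T) = f(n) + \sum_{v \in \mathring{V}(T) \setminus \{\rho\}} f(n_T(v)),
\]
I would invoke $2$-positivity: every vertex $v \neq \rho$ in the remaining sum satisfies $n_T(v) \geq 2$, hence $f(n_T(v)) > 0$. Therefore $\Phi^{\mathcal{N}}_f(T) \geq f(n) = \Phi^{\mathcal{N}}_f(T^{star}_n)$, with equality precisely when the sum over $\mathring{V}(T) \setminus \{\rho\}$ is empty, i.e.\ when $\rho$ is the unique inner vertex of $T$. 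Since the star tree is exactly the tree with a single inner vertex, this forces $T = T^{star}_n$, yielding uniqueness.

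There is no genuine obstacle here: the result is essentially a one-line consequence of the facts that (i) the clade size sequence always contains the entry $n$ coming from the root, (ii) all other inner vertices contribute clade sizes at least $2$, and (iii) $f$ is strictly positive on $[2,\infty)$. I would emphasize that monotonicity of $f$ is never used, which is exactly why the hypothesis can drop the increasing assumption. The only point requiring minor care is to confirm that an inner vertex always has clade size at least two, which follows immediately from the prohibition of out-degree-one vertices in the definition of a rooted tree.
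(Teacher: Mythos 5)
Your proposal is correct and follows essentially the same argument as the paper: isolate the root's contribution $f(n)$, which equals $\Phi^{\mathcal{N}}_f(T^{star}_n)$, and observe that any tree other than the star has at least one additional inner vertex, whose clade size is at least $2$ and hence contributes strictly positively by $2$-positivity. The only cosmetic difference is that the paper dismisses $n\leq 2$ as trivial while you handle $n=2$ within the general case, which is equally valid.
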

\begin{proof}
    Let $T^{star}_n$ be the star tree on $n$ leaves, and let $f$ be a $2$-positive function. For $n \leq 2$, there is only one tree, so there is nothing to show. Now, let $n \geq 3$. Consider a tree $T \in \T \setminus \left\{T^{star}_n\right\}$ distinct from the star tree. Then, $|\mathring{V}(T)| \geq 2$. Moreover, every tree has at least one inner vertex with clade size $n$, namely its root. Thus,
    \[\Phi^{\mathcal{N}}_f\left(T^{star}_n\right) = f(n) \stackrel{f\text{ $2$-pos.}}{<} f(n) + \sum\limits_{i = 1}^{|\mathring{V}(T)|-1} f(\mathcal{N}(T)_i) = \Phi^{\mathcal{N}}_f(T).\]
    This completes the proof.
\end{proof}

Having identified the trees that minimize and maximize the CSM, we can now calculate its minimum and maximum values.

\paragraph{Extremal values}\leavevmode\\

We begin by considering the maximum value of the CSM if $f$ is increasing (and $2$-positive).

\begin{Prop}
\label{Prop:Clade_size_Meta_Max_val}
    The maximum value of the clade size metaconcept $\Phi^{\mathcal{N}}_f$ on $\BT$ (on $\T$) is $\sum\limits_{i=2}^{n} f(i)$, if $f$ is an increasing (and $2$-positive, i.e., $f(x) > 0$ if $x \geq 2$) function.
\end{Prop}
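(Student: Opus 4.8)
The plan is to reduce the entire statement to a single evaluation, since the maximizing tree has already been identified in the preceding proposition. By Proposition \ref{Prop:cat_max_clade_meta_BTandT}, the caterpillar $T^{cat}_n$ maximizes $\Phi^{\mathcal{N}}_f$ on $\BT$ whenever $f$ is increasing, and it maximizes $\Phi^{\mathcal{N}}_f$ on $\T$ whenever $f$ is additionally $2$-positive (the $2$-positivity being exactly what guarantees that the binary caterpillar still beats every strictly non-binary tree, whose clade size sequence is shorter). Consequently, in both regimes the maximum value of the metaconcept is precisely $\Phi^{\mathcal{N}}_f\!\left(T^{cat}_n\right)$, and it only remains to compute this one number.

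The key step is therefore to determine the clade size sequence of the caterpillar. I would argue that $\mathcal{N}\!\left(T^{cat}_n\right) = (2,3,\ldots,n)$. Intuitively, descending from the root, the cherry parent has clade size $2$, each successive ancestor adds exactly one leaf, and the root itself has clade size $n$; since $T^{cat}_n$ has $n-1$ inner vertices, this yields the $n-1$ distinct values $2,3,\ldots,n$. To make this rigorous I would use induction on $n$ via the standard decomposition $T^{cat}_n = \left(T^{cat}_{n-1}, T^{cat}_1\right)$ together with the recursive formula $\mathcal{N}(T) = \mathcal{N}(T_1)\overset{\rightarrow}{\cup}\ldots\overset{\rightarrow}{\cup}\mathcal{N}(T_k)\overset{\rightarrow}{\cup}(n)$ from Remark \ref{Rem:recursiveness_sequences}: the leaf subtree contributes nothing to the clade size sequence, the sub-caterpillar contributes $(2,3,\ldots,n-1)$ by the induction hypothesis, and the root contributes the new entry $n$.

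Finally, summing $f$ over this sequence gives
\[
\Phi^{\mathcal{N}}_f\!\left(T^{cat}_n\right) = \sum_{v \in \mathring{V}(T^{cat}_n)} f(n_v) = \sum_{i=2}^{n} f(i),
\]
which is the claimed maximum value on both $\BT$ and $\T$ under the respective hypotheses. I do not anticipate any genuine obstacle here: the only point needing a small amount of care is verifying the caterpillar's clade size sequence, and even that is immediate from its recursive structure. All the substantive work—establishing that the caterpillar is the maximizer in each setting—has already been carried out in Proposition \ref{Prop:cat_max_clade_meta_BTandT}.
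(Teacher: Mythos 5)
Your proposal is correct and follows essentially the same route as the paper: invoke Proposition \ref{Prop:cat_max_clade_meta_BTandT} to identify the caterpillar as the maximizer on $\BT$ (respectively $\T$ with $2$-positivity), then evaluate $\Phi^{\mathcal{N}}_f$ on $T^{cat}_n$ using $\mathcal{N}\left(T^{cat}_n\right) = (2,\ldots,n)$. The only difference is that you spell out a short induction for the caterpillar's clade size sequence, which the paper simply asserts as immediate.
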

\begin{proof}
    By Proposition \ref{Prop:cat_max_clade_meta_BTandT}, it suffices to show that $\Phi^{\mathcal{N}}_f\left(T^{cat}_n\right) = \sum\limits_{i=2}^{n} f(i)$. However, this follows directly from the fact that $\mathcal{N}\left(T^{cat}_n\right) = (2, \ldots, n)$. This completes the proof.
\end{proof}

Next, we consider the minimum values of the CSM. Note that \citet{Cleary2025} have stated a minimum value for their function $\pi_c$, which is highly related to the CSM on $\BT$ with function $f(n_v) = \log(n_v + c)$ with $c > -2$. Taking the logarithm of the minimum value of $\pi_c$ (\citet[Corollary 4.13]{Cleary2025}) yields the minimum value for the CSM on $\BT$ for those functions $f$. Next, we extend this result to a broader range of functions $f$.

\begin{Prop} Let $h_i = \lceil\log(i)\rceil$ for all $i \in \mathbb{N}$. 
    Then, the minimum value of the clade size metaconcept $\Phi^{\mathcal{N}}_f$ on $\BT$ for any strictly increasing and strictly concave function $f$ is $\sum\limits_{i=2}^{n} gfb_n(i) \cdot f(i)$, where $gfb_n(i)$ is as specified in Theorem~\ref{theo:gfb-subtree-sizes}.
\end{Prop}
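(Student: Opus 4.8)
The plan is to combine Proposition \ref{Prop:gfb_min_clade_meta_concave_imb_index} with the subtree-size count from Theorem \ref{theo:gfb-subtree-sizes}. First I would invoke Proposition \ref{Prop:gfb_min_clade_meta_concave_imb_index}, which guarantees that for strictly increasing and strictly concave $f$ the gfb-tree $T^{gfb}_n$ is the unique tree minimizing $\Phi^{\mathcal{N}}_f$ on $\BT$. Hence the minimum value of the CSM equals $\Phi^{\mathcal{N}}_f\left(T^{gfb}_n\right) = \sum_{n_v \in \mathcal{N}(T^{gfb}_n)} f(n_v)$, and it remains only to evaluate this sum.

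Next I would regroup the sum according to the distinct clade-size values occurring in $\mathcal{N}(T^{gfb}_n)$. Writing $a_i$ for the number of inner vertices of $T^{gfb}_n$ whose clade size is exactly $i$, we obtain $\Phi^{\mathcal{N}}_f\left(T^{gfb}_n\right) = \sum_{i=2}^{n} a_i \cdot f(i)$. The summation starts at $i=2$ because, for $n \geq 2$, every inner vertex has clade size at least $2$ (as noted after the definition of the metaconcept), so no smaller value can appear in $\mathcal{N}(T^{gfb}_n)$.

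The key step is to identify $a_i$ with $gfb_n(i)$. Each inner vertex $v$ of $T^{gfb}_n$ is the root of a pending subtree $T_v$ of size $n_v = n_T(v)$, and conversely every pending subtree of size at least $2$ is rooted at an inner vertex, whereas the pending subtrees of size $1$ are precisely the leaves. Thus, for every $i \geq 2$, the inner vertices of clade size $i$ are in bijection with the pending subtrees of size $i$, giving $a_i = gfb_n(i)$ with $gfb_n(i)$ as in Theorem \ref{theo:gfb-subtree-sizes}. Substituting this yields $\Phi^{\mathcal{N}}_f\left(T^{gfb}_n\right) = \sum_{i=2}^{n} gfb_n(i) \cdot f(i)$, which is the claimed minimum value.

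I expect no serious obstacle here: once Proposition \ref{Prop:gfb_min_clade_meta_concave_imb_index} and Theorem \ref{theo:gfb-subtree-sizes} are in hand, the argument is essentially bookkeeping. The only point requiring care is the correspondence between clade sizes of inner vertices and subtree sizes -- in particular, verifying that the size-$1$ subtrees (the leaves) are correctly excluded, which is exactly why the sum ranges over $i = 2, \ldots, n$ rather than $i = 1, \ldots, n$.
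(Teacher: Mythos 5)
Your proposal is correct and follows essentially the same route as the paper: invoke Proposition~\ref{Prop:gfb_min_clade_meta_concave_imb_index} to identify $T^{gfb}_n$ as the minimizer, then evaluate $\Phi^{\mathcal{N}}_f\left(T^{gfb}_n\right)$ by grouping the clade size sequence according to the subtree counts $gfb_n(i)$ from Theorem~\ref{theo:gfb-subtree-sizes}. The paper states this as a one-line consequence; your version merely spells out the bookkeeping (including the correct exclusion of size-$1$ subtrees) more explicitly.
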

\begin{proof}
    First, recall that $gfb_n(i)$ is the number of subtrees of $T^{gfb}_n$ of size $i$. Now, for strictly increasing and strictly concave $f$, the statement is a direct consequence of Proposition \ref{Prop:gfb_min_clade_meta_concave_imb_index}.
\end{proof}

Next, we consider the minimum value of the CSM for strictly increasing and strictly convex $f$.

\begin{Prop}
\label{Prop:min_val_CSM_convex}
    The minimum value of the clade size metaconcept $\Phi^{\mathcal{N}}_f$ for any strictly increasing and strictly convex function $f$ is $\sum\limits_{i=2}^n mb_n(i) \cdot f(i)$ with $mb_n(i)$ as given in Lemma \ref{Lem:number_subtrees_size_mb}.
\end{Prop}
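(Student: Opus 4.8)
The plan is to reduce the problem immediately to evaluating the CSM at a single tree. By Theorem~\ref{Theo:mb_clade_meta_convex}, we already know that $T^{mb}_n$ uniquely minimizes $\Phi^{\mathcal{N}}_f$ on $\BT$ whenever $f$ is strictly increasing and strictly convex. Consequently, the minimum value sought is nothing other than $\Phi^{\mathcal{N}}_f(T^{mb}_n)$, and the entire task collapses to expressing this quantity in closed form. This mirrors the structure of the preceding proposition for the concave case, with Theorem~\ref{Theo:mb_clade_meta_convex} playing the role that Proposition~\ref{Prop:gfb_min_clade_meta_concave_imb_index} played there.

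To carry this out, I would start from the definition $\Phi^{\mathcal{N}}_f(T^{mb}_n) = \sum_{n_v \in \mathcal{N}(T^{mb}_n)} f(n_v)$, recalling that the clade size sequence $\mathcal{N}(T^{mb}_n)$ ranges over the clade sizes of the inner vertices of $T^{mb}_n$, and that the clade size of an inner vertex equals the number of leaves in its pending subtree. The inner vertices of $T^{mb}_n$ are precisely the roots of its pending subtrees of size at least two, so I would group the summands by subtree size: for each $i \in \{2, \ldots, n\}$, exactly $mb_n(i)$ inner vertices carry clade size $i$, where $mb_n(i)$ is the count supplied by Lemma~\ref{Lem:number_subtrees_size_mb}. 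Collecting equal terms then yields $\Phi^{\mathcal{N}}_f(T^{mb}_n) = \sum_{i=2}^n mb_n(i) \cdot f(i)$, as claimed.

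There is no genuine obstacle in this argument, since its substance is entirely delegated to Theorem~\ref{Theo:mb_clade_meta_convex} (which identifies the minimizer) and to Lemma~\ref{Lem:number_subtrees_size_mb} (which counts the subtree sizes of the mb-tree); the proof merely assembles these two results. The only point requiring a little care is the lower summation bound: subtrees of size one are the leaves of $T^{mb}_n$, which are not inner vertices and therefore contribute no entry to the clade size sequence, so the sum correctly starts at $i = 2$ rather than $i = 1$. Note also that uniqueness of the minimizer is not needed here; it suffices that $T^{mb}_n$ attains the minimum, which Theorem~\ref{Theo:mb_clade_meta_convex} guarantees for exactly the class of functions $f$ under consideration.
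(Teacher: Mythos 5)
Your proposal is correct and follows exactly the paper's route: the paper likewise states that the result is a direct consequence of Theorem~\ref{Theo:mb_clade_meta_convex} (identifying $T^{mb}_n$ as the minimizer) combined with Lemma~\ref{Lem:number_subtrees_size_mb} (counting the pending subtrees of each size). You merely spell out the regrouping of the sum $\sum_{n_v \in \mathcal{N}(T^{mb}_n)} f(n_v)$ by clade size, which the paper leaves implicit.
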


Before we can prove this statement, we give the number of subtrees of a given size for the mb-tree. The proof of the following lemma can be found in Appendix~\ref{appendix:proof-lemmas}. Note that the proof of Proposition~\ref{Prop:min_val_CSM_convex} is merely a direct consequence of Theorem \ref{Theo:mb_clade_meta_convex} and the following Lemma \ref{Lem:number_subtrees_size_mb}.

\begin{Lem}
\label{Lem:number_subtrees_size_mb}
    Let $n,i \in \mathbb{N}$ such that $i\leq n$. Let $h_n=\lceil \log_2(n) \rceil$ and $h_i=\lceil \log_2(i) \rceil$. For all $l \in \{0,\ldots,h_n\}$, let $r_l^n=n - 2^l \cdot \left\lfloor\frac{n}{2^l}\right\rfloor $. Moreover, let $mb_n(i)$ denote the number of pending subtrees with $i$ leaves of $T^{mb}_n$. Then, we have:

    $mb_n(i) = \begin{cases}
        n & \mbox{if $i = 1$},\\
        2^{h_n-1} & \mbox{if $i = 2$ and $n = 2^{h_n}$},\\
        r^n_{h_n-1} & \mbox{if $i = 2$ and $n < 2^{h_n}$ and $\left\lfloor\frac{n}{2^{h_n-2}}\right\rfloor \neq \left\lceil\frac{n}{2^{h_n-1}}\right\rceil$},\\
        2^{h_n-2} - r_{h_n-2}^n + r_{h_n-1}^n& \mbox{if $i = 2$ and $n < 2^{h_n}$ and $\left\lfloor\frac{n}{2^{h_n-2}}\right\rfloor = \left\lceil\frac{n}{2^{h_n-1}}\right\rceil$},\\

        0 & \mbox{if $i\geq 3$ and $i \not\in \left\{\left\lfloor\frac{n}{2^l}\right\rfloor, \left\lceil\frac{n}{2^l}\right\rceil\right\}$ for all $l \in \left\{\left\lfloor\log_2\left(\frac{n}{i}\right)\right\rfloor,\left\lceil\log_2\left(\frac{n}{i}\right)\right\rceil\right\}$},\\
        r_l^n & \mbox{if $i \geq 3$ and $i = \left\lceil\frac{n}{2^l}\right\rceil$ with $l = \left\lceil\log_2\left(\frac{n}{i}\right)\right\rceil$ and $\left\lceil\frac{n}{2^l}\right\rceil > \left\lfloor\frac{n}{2^l}\right\rfloor$},\\
        2^l - r_l^n & \mbox{if $i \geq 3$ and $i = \left\lfloor\frac{n}{2^l}\right\rfloor$ with $l = \left\lfloor\log_2\left(\frac{n}{i}\right)\right\rfloor$}.
    \end{cases}$
\end{Lem}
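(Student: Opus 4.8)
The plan is to analyze $T^{mb}_n$ level by level. Writing $r_l^n = n \bmod 2^l$ so that $n = 2^l\lfloor n/2^l\rfloor + r_l^n$, the recursive definition of the maximally balanced tree guarantees that every inner vertex of clade size $m$ has children of clade sizes $\lceil m/2\rceil$ and $\lfloor m/2\rfloor$. The heart of the argument is a \emph{level lemma}: for every depth $l$ with $0 \le l \le h_n$, the multiset of clade sizes of the vertices at depth $l$ consists of exactly $r_l^n$ copies of $\lceil n/2^l\rceil$ and $2^l - r_l^n$ copies of $\lfloor n/2^l\rfloor$. I would prove this by induction on $l$, the base case $l=0$ being the root of size $n$ (with $r_0^n = 0$). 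For the inductive step I would split each depth-$l$ vertex by the rule above and collapse the resulting four values $\lceil c/2\rceil,\lfloor c/2\rfloor,\lceil q/2\rceil,\lfloor q/2\rfloor$ (where $c = \lceil n/2^l\rceil$, $q = \lfloor n/2^l\rfloor$) to exactly the two values $\lceil n/2^{l+1}\rceil,\lfloor n/2^{l+1}\rfloor$. The technical inputs here are the elementary identities $\lceil\lceil n/2^l\rceil/2\rceil = \lceil n/2^{l+1}\rceil$ and $\lfloor\lfloor n/2^l\rfloor/2\rfloor = \lfloor n/2^{l+1}\rfloor$, together with the remainder recursion $r_{l+1}^n = r_l^n + 2^l\cdot(\lfloor n/2^l\rfloor \bmod 2)$, which tracks the parity that controls how the $r_l^n$ ceiling-vertices and $2^l - r_l^n$ floor-vertices redistribute at the next level.

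Once the level lemma is established, $mb_n(i)$ is just the total number of vertices of clade size $i$ across all depths, so I would compute it by summing the level-wise multiplicities over those depths at which $i$ occurs. The case $i=1$ is immediate: size-$1$ vertices are precisely the $n$ leaves. For $i \ge 3$ the key observation is a \emph{no-overlap} fact: a given size $i \ge 3$ occurs at a unique depth. Indeed, if $i = \lfloor n/2^l\rfloor$ then $n < (i+1)2^l$, so at depth $l+1$ the ceiling value satisfies $\lceil n/2^{l+1}\rceil \le \lceil (i+1)/2\rceil < i$ whenever $i \ge 3$; an analogous estimate rules out depth $l-1$. Hence for $i \ge 3$ the value $i$ appears at the single depth determined by $\log_2(n/i)$, and it remains only to decide whether $i$ is realized as the ceiling value at $l = \lceil\log_2(n/i)\rceil$ (contributing $r_l^n$), as the floor value at $l = \lfloor\log_2(n/i)\rfloor$ (contributing $2^l - r_l^n$), or not at all (contributing $0$). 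Matching these three outcomes to the interval position of $n/2^l$ yields exactly the last three lines of the formula.

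The main obstacle is the case $i=2$, which is genuinely different: size $2$ is the one size $\ge 2$ that can occur at \emph{two consecutive} depths. Concretely, cherries always appear at depth $h_n-1$, where $n/2^{h_n-1} \in (1,2]$ forces $\lceil n/2^{h_n-1}\rceil = 2$; but when $2 \cdot 2^{h_n-2} \le n < 3\cdot 2^{h_n-2}$ there are \emph{additional} cherries at depth $h_n-2$, appearing there as the floor value. I would handle this by noting that, for $n < 2^{h_n}$, the condition $\lfloor n/2^{h_n-2}\rfloor = \lceil n/2^{h_n-1}\rceil$ is precisely the statement $\lfloor n/2^{h_n-2}\rfloor = 2$, which detects exactly when depth $h_n-2$ contributes cherries. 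When it holds, the count is the sum $(2^{h_n-2} - r_{h_n-2}^n)$ (floor-value cherries at depth $h_n-2$) plus $r_{h_n-1}^n$ (ceiling-value cherries at depth $h_n-1$), giving $2^{h_n-2} - r_{h_n-2}^n + r_{h_n-1}^n$; when it fails, only depth $h_n-1$ contributes, giving $r_{h_n-1}^n$. The power-of-two case $n = 2^{h_n}$ is separate and clean, since then the tree is fully balanced and all $2^{h_n-1}$ penultimate-level vertices are cherries. Assembling these subcases with the level lemma reproduces the four lines governing $i=2$, and together with the $i=1$ and $i\ge 3$ computations completes the proof.
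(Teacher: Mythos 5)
Your proposal is correct and follows essentially the same route as the paper's proof: a layer-by-layer analysis of $T^{mb}_n$ establishing by induction that depth $l$ carries $r_l^n$ vertices of clade size $\left\lceil\frac{n}{2^l}\right\rceil$ and $2^l - r_l^n$ of size $\left\lfloor\frac{n}{2^l}\right\rfloor$, a uniqueness-of-layer argument for $i \geq 3$ via the bound $\log_2\left(\frac{i+1}{i}\right) < 1$, and a separate treatment of $i = 2$ as the only size $\geq 2$ that can straddle layers $h_n-1$ and $h_n-2$. The only caveat is that your level lemma should be restricted to $l \leq h_n - 1$ (at depth $h_n$ with $n < 2^{h_n}$ the stated multiset is wrong, since leaves at depth $h_n-1$ do not split), but this does not affect any of the counts you actually use.
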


In the next proposition, we consider the maximum and minimum value of the CSM if $f$ is a strictly increasing affine function (with non-negative intercept).

\begin{Prop}
\label{Prop:clade_size_Meta_Max_Min_val_affine}
    Let $f$ be an affine function, i.e., $f(n_v) = m \cdot n_v + a$ with $m,a \in \R$. Then, we have: 
    \begin{enumerate}
        \item The maximum value of the clade size metaconcept $\Phi^{\mathcal{N}}_f$ is $m \cdot \left(\frac{n \cdot (n+1)}{2} -1\right) + (n-1) \cdot a$
            \begin{enumerate}
                \item on $\T$ if $m > 0$ and $a \geq 0$ and
                \item on $\BT$ if $m > 0$.
            \end{enumerate}

        \item Let $h_n = \lceil\log_2(n)\rceil$. Then, the minimum value on $\BT$ is $m \cdot \left(-2^{h_n} + n \cdot (h+1)\right) + (n-1) \cdot a$ if $m > 0$, which equals $m \cdot h_n \cdot 2^{h_n} + (n-1) \cdot a$ if $n = 2^{h_n}$.
    \end{enumerate}
\end{Prop}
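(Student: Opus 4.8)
The plan is to reduce everything to the Sackin index by invoking the identity established in Remark \ref{Rem:clade_meta_eqiuv_Sackin}, namely
\[\Phi^{\mathcal{N}}_f(T) = m \cdot S(T) + |\mathcal{N}(T)| \cdot a,\]
where $|\mathcal{N}(T)| = |\mathring{V}(T)|$. The crucial observation is that on $\BT$ every tree has exactly $n-1$ inner vertices, so the additive term $(n-1)\cdot a$ is a constant offset that does not affect the ranking; moreover, the caterpillar is binary, so $|\mathcal{N}(T^{cat}_n)| = n-1$ regardless of whether we work on $\BT$ or $\T$. With these facts, the whole argument reduces to identifying the extremal trees and substituting the known extremal Sackin values.

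For the maximum, I would invoke Proposition \ref{Prop:Nv_imbalance_index_f_affine}, which guarantees that the caterpillar $T^{cat}_n$ maximizes the CSM on $\BT$ when $m > 0$, and on $\T$ when $m > 0$ and $a \geq 0$. Substituting $S\left(T^{cat}_n\right) = \frac{n(n+1)}{2} - 1$ from Lemma \ref{Lem:Sackin_cat} and $|\mathcal{N}(T^{cat}_n)| = n-1$ into the identity yields the claimed maximum value $m \cdot \left(\frac{n(n+1)}{2} - 1\right) + (n-1)\cdot a$ in both cases. For the minimum on $\BT$, Proposition \ref{Prop:Nv_imbalance_index_f_affine} (for $m > 0$) tells us that the CSM is minimized by precisely the trees minimizing the Sackin index. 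For any such minimizer $T$ we again have $|\mathcal{N}(T)| = n-1$, and by Lemma \ref{Lem:Sackin_mintree_min_value} the minimum Sackin value is $-2^{h_n} + n\cdot(h_n+1)$, which simplifies to $h_n \cdot 2^{h_n}$ when $n = 2^{h_n}$. Substituting gives the stated minimum value $m \cdot \left(-2^{h_n} + n\cdot(h_n+1)\right) + (n-1)\cdot a$, together with its power-of-two simplification.

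Since each step is a direct substitution into the affine-to-Sackin identity, there is essentially no technical obstacle here. The only points requiring care are bookkeeping the domain hypotheses correctly ($m > 0$ alone suffices on $\BT$, whereas $a \geq 0$ is additionally needed on $\T$ precisely to ensure the caterpillar remains the maximizer once $|\mathcal{N}(\cdot)|$ is allowed to vary) and verifying that $|\mathcal{N}(\cdot)|$ is constant on the relevant extremal trees, so that the affine term contributes only a uniform shift and does not disturb the extremal structure inherited from the Sackin index.
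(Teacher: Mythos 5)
Your proposal is correct and follows essentially the same route as the paper's own proof: both reduce the problem to the Sackin index via the identity $\Phi^{\mathcal{N}}_f(T) = m \cdot S(T) + |\mathcal{N}(T)| \cdot a$ from Remark~\ref{Rem:clade_meta_eqiuv_Sackin}, identify the extremal trees via Proposition~\ref{Prop:Nv_imbalance_index_f_affine}, and substitute the known extremal Sackin values from Lemma~\ref{Lem:Sackin_cat} and Lemma~\ref{Lem:Sackin_mintree_min_value}. Your additional remark that $|\mathcal{N}(\cdot)| = n-1$ is constant on the relevant extremal (binary) trees is exactly the bookkeeping the paper relies on implicitly.
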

\begin{proof}
    First, consider the maximum value on $\T$ for $m > 0$ and $a \geq 0$. 
    By Proposition \ref{Prop:Nv_imbalance_index_f_affine}, we only need to show that the caterpillar attains the stated maximum value.

    By Remark \ref{Rem:clade_meta_eqiuv_Sackin}, we know that for all $T \in \T$, the CSM satisfies $\Phi^{\mathcal{N}}_f(T) = m \cdot S(T) + |\mathcal{N}(T)| \cdot a$. Together with Lemma \ref{Lem:Sackin_cat}, we then have
    \[\Phi^{\mathcal{N}}_f\left(T^{cat}_n\right) = m \cdot S\left(T^{cat}_n\right) + |\mathcal{N}\left(T^{cat}_n\right)| \cdot a = m \cdot \left(\frac{n \cdot (n+1)}{2} -1\right) + (n-1) \cdot a.\]
    This completes 1 (a).

    The remainder of the proof follows directly from the equivalence of the CSM and the Sackin index, as stated in Remark \ref{Rem:clade_meta_eqiuv_Sackin}, along with either Lemma \ref{Lem:Sackin_cat} for the maximum value or Lemma \ref{Lem:Sackin_mintree_min_value} for the minimum value, thereby completing the proof.
\end{proof}

Now, we state the minimum value on $\T$ for $2$-positive $f$.

\begin{Lem}
\label{Lem:clade_size_Meta_Min_val_T}
    Let $f$ be a $2$-positive function, i.e., $f(x) > 0$ if $x \geq 2$. Then, if $n=1$, the minimum value of the clade size metaconcept $\Phi^{\mathcal{N}}_f$ on $\T$ is $0$. Otherwise, if $n\geq 2$, the minimum value of the clade size metaconcept $\Phi^{\mathcal{N}}_f$ on $\T$ is $f(n)$.
\end{Lem}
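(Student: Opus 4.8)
The plan is to reduce this lemma to the already-established Proposition~\ref{Prop:clade_size_meta_star_min}, which identifies the star tree $T^{star}_n$ as the unique minimizer of $\Phi^{\mathcal{N}}_f$ on $\T$ for any $2$-positive $f$. Once we know which tree attains the minimum, computing the minimum value is a matter of evaluating the clade size metaconcept on that tree, so the work splits into the trivial base case $n = 1$ and the main case $n \geq 2$.

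First I would dispose of the case $n = 1$. Here $T$ consists of a single vertex that is simultaneously the root and the only leaf, so $\mathring{V}(T) = \emptyset$ and the clade size sequence $\mathcal{N}(T)$ is empty. By the convention that a sum over an empty set equals zero, we immediately obtain $\Phi^{\mathcal{N}}_f(T) = 0$, which is the claimed minimum (indeed the only) value.

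Next, for $n \geq 2$, I would invoke Proposition~\ref{Prop:clade_size_meta_star_min} to conclude that the star tree $T^{star}_n$ is the unique tree minimizing $\Phi^{\mathcal{N}}_f$ on $\T$. By definition, $T^{star}_n$ has a single inner vertex, namely its root $\rho$, which has clade size $n_\rho = n$. Hence its clade size sequence is $\mathcal{N}(T^{star}_n) = (n)$, and therefore
\[
\Phi^{\mathcal{N}}_f(T^{star}_n) = \sum_{n_v \in \mathcal{N}(T^{star}_n)} f(n_v) = f(n).
\]
This shows the minimum value on $\T$ equals $f(n)$, completing the proof.

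There is no genuine obstacle here: the lemma is a direct corollary of Proposition~\ref{Prop:clade_size_meta_star_min} together with the observation that the star tree has exactly one inner vertex of clade size $n$. The only point requiring care is the degenerate case $n = 1$, where one must explicitly appeal to the empty-sum convention stated in the preliminaries, since the star tree has no inner vertices in that case and the formula $f(n)$ does not apply.
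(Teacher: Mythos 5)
Your proposal is correct and follows essentially the same route as the paper's own proof: the $n=1$ case via the empty-sum convention, and the $n\geq 2$ case by invoking Proposition~\ref{Prop:clade_size_meta_star_min} and evaluating $\Phi^{\mathcal{N}}_f$ on the star tree, whose single inner vertex (the root) has clade size $n$. There is nothing to add.
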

\begin{proof}
    Let $f$ be a $2$-positive function. If $n = 1$, then there exists only one tree with no inner vertex. Hence, the CSM is the empty sum, which is $0$. Now, let $n \geq 2$. By Proposition \ref{Prop:clade_size_meta_star_min}, it suffices to show that $\Phi^{\mathcal{N}}_f\left(T^{star}_n\right) = f(n)$. This holds because the only inner vertex of $T^{star}_n$ is the root, which has a clade size of $n$. This completes the proof.
\end{proof}

Finally, we analyze the last metaconcept, the leaf depth metaconcept. As shown in Table \ref{Tab:imbalance_indices_T}, it also is a generalization of the Sackin index.

\subsubsection{Leaf depth metaconcept \texorpdfstring{$\Phi^{\Delta}_f$}{LDM}}
\label{Sec:LDM}

In the following, we will show that the leaf depth metaconcept (LDM) is a (binary) imbalance index for strictly increasing affine functions, as well as for strictly increasing and convex functions. In a second step, we will calculate the minimum and maximum values of the LDM.

Furthermore, we have already observed in Table \ref{Tab:imbalance_indices_T} that the Sackin index and the average leaf depth are induced by the first or second-order LDM, respectively. Now, we show that the LDM is an affine function of the Sackin index for all affine functions $f$ (i.e., $f(x) = m \cdot x + a$). Additionally, it is equivalent to the Sackin index on $\T$ if $f$ is strictly increasing and affine, i.e., when $m > 0$.

\begin{Rem}
\label{Rem:leaf_meta_eqiuv_Sackin}
    Let $f$ be an affine function, i.e., $f(\delta) = m \cdot \delta + a$, and let $T \in \T$. Then, we have
    \[\Phi^{\Delta}_f(T) = \sum\limits_{\delta \in \Delta(T)} f(\delta) = \sum\limits_{\delta \in \Delta(T)} (m \cdot \delta + a) = \left(m \cdot \sum\limits_{\delta \in \Delta(T)} \delta\right) + n \cdot a = m \cdot S(T) + n \cdot a\]
    This establishes the equivalence of the LDM with strictly increasing and affine $f$, i.e., $m > 0$, to the Sackin index on $\T$.
\end{Rem}

With this in mind, we now focus on the extremal trees for the leaf depth metaconcept (LDM).

\paragraph{Extremal trees}\leavevmode\\

In this section, we analyze the maximizing and minimizing trees of the LDM. Through this analysis, we identify two families of functions for which the LDM is a (binary) imbalance index: strictly increasing and convex functions, as well as strictly increasing affine functions.

\begin{Prop}
\label{Prop:Delta_imbalance_index_strincr_convex}
    Let $f$ be a strictly increasing and convex function. Then, the leaf depth metaconcept $\Phi^{\Delta}_f$ is a (binary) imbalance index. Further, for all $n$, the minimizing trees on $\BT$ of the leaf depth metaconcept coincide with those minimizing the Sackin index. These trees are either $T^{fb}_{h_n}$ or those that employ precisely two leaf depths, namely $h_n-1$ and $h_n$, where $h_n = \lceil\log_2(n)\rceil$. In particular, the gfb-tree and the mb-tree minimize the leaf depth metaconcept for all $n$.
\end{Prop}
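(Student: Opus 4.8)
The plan is to establish the two defining properties of a (binary) imbalance index separately — unique maximization by the caterpillar $T^{cat}_n$ (on both $\T$ and $\BT$) and unique minimization by $T^{fb}_h$ on $\BT$ when $n=2^h$ — and then, more generally, to identify the $\Phi^{\Delta}_f$-minimizers on $\BT$ for every $n$ with those of the Sackin index. Since affine functions are a special convex case, the identity $\Phi^{\Delta}_f(T)=m\cdot S(T)+n\cdot a$ from Remark~\ref{Rem:leaf_meta_eqiuv_Sackin} already covers that subcase; the work lies in the genuinely convex case. I stress at the outset that convexity is indispensable: the concave counterexamples in Remark~\ref{Rem:examples_result_table} show that for strictly increasing concave $f$ neither does the caterpillar maximize nor does $T^{fb}_h$ minimize, so neither extremal statement can rest on monotonicity of $f$ alone (the small cases $n\le 2$ are trivial and I would dispose of them first).

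For the maximization I would argue by induction on $n$ using the recursive structure of the leaf depth sequence from Remark~\ref{Rem:recursiveness_sequences}. First I reduce the arbitrary-tree case to the binary case: resolving a multifurcation into a binary vertex strictly increases some leaf depths and decreases none, hence strictly increases $\Phi^{\Delta}_f$ for strictly increasing $f$, so a maximizer on $\T$ must be binary. For $T=(T_1,T_2)$ with $|T_1|=n_1\ge n_2=|T_2|$, the recursion gives $\Phi^{\Delta}_f(T)=\Phi^{\Delta}_{f^+}(T_1)+\Phi^{\Delta}_{f^+}(T_2)$, where $f^+(x)\coloneqq f(x+1)$ is again strictly increasing and convex. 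By the induction hypothesis each summand is maximized uniquely by a caterpillar, so it remains to optimize over the split $(n_1,n_2)$. The increment when passing from $T^{cat}_m$ to $T^{cat}_{m+1}$ equals $2f^+(m)-f^+(m-1)$, and its discrete difference $2(f^+(m+1)-f^+(m))-(f^+(m)-f^+(m-1))$ is strictly positive because the increments of $f^+$ are nonnegative and nondecreasing; hence this increment is strictly increasing in $m$. Therefore making the split more uneven strictly increases the value, so the maximum over splits is attained only at $(n-1,1)$, i.e. at $T^{cat}_n$, and uniqueness propagates.

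For the minimization I would avoid the sequence-domination Lemma~\ref{Lem:Seq_entries_Min_Max_all_meta} (see the obstacle below) and instead use a chord argument. For $n\ge 3$ (so that $h_n\ge 2$ and $f(h_n-1)$ lies in the domain), let $\ell(x)=m_\ell\, x+a_\ell$ be the affine function through $(h_n-1,f(h_n-1))$ and $(h_n,f(h_n))$, so $m_\ell=f(h_n)-f(h_n-1)>0$. Since $h_n-1$ and $h_n$ are consecutive integers, every integer leaf depth lies outside the open interval $(h_n-1,h_n)$, and convexity yields $f(\delta)\ge\ell(\delta)$ for all such $\delta$, with equality exactly when $\delta\in\{h_n-1,h_n\}$. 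Summing gives $\Phi^{\Delta}_f(T)\ge\Phi^{\Delta}_\ell(T)=m_\ell\cdot S(T)+n\cdot a_\ell$ by Remark~\ref{Rem:leaf_meta_eqiuv_Sackin}. As $m_\ell>0$, the right-hand side is minimized precisely by the Sackin-minimizers, and by Lemma~\ref{Lem:Sackin_mintree_min_value} these employ only the depths $h_n-1$ and $h_n$ (or equal $T^{fb}_{h_n}$), so there $f=\ell$ on every depth and the lower bound is attained. The resulting sandwich forces the $\Phi^{\Delta}_f$-minimizers to coincide exactly with the Sackin-minimizers, since any minimizer must saturate $m_\ell(S(T)-S_{\min})=0$ and hence have minimal Sackin value. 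For $n=2^h$ this is the unique $T^{fb}_h$, establishing property (ii); the explicit description of the minimizers and the fact that $T^{gfb}_n$ and $T^{mb}_n$ are among them then follow from Lemma~\ref{Lem:Sackin_mintree_min_value} and Remark~\ref{Rem:gfb_mb_from_fb_min_Sackin}.

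The main obstacle is the maximization. Unlike the balance value and clade size settings, the leaf depth sequence of the caterpillar does \emph{not} dominate those of other trees entrywise — a balanced tree has strictly larger minimal depth than $T^{cat}_n$ (e.g. $\Delta(T^{fb}_2)=(2,2,2,2)$ versus $\Delta(T^{cat}_4)=(1,2,3,3)$) — so Lemma~\ref{Lem:Seq_entries_Min_Max_all_meta} is unavailable and the easy shortcut used for the other metaconcepts fails. This is exactly why the split-optimization step, where convexity forces the increments $2f^+(m)-f^+(m-1)$ to be strictly monotone, carries the real weight; the minimization, by contrast, becomes clean once the chord $\ell$ reduces the comparison to the already-understood affine (Sackin) case.
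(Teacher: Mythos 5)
Your proposal is correct, but it reaches the result by a genuinely different route than the paper. For the maximization, the paper resolves multifurcations exactly as you do, but then uses a single local move -- relocating a cherry from a shallow parent $u$ to a deeper leaf $z$ -- and shows via convexity that each such move strictly increases $\Phi^{\Delta}_f$, reaching the caterpillar from any binary tree; you instead run a strengthened induction over the standard decomposition $T=(T_1,T_2)$ with the shifted function $f^+(x)=f(x+1)$ and optimize over the split, where convexity enters through the strict monotonicity of the increment $2f^+(m)-f^+(m-1)$. Both are sound; the paper's move is more local and reuses the same operation (in reverse) for the minimization, while yours requires quantifying the induction hypothesis over all strictly increasing convex functions but avoids any tree surgery. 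For the minimization, the paper again uses the reverse cherry relocation to descend to the Sackin-minimizing leaf-depth profile, whereas your chord argument -- bounding $f$ from below by the secant $\ell$ through $(h_n-1,f(h_n-1))$ and $(h_n,f(h_n))$, which reduces everything to the affine/Sackin case of Remark~\ref{Rem:leaf_meta_eqiuv_Sackin} and Lemma~\ref{Lem:Sackin_mintree_min_value} -- is arguably cleaner and makes the ``minimizers coincide with Sackin minimizers'' claim immediate in both directions. Your diagnosis of the obstacle is also on target: entrywise domination of leaf-depth sequences fails (e.g.\ $\Delta(T^{fb}_2)=(2,2,2,2)$ vs.\ $\Delta(T^{cat}_4)=(1,2,3,3)$), so Lemma~\ref{Lem:Seq_entries_Min_Max_all_meta} is unavailable here, which is precisely why the paper, too, resorts to a bespoke argument. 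One cosmetic nitpick: your claim that $f(\delta)=\ell(\delta)$ holds \emph{exactly} when $\delta\in\{h_n-1,h_n\}$ is false for merely convex (e.g.\ affine) $f$, where equality holds at every integer; but your argument never uses the ``only if'' direction, so nothing breaks.
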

\begin{proof}
    Let $f$ be strictly increasing and convex. To prove that the LDM is a (binary) imbalance index, we need to show that the caterpillar uniquely maximizes the metaconcept on $\T$, and the fb-tree uniquely minimizes it on $\BT$ for $n = 2^{h_n}$. For $n \leq 2$, there is nothing to show. Now, let $n \geq 3$.

    First, we proceed as we did in the first part of the proof of Proposition \ref{Prop:cat_max_clade_meta_BTandT}. Specifically, we can turn a tree $T \in \T \setminus \BT$ step by step into a binary tree. For the exact procedure and notation, refer to Figure \ref{Fig:caterpillar_T_to_binary}. By comparing the leaf depths of $T$ and $T'$, we observe the following: all leaves that are descendants of $v_1$ have the same depth in both $T$ and $T'$. For all leaves $x$ that are descendants of $v_i$ with $i \geq 2$, it holds that $\delta_{T}(x) +1 = \delta_{T'}(x)$. Based on these two observations and the fact that $f$ is strictly increasing, we conclude that $\Phi^{\Delta}_f(T) < \Phi^{\Delta}_f(T')$. Thus, it remains to show that the caterpillar is the unique maximizer of the LDM on $\BT$. Therefore, we consider a second procedure for constructing a tree from another tree, which can be viewed as the relocation of a cherry. In this case, both trees involved are binary.

    Let $T \in \BT$ be a binary tree with an inner vertex $u$ such that $u$ is the parent of the cherry formed by the two leaves $x$ and $y$, and there exists a third leaf $z$ with $\delta(u) < \delta(z)$. Let $T'$ be the tree obtained from $T$ by making $z$ the new parent of the cherry formed by $x$ and $y$. For an illustration, see Figure \ref{Fig:cherry_relocation}. As a result, $z$ becomes an inner vertex in $T'$, and $u$ becomes a leaf in $T'$. Note that $u$ and $z$, respectively, have the same depth in $T$ and $T'$. Further, we have
    \begin{align}
        \delta(u) = \delta_{T}(x) -1 \text{ and } \delta_T(x) + 1 \leq \delta_{T'}(x) = \delta(z) + 1. \quad \label{Eq:deltas_T_T'}
    \end{align}
    Note that the caterpillar can be constructed in this manner from any other binary tree.

    Now, we show that $\Phi^{\Delta}_f(T) < \Phi^{\Delta}_f(T')$. By construction, we only need to consider the vertices $x, y, z$, and $u$, as all other leaves remain unchanged in depth between $T$ and $T'$. Hence, exploiting \eqref{Eq:deltas_T_T'}, we have
    \begin{align*}
        \Phi^{\Delta}_f(T) - \Phi^{\Delta}_f(T') &= f(\delta_{T}(x)) - f(\delta_{T'}(u)) + f(\delta_{T}(y)) - f(\delta_{T'}(x)) + f(\delta_{T}(z)) - f(\delta_{T'}(y))\\
        &= f(\delta_{T}(x)) - f(\delta_{T}(x)-1) + f(\delta_{T}(x)) - f(\delta_{T'}(x)) + f(\delta(z)) - f(\delta_{T'}(x))\\
        &\stackrel{f \text{ str. incr.}}{\leq} \underbrace{f(\delta_{T}(x)) - f(\delta_{T}(x)-1) + f(\delta_{T}(x)) - f(\delta_{T}(x)+1)}_{\leq 0 \text{, $f$ convex}} + f(\delta(z)) - f(\delta(z)+1))\\
        &\stackrel{f \text{ str. incr.}}{<} 0.
    \end{align*}
    Thus, $\Phi^{\Delta}_f(T) < \Phi^{\Delta}_f(T')$. Since the caterpillar can be obtained from any other binary tree through repeated applications of this transformation, this completes the proof of this part.\\

    Next, we show that the fb-tree is the unique minimizer of $\Phi^{\Delta}_f$ on $\BT$ for $n = 2^{h_n}$. To do so, we consider the reverse operation of the cherry relocation described earlier. Let $T' \in \BT$ be a binary tree with an inner vertex $z$ such that $z$ is the parent of the cherry formed by the two leaves $x$ and $y$, and suppose there exists a third leaf $u$ with $\delta_T(z) > \delta_T(u)$. Now, let $T$ be the tree obtained from $T'$ by making $u$ the new parent of the cherry formed by $x$ and $y$. For an example, see Figure \ref{Fig:cherry_relocation}.

    In this transformation, $u$ becomes an inner vertex in $T$, while $z$ is converted into a leaf. As in the previous case, only the vertices $x, y, z$, and $u$ are affected, and the relationships given by \eqref{Eq:deltas_T_T'} still hold. Consequently, applying the same calculation as before, we obtain $\Phi^{\Delta}_f(T) < \Phi^{\Delta}_f(T')$. Since the fb-tree can be constructed step by step using this cherry relocation procedure, this establishes the property that it uniquely minimizes $\Phi^{\Delta}_f$, thus completing the second part of the proof.\\

    It remains to show that if $n \neq 2^{h_n}$, the minimizing trees on $\BT$ of the LDM also coincide with those of the Sackin index. By Lemma \ref{Lem:Sackin_mintree_min_value}, the Sackin minimizing trees are precisely the trees that employ exactly two leaf depths. Now, observe that all such trees share the same leaf depth sequence, meaning they are assigned the same value by the LDM. Furthermore, as demonstrated earlier, these trees can be systematically constructed using the second cherry relocation operation introduced in this proof. Taken together, these observations complete the proof.
\end{proof}

\begin{figure}[htbp]
    \centering
    \includegraphics[scale=1.5]{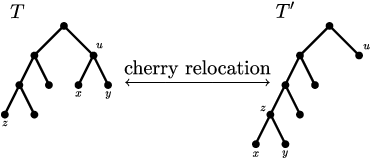}
    \caption{An example illustrating the two ways of relocating a cherry, as used in the proof of Proposition \ref{Prop:Delta_imbalance_index_strincr_convex}.}
    \label{Fig:cherry_relocation}
\end{figure}

Now, we analyze the behavior of the LDM when $f$ is a strictly increasing affine function.

\begin{Prop}
\label{Prop:Delta_imbalance_index_f_affine}
    Let $f$ be an affine function, i.e., $f(\delta) = m \cdot \delta + a$ with $m,a \in \R$. If $m > 0$, then the leaf depth metaconcept $\Phi^{\Delta}_f$ is a (binary) imbalance index. Further, for all $n$, the minimizing trees on $\BT$ of the leaf depth metaconcept if $m > 0$, coincide with those that minimize the Sackin index. Specifically, these are either $T^{fb}_{h_n}$ or trees that employ precisely two leaf depths, namely $h_n-1$ and $h_n$, where $h_n = \lceil\log_2(n)\rceil$. In particular, both the gfb-tree and the mb-tree minimize the leaf depth metaconcept for all $n$.
\end{Prop}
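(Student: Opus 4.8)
The plan is to reduce everything to the Sackin index via the identity established in Remark~\ref{Rem:leaf_meta_eqiuv_Sackin}, which states that for any affine $f(\delta) = m \cdot \delta + a$ and any $T \in \T$ we have $\Phi^{\Delta}_f(T) = m \cdot S(T) + n \cdot a$. The observation driving the proof is that for all trees with a fixed leaf number $n$, the term $n \cdot a$ is a constant, so $\Phi^{\Delta}_f$ is nothing but the Sackin index rescaled by the positive factor $m$ and shifted by a constant.

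First I would record that, because $m > 0$, the map $S(T) \mapsto m \cdot S(T) + n \cdot a$ is strictly increasing in $S(T)$. Consequently $\Phi^{\Delta}_f(T_1) < \Phi^{\Delta}_f(T_2)$ holds if and only if $S(T_1) < S(T_2)$, so $\Phi^{\Delta}_f$ is equivalent to the Sackin index (on both $\T$ and $\BT$). Since equivalent imbalance indices rank all trees identically, they share the same unique maximizer and the same set of minimizers; in particular, $\Phi^{\Delta}_f$ inherits the property of being a (binary) imbalance index directly from the Sackin index, which is known to be an imbalance index on $\T$ (and hence on $\BT$).

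Next I would identify the minimizing trees on $\BT$. By the equivalence just established, the trees minimizing $\Phi^{\Delta}_f$ on $\BT$ are exactly the trees minimizing the Sackin index on $\BT$. These are characterized by Lemma~\ref{Lem:Sackin_mintree_min_value} as $T^{fb}_{h_n}$ together with all trees employing precisely the two leaf depths $h_n - 1$ and $h_n$, where $h_n = \lceil \log_2(n) \rceil$. Finally, the fact that both the gfb-tree and the mb-tree lie in this set follows from Remark~\ref{Rem:gfb_mb_from_fb_min_Sackin}.

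I do not anticipate a genuine obstacle, as the statement is essentially a corollary of the affine-to-Sackin identity. The only point requiring a moment of care is confirming that equivalence of indices (as defined in the preliminaries) transfers the full imbalance-index structure, namely uniqueness of the caterpillar maximizer and uniqueness of the fb-tree minimizer for $n = 2^h$. This is immediate, however, because equivalence preserves the strict ordering of all trees and hence preserves the uniqueness of extremal trees.
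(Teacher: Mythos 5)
Your proposal is correct and follows exactly the paper's argument: both reduce the statement to the Sackin index via the affine identity $\Phi^{\Delta}_f(T) = m \cdot S(T) + n \cdot a$ from Remark~\ref{Rem:leaf_meta_eqiuv_Sackin}, and then invoke Lemma~\ref{Lem:Sackin_mintree_min_value} and Remark~\ref{Rem:gfb_mb_from_fb_min_Sackin} for the minimizing trees. You simply spell out the equivalence step (that $m>0$ makes the map order-preserving and that $n \cdot a$ is constant for fixed $n$) in slightly more detail than the paper does.
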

\begin{proof}
    The fact that $\Phi^{\Delta}_f$ is a (binary) imbalance index follows directly from its equivalence to the Sackin index, as stated in Remark \ref{Rem:leaf_meta_eqiuv_Sackin}. For the minimizing trees, in addition to the equivalence to the Sackin index, we use Lemma \ref{Lem:Sackin_mintree_min_value} and Remark \ref{Rem:gfb_mb_from_fb_min_Sackin} to establish the claim.
\end{proof}

In the next Proposition, we establish that the star tree (uniquely) minimizes the LDM on $\T$ if $f$ is (strictly) increasing.

\begin{Prop}
\label{Prop:leaf_depth_meta_star_min}
    Let $T^{star}_n$ be the star tree on $n$ leaves, and let $f$ be a (strictly) increasing function. Then, $T^{star}_n$ is the (unique) tree minimizing the leaf depth metaconcept $\Phi^{\Delta}_f$ on $\T$.
\end{Prop}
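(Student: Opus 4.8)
The plan is to reduce everything to an entrywise comparison of leaf depth sequences and then invoke Lemma~\ref{Lem:Seq_entries_Min_Max_all_meta}. Recall that the leaf depth sequence $\Delta(T)$ always has length $n$, regardless of whether $T$ is binary, so all trees in $\T$ give sequences of the same length $l = n$; this is precisely the setting in which that lemma applies. First I would dispose of the trivial case $n = 1$, where $\T$ contains a single tree and there is nothing to prove, so assume $n \geq 2$.

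The first substantive step is to identify the leaf depth sequence of the star tree. Since $T^{star}_n$ has a single inner vertex (the root), which is the parent of all $n$ leaves, every leaf has depth exactly $1$, and hence $\Delta(T^{star}_n) = (1, \ldots, 1)$ with $n$ entries. The second step is the lower bound for an arbitrary tree: for $n \geq 2$ the root is an inner vertex, so every leaf lies strictly below the root and therefore has depth at least $1$. Consequently, for every $T \in \T$ and every index $i$, we have
\[
    \Delta(T^{star}_n)_i = 1 \leq \Delta(T)_i.
\]
By the converse direction of part~1 of Lemma~\ref{Lem:Seq_entries_Min_Max_all_meta}, this already shows that $T^{star}_n$ minimizes $\Phi^{\Delta}_f$ on $\T$ for every increasing function $f$.

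For the uniqueness claim under \emph{strictly} increasing $f$, I would argue that any tree achieving depth $1$ at all leaves must be the star tree. Indeed, if $T \neq T^{star}_n$ with $n \geq 2$, then $T$ has at least one inner vertex other than the root, so at least one leaf is a descendant of such a vertex and thus has depth at least $2$. In terms of the ascending sequence this means $\Delta(T)_n \geq 2 > 1 = \Delta(T^{star}_n)_n$, giving strict inequality in at least one coordinate. Combined with the entrywise inequality from the previous step, the hypotheses of part~2 of Lemma~\ref{Lem:Seq_entries_Min_Max_all_meta} are met for every $\widetilde{T} \in \T \setminus \{T^{star}_n\}$, and the lemma yields that $T^{star}_n$ is the unique minimizer whenever $f$ is strictly increasing.

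There is essentially no serious obstacle here: the entire argument rests on the elementary facts that the star tree realizes the minimal possible leaf depth ($1$) at every leaf and that no other tree can do so simultaneously at all leaves. The only point requiring a moment's care is confirming that Lemma~\ref{Lem:Seq_entries_Min_Max_all_meta} is applicable, which follows from the observation that the leaf depth sequence has constant length $n$ across all of $\T$ (unlike the clade size sequence); once this is noted, the lemma does all the work and the proof is immediate.
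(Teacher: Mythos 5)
Your proof is correct and follows essentially the same route as the paper: both arguments rest on the entrywise comparison $\Delta(T^{star}_n)_i = 1 \leq \Delta(T)_i$ with strict inequality in at least one coordinate for $T \neq T^{star}_n$, the only difference being that you delegate the final summation step to Lemma~\ref{Lem:Seq_entries_Min_Max_all_meta} while the paper writes out the inequality $n \cdot f(1) \leq \sum_i f(\Delta(T)_i)$ directly.
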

\begin{proof}
    Let $T^{star}_n$ be the star tree on $n$ leaves, and let $f$ be a (strictly) increasing function. For $n \leq 2$, there is only one tree, so there is nothing to show. Now, let $n \geq 3$, and let $T \in \T \setminus \left\{T^{star}_n\right\}$ be another tree on $n$ leaves. Since $T$ must have at least one leaf of depth strictly greater than 1, we have
    \[\Delta\left(T^{star}_n\right)_i = 1 \leq \Delta(T)_i \text{ for all } i = 1, \ldots, n\]
    and
    \[\Delta\left(T^{star}_n\right)_i = 1 < \Delta(T)_i \text{ for at least one } i = 1, \ldots, n.\]
    Since $f$ is (strictly) increasing, it follows that
    \[\Phi^{\Delta}_f\left(T^{star}_n\right) = n \cdot f(1) \leoq \sum\limits_{i = 1}^{n} f\left(\Delta(T)_i\right) = \Phi^{\Delta}_f(T).\]
    This completes the proof.
\end{proof}

Having proven these three results, we are now in a position to calculate the corresponding maximum and minimum values of the LDM.

\paragraph{Extremal values}\leavevmode\\

\begin{Prop}\leavevmode
\label{Prop:leaf_depth_Meta_Min_val_affine}
    \begin{enumerate}
        \item Let $f$ be an affine function, i.e., $f(\delta) = m \cdot \delta + a$ with $m,a \in \R$ and $m > 0$. The maximum value of the leaf depth metaconcept $\Phi^{\Delta}_f$ on both $\T$ and $\BT$ is
        \[ m \cdot \left(\frac{n \cdot (n+1)}{2} -1\right) + n \cdot a.\]
        Moreover, let $h_n = \lceil\log_2(n)\rceil$. Then, the minimum value on $\BT$ is \[m \cdot \left(-2^{h_n} + n \cdot (h_n+1)\right) + n \cdot a\]if $m > 0$, which simplifies to $m \cdot h_n \cdot 2^{h_n} + n \cdot a$ if $n = 2^{h_n}$.
        \item Let $n \geq 2$, and let $f$ be a strictly increasing and convex function. The maximum value of the leaf depth metaconcept on both $\T$ and $\BT$ is \[f(n-1) + \sum\limits_{i = 1}^{n-1} f(i).\] Moreover, let $n = 2^{h_n-1} + p$ such that $1 \leq p \leq 2^{h_n-1}$. Then, the minimum value of the leaf depth metaconcept on $\BT$ is \[\left(2^{h_n-1} - p\right) \cdot f(h_n-1) + 2 \cdot p \cdot f(h_n).\]
        \item Let $f$ be any increasing function. Then, the minimum value of the leaf depth metaconcept on $\T$ is $n \cdot f(1)$.
    \end{enumerate}
\end{Prop}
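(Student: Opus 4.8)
The plan is to reduce every claim to the identification of the extremal tree -- already established in earlier results -- followed by a direct computation of that tree's leaf depth sequence. For Part 1, I would invoke Remark \ref{Rem:leaf_meta_eqiuv_Sackin}, which gives $\Phi^{\Delta}_f(T) = m \cdot S(T) + n \cdot a$ for any affine $f$. Since $m > 0$, the maximizing and minimizing trees of $\Phi^{\Delta}_f$ coincide with those of the Sackin index, so the extremal values follow by substituting the known Sackin extremal values. For the maximum on both $\T$ and $\BT$, Lemma \ref{Lem:Sackin_cat} gives $S(T^{cat}_n) = \frac{n(n+1)}{2} - 1$; for the minimum on $\BT$, Lemma \ref{Lem:Sackin_mintree_min_value} gives $-2^{h_n} + n(h_n+1)$. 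Plugging these into the affine relation yields the two claimed formulas immediately.

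Part 3 is equally short: by Proposition \ref{Prop:leaf_depth_meta_star_min}, the star tree minimizes $\Phi^{\Delta}_f$ on $\T$ for any (strictly) increasing $f$, and since every leaf of $T^{star}_n$ has depth $1$, its leaf depth sequence is $(1,\ldots,1)$ of length $n$, whence $\Phi^{\Delta}_f(T^{star}_n) = n \cdot f(1)$.

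The substantive part is Part 2. For the maximum, Proposition \ref{Prop:Delta_imbalance_index_strincr_convex} tells us the caterpillar uniquely maximizes $\Phi^{\Delta}_f$ on $\T$ (hence on $\BT$) for strictly increasing convex $f$, so I only need its leaf depth sequence. A short induction on $n$ (or unfolding the standard decomposition) shows $\Delta(T^{cat}_n) = (1, 2, \ldots, n-2, n-1, n-1)$, i.e., one leaf at each depth $1, \ldots, n-2$ and two leaves at depth $n-1$; summing $f$ over this sequence gives $\sum_{i=1}^{n-1} f(i) + f(n-1)$, the claimed maximum. For the minimum, the same proposition identifies the minimizers on $\BT$ as the Sackin-minimizing trees, which by Remark \ref{Rem:gfb_mb_from_fb_min_Sackin} are exactly the trees obtained from $T^{fb}_{h_n-1}$ by attaching $n - 2^{h_n-1} = p$ cherries. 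Each such attachment removes one leaf of depth $h_n-1$ and introduces two leaves of depth $h_n$, so these trees carry $2^{h_n-1} - p$ leaves at depth $h_n-1$ and $2p$ leaves at depth $h_n$; evaluating $\Phi^{\Delta}_f$ on this sequence produces $(2^{h_n-1}-p) \cdot f(h_n-1) + 2p \cdot f(h_n)$.

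The only real obstacle is pinning down the leaf depth distribution of the two-depth minimizers in Part 2, but this is dictated by the explicit cherry-attachment construction, together with the observation (made in the proof of Proposition \ref{Prop:Delta_imbalance_index_strincr_convex}) that all such minimizers share a single leaf depth sequence. Consequently the same value is attained regardless of which minimizer is chosen, and no further case analysis is required.
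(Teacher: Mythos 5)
Your proposal is correct and follows essentially the same route as the paper: Part 1 via the affine equivalence $\Phi^{\Delta}_f(T) = m \cdot S(T) + n \cdot a$ together with the known Sackin extremal values, Part 2 via Proposition \ref{Prop:Delta_imbalance_index_strincr_convex} plus the explicit leaf depth sequences of the caterpillar and of the cherry-attachment minimizers from Remark \ref{Rem:gfb_mb_from_fb_min_Sackin}, and Part 3 via Proposition \ref{Prop:leaf_depth_meta_star_min}. The computations of the leaf depth distributions match those in the paper's proof, so nothing further is needed.
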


\begin{proof}\leavevmode
    \begin{enumerate}
        \item Let $f$ be an affine function, i.e., $f(\delta) = m \cdot \delta + a$ with $m,a \in \R$ and $m > 0$. The correctness of the maximum value on both $\T$ and $\BT$, as well as of the minimum value on $\BT$, follows directly from the equivalence to the Sackin index, as stated in Remark \ref{Rem:leaf_meta_eqiuv_Sackin}, and Lemma \ref{Lem:Sackin_cat} for the maximum value, respectively Lemma \ref{Lem:Sackin_mintree_min_value} for the minimum value.

        \item Let $f$ be strictly increasing and convex, and let $n \geq 2$.

        First, for the maximum value, by Proposition \ref{Prop:Delta_imbalance_index_strincr_convex}, we need to prove that $\Phi^{\Delta}_f(T^{cat}_n) = f(n-1) + \sum\limits_{i = 1}^{n-1} f(i)$. Note that the two leaves in the unique cherry in $T^{cat}_n$ have depth $n-1$, and for every smaller depth, there is exactly one leaf of this depth. Hence, the stated maximum value is correct.

        Second, for the minimum value, again by Proposition \ref{Prop:Delta_imbalance_index_strincr_convex}, we must show that the gfb-tree attains the stated minimum value. Let $n = 2^{h_n-1} + p$, where $1 \leq p \leq 2^{h_n-1}$. By Remark \ref{Rem:gfb_mb_from_fb_min_Sackin}, we know that the gfb-tree can be constructed from the fb-tree of height $h_n-1$ by attaching $p$ cherries from left to right to its leaves of depth $h_n-1$. The fb-tree of height $h_n-1$ has $2^{h_n-1}$ leaves of depth $h_n-1$ and $0$ cherries of depth $h_n$. For each attached cherry, one leaf of depth $h_n-1$ is replaced by two leaves of depth $h_n$. Therefore, after attaching $p$ cherries, there are $2^{h_n-1} - p$ leaves of depth $h_n-1$ and $2p$ leaves of depth $h_n$. Thus, the minimum value is $(2^{h_n-1} - p) \cdot f(h_n-1) + 2 \cdot p \cdot f(h_n)$. This completes this part of the proof.

        \item Let $f$ be any increasing function. By Proposition \ref{Prop:leaf_depth_meta_star_min}, we need to show that $\Phi^{\Delta}_f\left(T^{star}_n\right) = n \cdot f(1)$. This holds true because all $n$ leaves of the star tree have depth $1$.
    \end{enumerate}
    This completes the proof.
\end{proof}

So far, we have thoroughly analyzed the three classes of metaconcepts, focusing on the trees that minimize and maximize them, as well as their minimum and maximum values. In the next section, we shift our attention to two additional properties a metaconcept can have: locality and recursiveness.

\subsubsection{Locality and recursiveness}
\label{Sec:locality_recursiveness}

In this section, we analyze the locality and prove the recursiveness of the metaconcepts, focusing again on the first-order metaconcepts. Unlike previous results, the conclusions of this section can generally not be extended to higher-order metaconcepts, not even to those that are equivalent to a first-order metaconcept. For example, \citet[Proposition 12.2 and Proposition 13]{Fischer2023} proved that the Colless index is local, but the equivalent corrected Colless index is not. Similarly, the Sackin index is local, whereas the equivalent average leaf depth is not (\citet[Proposition 5.4 and Proposition 6.3]{Fischer2023}). Further, the recursions for a first-order metaconcept do not account for additional values and, thus, do not apply to higher-order metaconcepts. However, it is worth noting that the corrected Colless index and the average leaf depth are recursive (\citet[Proposition 13.2 and Proposition 6.2]{Fischer2023}).

We begin with the locality. Based on the induced imbalance indices, one might conjecture that the BVM, the CSM, and the LDM are local, since all known imbalance indices induced by the corresponding first-order metaconcepts are local. These include the Colless index (\citet[Proposition 12.2]{Fischer2023}), the quadratic Colless index (\citet[Proposition 15.3]{Fischer2023}), the Sackin index (\citet[Proposition 5.4]{Fischer2023}), and the $\widehat{s}$-shape statistic (\citet[Proposition 9.3]{Fischer2023}). In the next proposition, we will show that this conjecture is true for all functions in the case of the BVM and CSM, and for affine functions in the case of the LDM.

\begin{Prop}\leavevmode

\label{Prop:locality}
    \begin{enumerate}
        \item The balance value metaconcept $\Phi^{\mathcal{B}}_f$ and the clade size metaconcept $\Phi^{\mathcal{N}}_f$ are local for all (not necessarily increasing) functions $f$.
        \item The leaf depth metaconcept $\Phi^{\Delta}_f$ is local if and only if $f$ is affine, i.e., $f(\delta) = m \cdot \delta + a$ with $m,a \in \R$.
    \end{enumerate}
\end{Prop}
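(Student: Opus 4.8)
The plan is to collapse the locality identity, in all three cases, to a statement about the replaced subtree alone. Fix a vertex $v$; the cases where $v$ is a leaf or the root are trivial, so let $v$ be an internal non-root vertex and set $d\coloneqq\delta_T(v)$. Since $T'_v$ has the same leaf number as $T_v$, the clade size of every vertex outside $T_v$ (including every ancestor of $v$) is unchanged on passing from $T$ to $T'$, as is the depth of every leaf outside $T_v$; consequently the balance values and clade sizes of all these vertices, and the depths of all these leaves, agree in $T$ and $T'$. Hence the portion of each defining sum indexed by the vertices, respectively leaves, \emph{outside} $T_v$ cancels in the difference $\Phi_f(T)-\Phi_f(T')$, and the whole proposition reduces to comparing the subtree's own contribution with the stand-alone values $\Phi_f(T_v)$ and $\Phi_f(T'_v)$. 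First I would record this cancellation carefully once and then reuse it for all three metaconcepts.

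For the first claim I would show that for the BVM and the CSM the subtree's contribution is \emph{exactly} the stand-alone value, which is why locality holds for every $f$. Indeed, for a vertex $u\in\mathring{V}(T_v)$ both $b_T(u)=|n_{u_1}-n_{u_2}|$ and $n_T(u)$ are determined entirely within $T_u\subseteq T_v$, so they coincide with the values computed in $T_v$ alone (the case $u=v$ included, since $n_T(v)$ is the common leaf number of $T_v$ and $T'_v$). Thus the multiset of balance values, respectively clade sizes, contributed by the $T_v$-part of $T$ is exactly $\mathcal{B}(T_v)$, respectively $\mathcal{N}(T_v)$; summing $f$ and subtracting gives $\Phi^{\mathcal{B}}_f(T)-\Phi^{\mathcal{B}}_f(T')=\Phi^{\mathcal{B}}_f(T_v)-\Phi^{\mathcal{B}}_f(T'_v)$, and likewise for $\Phi^{\mathcal{N}}_f$, with no assumption on $f$.

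For the LDM the same cancellation applies, but the subtree summands no longer reproduce $\Phi^{\Delta}_f(T_v)$: a leaf $x$ of $T_v$ sits at depth $\delta_{T_v}(x)+d$ in $T$. Setting $g_d(\delta)\coloneqq f(\delta+d)-f(\delta)$ and cancelling the outside leaves, the locality identity becomes
\[\sum_{x\in V_L(T_v)}g_d\big(\delta_{T_v}(x)\big)=\sum_{x\in V_L(T'_v)}g_d\big(\delta_{T'_v}(x)\big),\]
i.e.\ $\Phi^{\Delta}_{g_d}$ must take the same value on any two trees with equally many leaves, for every $d\ge1$. If $f$ is affine, then $g_d\equiv m\,d$ is constant, so $\Phi^{\Delta}_{g_d}(U)=m\,d\,|V_L(U)|$ depends only on the leaf number and the identity holds; this settles the \enquote{if} direction. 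For the \enquote{only if} direction I would, for each $d\ge1$, build a tree with a vertex $v$ of depth $d$ whose pending subtree is the star $T^{star}_3$ and replace it by the caterpillar $T^{cat}_3$ (both on three leaves); the reduced identity then reads $3\,g_d(1)=g_d(1)+2\,g_d(2)$, that is $g_d(1)=g_d(2)$, which unwinds to $f(d+2)-f(d+1)=f(2)-f(1)$ for all $d\ge1$. Hence every consecutive increment of $f$ on the integer depths equals $f(2)-f(1)$, so $f$ agrees with an affine function on all depths that can occur, as required.

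The only genuinely delicate part is this LDM converse, and it is worth isolating why a non-binary witness is unavoidable. The reduction identifies locality with the constancy of $\Phi^{\Delta}_{g_d}$ on each leaf number; on binary trees alone this constancy does \emph{not} force $g_d$ to be constant, because Kraft's equality gives $\sum_{x\in V_L(U)}2^{-\delta(x)}\equiv1$ for $U\in\BT$, so the non-affine choice $f(\delta)=2^{-\delta}$ would spuriously pass every binary test. The star tree violates this identity ($\sum 2^{-\delta}=n/2$), which is exactly what destroys the degeneracy; this is both the crux of the argument and the reason the equivalence is read over $\T$. I therefore expect the outside-cancellation bookkeeping to be routine, whereas the main obstacle is selecting, in the converse, subtree witnesses strong enough to exclude all non-affine $f$ at once — which the single pair $(T^{star}_3,T^{cat}_3)$, ranged over all depths $d$, already accomplishes.
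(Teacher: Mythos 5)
Your proof is correct, and for part 1 and the \enquote{if} half of part 2 it follows essentially the same cancellation argument as the paper (which, for the BVM and CSM, simply defers to the locality proofs of the Colless and Sackin indices in \citet{Fischer2023} and notes that applying $f$ to each summand changes nothing). Where you genuinely depart from the paper is in the \enquote{only if} half of part 2. The paper reduces the difference $\Phi^{\Delta}_f(T)-\Phi^{\Delta}_f(T')$ to the shifted subtree sums and then cites Lemma~\ref{Lem:affine} to conclude the equivalence with affineness; but that lemma characterizes affine $f$ by the \emph{pointwise} identity $f(x+z)-f(y+z)=f(x)-f(y)$, whereas locality only delivers the \emph{summed} identity $\sum_{x\in V_L(T_v)}g_d\left(\delta_{T_v}(x)\right)=\sum_{x\in V_L(T'_v)}g_d\left(\delta_{T'_v}(x)\right)$, so the necessity direction still needs explicit witness subtrees. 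You supply these: the pair $\left(T^{star}_3,T^{cat}_3\right)$ hung at every depth $d\geq 1$ forces $f(d+2)-f(d+1)=f(2)-f(1)$, hence constant increments on all achievable depths. Your observation that a \emph{non-binary} witness is unavoidable --- because Kraft's equality $\sum_x 2^{-\delta(x)}=1$ on $\BT$ lets, e.g., $f(\delta)=2^{-\delta}$ pass every binary test --- is a genuine addition: it shows the \enquote{only if} must be read over $\T$ and would fail on $\BT$ alone, a subtlety the paper does not address. The only caveat is that your argument (like any argument from locality) pins down $f$ only on the integer depths $\geq 1$, so \enquote{affine} should be read as \enquote{affine on the values at which $f$ is ever evaluated}; you flag this yourself, and it matches the honest reading of the statement.
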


Before proving this statement, we need a lemma that provides an equivalent condition for a function to be affine.

\begin{Lem}
\label{Lem:affine}
    Let $f: \R \rightarrow \R$ be a function. Then, $f$ is affine, i.e., $f(x) = m \cdot x + a$ with $m,a \in \R$, if and only if $f(x+z) - f(y+z) = f(x) - f(y)$ for all $x,y,z \in \R$.
\end{Lem}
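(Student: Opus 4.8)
The plan is to prove the two directions separately; the forward implication is immediate and the reverse implication is where the real work lies. For the forward direction I would simply substitute: if $f(x) = m\cdot x + a$, then $f(x+z) - f(y+z) = \bigl(m(x+z)+a\bigr) - \bigl(m(y+z)+a\bigr) = m(x-y) = f(x) - f(y)$ for all $x,y,z \in \R$, so the stated identity holds. This is a one-line check.

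For the reverse direction, the first step is to specialize the hypothesis to reduce it to a recognizable equation. Note that the identity $f(x+z) - f(y+z) = f(x) - f(y)$ says precisely that the increment $f(x+z) - f(x)$ does not depend on the base point $x$. Setting $y = 0$ yields $f(x+z) = f(x) + f(z) - f(0)$ for all $x,z \in \R$. Writing $a := f(0)$ and $g(x) := f(x) - a$, this rearranges to Cauchy's additive equation $g(x+z) = g(x) + g(z)$ with $g(0) = 0$. From additivity, setting $m := g(1) = f(1) - f(0)$, an easy induction gives $g(n) = m\,n$ for every integer $n$, and the standard rational-scaling argument for additive functions extends this to $g(q) = m\,q$ for every $q \in \mathbb{Q}$. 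Once $g(x) = m\,x$ is established for all $x$, we conclude $f(x) = g(x) + a = m\,x + a$, i.e. $f$ is affine.

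The main obstacle is the final passage from rational to arbitrary real arguments: additivity alone does not force $g$ to be linear, since pathological (non-measurable) additive functions not of the form $m\,x$ exist. I would close this gap using the regularity that $f$ carries in the context where the lemma is invoked — for the leaf depth metaconcept the relevant $f$ is monotone (indeed typically continuous), and a monotone additive function on $\R$ is necessarily linear, which upgrades $g(q) = m\,q$ from $\mathbb{Q}$ to all of $\R$. I would note, moreover, that in the intended application the arguments of $f$ are leaf depths, hence nonnegative integers, so the integer case $g(n) = m\,n$ established above already suffices to characterize affineness for the purpose of proving the locality of $\Phi^{\Delta}_f$ in Proposition \ref{Prop:locality}.
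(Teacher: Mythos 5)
Your forward direction coincides with the paper's one-line substitution. For the reverse direction you take a genuinely different route: the paper specializes the hypothesis to $x=0$, $z=1$, obtains $f(y+1)-f(y)=f(1)-f(0)$ for all $y\in\R$, and concludes from this constant unit increment that $f$ is affine; you instead set $y=0$ to reduce to Cauchy's additive equation for $g(x)=f(x)-f(0)$ and propagate $g(q)=mq$ from $\mathbb{Z}$ to $\mathbb{Q}$. Your route is the more careful one, and the obstacle you flag is real: the reverse implication as stated is false for arbitrary $f:\R\to\R$, because any $\mathbb{Q}$-linear but not $\R$-linear additive function $g$ (built from a Hamel basis) satisfies $g(x+z)-g(y+z)=g(x)-g(y)$ for all reals without being affine. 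The paper's own argument does not evade this; it uses only the $z=1$ instance of the hypothesis, and constancy of unit increments does not imply affineness even for well-behaved functions (e.g.\ $f(x)=x+\sin(2\pi x)$ has constant unit increments), so the paper's closing step is a non sequitur on top of the deeper set-theoretic issue.

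Your proposed repairs are the appropriate ones. Either impose a regularity hypothesis --- a monotone (or measurable) additive function on $\R$ is linear, and the functions $f$ to which Proposition \ref{Prop:locality} is applied are increasing --- or, more cleanly, note that Lemma \ref{Lem:affine} is only ever invoked on leaf depths, so it suffices to characterize $f$ on $\N_{\geq 0}$ (equivalently, to require the displacement identity only for integer $x,y,z$), where your induction $g(n)=m\cdot n$ already completes the argument. Whichever fix is chosen should be stated explicitly, since as written both the lemma and the paper's proof of it contain the gap that your write-up correctly exposes.
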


The proof of this lemma can be found in Appendix \ref{appendix:proof-lemmas}. 
\medskip

Now, we are in a position to prove the proposition above.

\begin{proof}[Proof of Proposition \ref{Prop:locality}]\leavevmode
    \begin{enumerate}
        \item First, we consider the BVM and the CSM. The proof follows the same reasoning as \citet{Fischer2023} (proof of Proposition 12.2 and Proposition 5.4), where the locality of the Colless index and the Sackin index was established. The only difference is that, in our case, the summands are not merely the balance values or clade sizes but rather their evaluations under the function $f$. However, this modification does not affect the overall argument, thereby proving that both the BVM and the CSM are local.

        \item Now, we establish that the LDM is local if and only if $f$ is an affine function, i.e., $f(\delta) = m \cdot \delta + a$ with $m,a \in \R$.

        Consider a tree $T' \in \T$ obtained from a tree $T \in \T$ by replacing a subtree $T_v$ of $T$ with another subtree $T'_v$, where both subtrees have the same number of leaves and are rooted in $v$ with depth $\delta_{T}(v) = \delta_{T'}(v)$.
        Observe that the leaf sets outside these subtrees remain unchanged, i.e., $V_L(T) \setminus V_L(T_v) = V_L(T') \setminus V_L(T'_v)$, and for all $x \in V_L(T) \setminus V_L(T_v)$, we have $\delta_{T}(x) = \delta_{T'}(x)$, since modifying $T_v$ does not change the distance from the root to those leaves. Additionally, for leaves within these subtrees, we have $\delta_T(x) = \delta_T(v) + \delta_{T_v}(x)$ for $x \in V_L(T_v)$ and $\delta_{T'}(x) = \delta_{T'}(v) + \delta_{T'_v}(x)$ for $x \in V_L(T'_v)$. Now, we compute the difference
        \begin{align*}
        	\Phi^{\Delta}_{f}(T) - \Phi^{\Delta}_{f}(T') &=\quad \sum\limits_{x \in V_L(T_v)} f(\delta_T(x)) + \sum\limits_{x \in V_L(T) \setminus V_L(T_v)} f(\delta_T(x))\\
        	&\quad- \sum\limits_{x \in V_L(T'_v)} f(\delta_{T'}(x)) - \sum\limits_{x \in V_L(T') \setminus V_L(T'_v)} f(\delta_{T'}(x))\\
        	&=\quad \sum\limits_{x \in V_L(T_v)} f(\delta_T(v) + \delta_{T_v}(x)) + \sum\limits_{x \in V_L(T) \setminus V_L(T_v)} f(\delta_T(x))\\
        	&\quad- \sum\limits_{x \in V_L(T'_v)} f(\underbrace{\delta_{T'}(v)}_{= \delta_T(v)} + \delta_{T'_v}(x)) - \sum\limits_{x \in V_L(T) \setminus V_L(T_v)} f(\delta_{T}(x))\\
        	&\stackrel{\text{Lem. \ref{Lem:affine}}}{=} \sum\limits_{x \in V_L(T_v)} f(\delta_{T_v}(x)) - \sum\limits_{x \in V_L(T'_v)} f(\delta_{T'_v}(x)) = \Phi^{\Delta}_{f}(T_v) - \Phi^{\Delta}_{f}(T'_v)
        \end{align*}
        Thus, the LDM is local if and only if $f$ is affine, which completes the proof.
    \end{enumerate}
\end{proof}

A direct consequence of the locality of the metaconcepts is that every subtree of a tree minimizing (respectively, maximizing) a metaconcept, is itself a minimizing (respectively, maximizing) tree for the metaconcept.

Next, we establish the recursiveness of our metaconcepts.

\begin{Prop}
\label{Prop:recursiveness}
    The balance value metaconcept $\Phi^{\mathcal{B}}_f$ is a binary recursive tree shape statistic for all functions $f$. Similarly, the clade size metaconcept $\Phi^{\mathcal{N}}_f$ is a recursive tree shape statistic for all functions $f$. In contrast, the leaf depth metaconcept $\Phi^{\Delta}_f$ is a recursive tree shape statistic if $f$ is an affine function with an intercept of $a = 0$.

    Let $T^b \in \BT$ be a binary tree with standard decomposition $T^b = \left(T^b_1, T^b_2\right)$ such that the maximal pending subtree $T^b_i$ has $n^b_i$ leaves. Similarly, let $T \in \T$ be an arbitrary tree with standard decomposition $T = (T_1, \ldots, T_k)$, where each maximal pending subtree $T_i$ has $n_i$ leaves.

    Let $f$ be an arbitrary function, and let $f_m$ be an affine function with slope $m \in \R$ and intercept $a = 0$. If $n = 1$, we have $\Phi^{\mathcal{B}}_{f}\left(T^b\right) = 0$, $\Phi^{\mathcal{N}}_{f}(T) = 0$, $\Phi^{\Delta}_{f_m}(T) = f_m(0) = 0$. Moreover, for $n \geq 2$, we have:

    \begin{itemize}
        \item $\Phi^{\mathcal{B}}_{f}\left(T^b\right) = \Phi^{\mathcal{B}}_{f}\left(T^b_1\right) + \Phi^{\mathcal{B}}_{f}\left(T^b_2\right) + f\left(\left|n^b_1 - n^b_2\right|\right)$,

    	\item $\Phi^{\mathcal{N}}_{f}(T) = \sum\limits_{i=1}^{k} \Phi^{\mathcal{N}}_{f}(T_i) + f(n_1 + \ldots + n_k)$,

        \item $\Phi^{\Delta}_{f_m}(T) = \sum\limits_{i=1}^{k} \Phi^{\Delta}_{f_m}(T_i) + (n_1 + \ldots + n_k) \cdot m$.
    \end{itemize}
\end{Prop}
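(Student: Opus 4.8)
The plan is to derive each recursion identity directly from the recursive sequence decompositions in Remark~\ref{Rem:recursiveness_sequences} and then to package these identities into the formal framework of Definition~\ref{Def:recursivenessBook} by tracking the leaf number as an auxiliary recursion. The unifying observation is that every metaconcept is a sum of $f$-values over the entries of its sequence, and the ordered union $\overset{\rightarrow}{\cup}$ merely reorders a multiset without changing which elements it contains, so summation is insensitive to the reordering.

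First I would establish the BVM and CSM identities, which hold for arbitrary $f$. Applying $f$ and summing over $\mathcal{B}(T) = \mathcal{B}(T^b_1) \overset{\rightarrow}{\cup} \mathcal{B}(T^b_2) \overset{\rightarrow}{\cup} (|n^b_1 - n^b_2|)$ splits the sum into the contributions of the two subtrees plus the single new term $f(|n^b_1 - n^b_2|)$ coming from the root, giving the first identity. The CSM identity follows identically from $\mathcal{N}(T) = \mathcal{N}(T_1) \overset{\rightarrow}{\cup} \ldots \overset{\rightarrow}{\cup} \mathcal{N}(T_k) \overset{\rightarrow}{\cup} (n_1 + \ldots + n_k)$, where the extra term $f(n_1 + \ldots + n_k)$ accounts for the root's clade size. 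The base cases are immediate: for $n = 1$ the single vertex is a leaf, so $\mathcal{B}$ and $\mathcal{N}$ are empty and both sums vanish.

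The LDM is the delicate case, and this is where the affine hypothesis with $a = 0$ is essential. Here the decomposition is $\Delta(T) = (\Delta(T_1) \overset{\rightarrow}{\cup} \ldots \overset{\rightarrow}{\cup} \Delta(T_k)) + 1$, so each depth entry of a subtree is shifted by $+1$ before $f$ is applied, yielding $\Phi^{\Delta}_{f_m}(T) = \sum_{i=1}^k \sum_{\delta \in \Delta(T_i)} f_m(\delta + 1)$. The obstacle is to express $\sum_{\delta \in \Delta(T_i)} f_m(\delta+1)$ in terms of quantities already computed for $T_i$. For $f_m(\delta) = m \cdot \delta$ we have $f_m(\delta + 1) = f_m(\delta) + m$, so the shift contributes exactly $m$ per leaf; since $T_i$ has $n_i$ leaves this gives $\Phi^{\Delta}_{f_m}(T_i) + n_i \cdot m$ for each subtree, and summing produces the collective correction $(n_1 + \ldots + n_k) \cdot m$. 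The base case is $\Phi^{\Delta}_{f_m}(T) = f_m(0) = 0$, since the single leaf is the root of depth $0$. I expect the main obstacle to be precisely this depth shift: for a non-affine $f$ the quantity $\sum_{\delta} f(\delta+1)$ genuinely depends on the full multiset of depths in $T_i$ rather than only on $\Phi^{\Delta}_f(T_i)$ and $n_i$, so no finite collection of auxiliary recursions can absorb it — which is exactly what the affine, $a = 0$ hypothesis circumvents.

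Finally I would verify that these identities meet Definition~\ref{Def:recursivenessBook}. In each case the metaconcept value alone is not self-contained, since its recursion refers to the children's leaf numbers; hence I would take a recursive tree shape statistic of length $x = 2$ whose first component $r_1$ stores the metaconcept value and whose second component $r_2$ stores the leaf number, with start vector $\lambda = (0,1)$. The auxiliary recursion is $r_2(T) = \sum_i r_2(T_i)$ (with $r_2 = 1$ for $n = 1$), and $r_1$ is given by the respective identity above, read as a function of the child vectors $(r_1(T_i), r_2(T_i))$. It remains only to observe that each $r_1$ is symmetric under permuting the children: the sums $\sum_i r_1(T_i)$ and $\sum_i r_2(T_i)$ are manifestly symmetric, and for the BVM the additional term $f(|r_2(T^b_1) - r_2(T^b_2)|)$ is symmetric in the two binary children as well. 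This establishes that the BVM is a binary recursive tree shape statistic and the CSM a recursive tree shape statistic for every $f$, and that the LDM is a recursive tree shape statistic whenever $f$ is affine with $a = 0$.
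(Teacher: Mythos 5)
Your proposal is correct and follows essentially the same route as the paper's own proof: deriving each recursion identity from the sequence decompositions of Remark~\ref{Rem:recursiveness_sequences}, using $f_m(\delta+1)=f_m(\delta)+m$ to absorb the depth shift in the LDM case, and packaging everything as a length-$2$ recursive tree shape statistic with the leaf number as the auxiliary component and start vector $(0,1)$. Your added observation about why non-affine $f$ obstructs the LDM recursion is a sensible aside, though not required for the statement as given.
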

\begin{proof}
    Let $T = (T_1, \ldots, T_k)$, $T^b = \left(T^b_1,T^b_2\right)$, $n_i$, $n^b_i$, $f$, and $f_m$ as described.

    For $n \geq 2$, we can calculate the BVM as follows
    \begin{align*}
        \Phi^{\mathcal{B}}_{f}\left(T^b\right) &= \sum\limits_{b \in \mathcal{B}\left(T^b\right)} f(b) \stackrel{\text{Rem. \ref{Rem:recursiveness_sequences}}}{=} \sum\limits_{b \in \mathcal{B}\left(T^b_1\right)} f(b) + \sum\limits_{b \in \mathcal{B}\left(T^b_2\right)} f(b) + f\left(\left|n^b_1 - n^b_2\right|\right)\\
        &= \Phi^{\mathcal{B}}_{f}\left(T^b_1\right) + \Phi^{\mathcal{B}}_{f}\left(T^b_2\right) + f\left(\left|n^b_1 - n^b_2\right|\right).
    \end{align*}
    Thus, it is a binary recursive tree shape statistic of length $x = 2$, where the recursions $r_1$ and $r_2$ are:
    \begin{itemize}
        \item BVM: $\lambda_1 = \Phi^{\mathcal{B}}_{f}\left(T^{cat}_1\right) = 0$ and\\ $r_1\left(\left(r_1\left(T^b_1\right)\right),r_2\left(T^b_1\right)\right), \left(r_1\left(T^b_2\right),r_2\left(T^b_2\right)\right) = \Phi^{\mathcal{B}}_{f}\left(T^b_1\right) + \Phi^{\mathcal{B}}_{f}\left(T^b_2\right) + f\left(\left|n^b_1 - n^b_2\right|\right)$
        \item leaf number: $\lambda_2 = 1$ and $r_2\left(\left(r_1\left(T^b_1\right),r_2\left(T^b_1\right)\right), \left(r_1\left(T^b_2\right),r_2\left(T^b_2\right)\right)\right) = n^b_1 + n^b_2$
    \end{itemize}
    Hence, we have $\lambda \in \R^2$ and $r_i: \R^2 \times \R^2 \rightarrow \R$. Moreover, all recursions $r_i$ are symmetric. This completes the proof for the BVM.\\

    For $n \geq 2$, we can calculate the CSM as follows
    \begin{align*}
        \Phi^{\mathcal{N}}_{f}(T) &= \sum\limits_{n_v \in \mathcal{N}(T)} f(n_v) \stackrel{\text{Rem. \ref{Rem:recursiveness_sequences}}}{=} \sum\limits_{n_v \in \mathcal{N}(T_1)} f(n_v) + \ldots + \sum\limits_{n_v \in \mathcal{N}(T_k)} f(n_v) + f(n)\\
        &= \sum\limits_{i=1}^{k} \Phi^{\mathcal{N}}_{f}(T_i) + f(n_1 + \ldots + n_k).
    \end{align*}
    Thus, it is a recursive tree shape statistic of length $x = 2$, where the recursions $r_1$ and $r_2$ are:
    \begin{itemize}
        \item CSM: $\lambda_1 = \Phi^{\mathcal{N}}_{f}\left(T^{cat}_1\right) = 0$ and\\ $r_1\left(\left(r_1\left(T_1\right), r_2\left(T_1\right)\right), \ldots, \left(r_1\left(T_k\right), r_2\left(T_k\right)\right)\right) = \Phi^{\mathcal{N}}_{f}\left(T_1\right) + \ldots + \Phi^{\mathcal{N}}_{f}\left(T_k\right) + f\left(n_1 + \ldots + n_k\right)$,
        \item leaf number: $\lambda_2 = 1$ and $r_2\left(\left(r_1\left(T_1\right), r_2\left(T_1\right)\right), \ldots, \left(r_1\left(T_k\right), r_2\left(T_k\right)\right)\right) = n_1 + \ldots + n_k$.
    \end{itemize}
    Hence, we have $\lambda \in \R^2$ and $r_i: \underbrace{\R^2 \times \ldots \times \R^2}_{k \text{ times}} \rightarrow \R$. Moreover, all recursions $r_i$ are symmetric. This completes the proof for the CSM.\\

    For $n \geq 2$, we can calculate the LDM as follows
    \begin{align*}
        \Phi^{\Delta}_{f_m}\left(T\right) &= \sum\limits_{\delta \in \Delta\left(T\right)} f_m\left(\delta\right) \stackrel{\text{Rem. \ref{Rem:recursiveness_sequences}}}{=} \sum\limits_{\delta \in \Delta\left(T_1\right)} f_m\left(\delta +1\right) + \ldots + \sum\limits_{\delta \in \Delta\left(T_k\right)} f_m\left(\delta+1\right)\\
        &= \left(\sum\limits_{\delta \in \Delta\left(T_1\right)} f_m\left(\delta\right) + m\right) + \ldots + \left(\sum\limits_{\delta \in \Delta\left(T_k\right)} f_m\left(\delta\right) + m\right)\\
        &= \sum\limits_{i=1}^{k} \Phi^{\Delta}_{f_m}\left(T_i\right) + \left(n_1 + \ldots + n_k\right) \cdot m.
    \end{align*}
    Thus, it is a recursive tree shape statistic of length $x = 2$, where the recursions $r_1$ and $r_2$ are:
    \begin{itemize}
        \item LDM: $\lambda_1 = \Phi^{\Delta}_{f_m}\left(T^{cat}_1\right) = f_m\left(0\right) = 0$ and\\ $r_1\left(\left(r_1\left(T_1\right), r_2\left(T_1\right)\right), \ldots, \left(r_1\left(T_k\right), r_2\left(T_k\right)\right)\right) = \sum\limits_{i=1}^{k} \Phi^{\Delta}_{f_m}\left(T_i\right) + \left(n_1 + \ldots + n_k\right) \cdot m$,
        \item leaf number: $\lambda_2 = 1$ and $r_2\left(\left(r_1\left(T_1\right), r_2\left(T_1\right)\right), \ldots, \left(r_1\left(T_k\right), r_2\left(T_k\right)\right)\right) = n_1 + \ldots + n_k$.
    \end{itemize}
    Hence, we have $\lambda \in \R^2$ and $r_i: \underbrace{\R^2 \times \ldots \times \R^2}_{k \text{ times}} \rightarrow \R$. Moreover, all recursions $r_i$ are symmetric. This completes the proof for the LDM.
\end{proof}

\section{Discussion} \label{Sec:Discussion}
While tree balance is typically quantified using a single index function, \citet{Cleary2025} recently introduced a functional instead of a function to measure tree balance for rooted trees. This functional is based on the clade size sequence of a tree and depends on another function $f$. In this manuscript, we have generalized this concept to a broader framework, which we call the imbalance index metaconcept of order $\omega$. This metaconcept allows for any tree shape sequence as its underlying sequence. Exploiting this property, we introduced two additional subclasses alongside the clade size metaconcept (CSM): the balance value metaconcept (BVM) and the leaf depth metaconcept (LDM). We thoroughly analyzed these three metaconcepts with respect to their underlying function $f$. As a result, we identified many families of functions $f$ for which these metaconcepts yield (binary) imbalance indices, leading to a range of new imbalance indices. To help users identify a suitable imbalance index obtained from a metaconcept, we provided four decision trees. Additionally, we included \textsf{R} code for computing the metaconcepts and, consequently, the resulting imbalance indices. Furthermore, we analyzed the trees that maximize the metaconcepts for all leaf numbers, as well as the minimizing trees. Finally, we determined their minimum and maximum values.

Recall that the Sackin index and the Colless index are induced by the first-order CSM and BVM, respectively, when $f$ is the identity function, i.e., a strictly increasing and affine function. Consequently, both are minimized by several trees in $\BT$, including the gfb-tree and the mb-tree. We proved that when the identity function is approximated from above by a strictly increasing and strictly convex function, the mb-tree becomes the unique minimizer of both the CSM and the BVM. Conversely, when the identity function is approximated from below by a strictly increasing and strictly concave function, the gfb-tree is the unique minimizer of the CSM. Moreover, this result extends not only to the identity function but to all strictly increasing and affine functions. This is remarkable, because minor changes in the function $f$ lead to major changes in the set of minimizing trees. We elaborate on this observation in Figure~\ref{Fig:SackinApprox}.

Further, since many known imbalance indices also depend on one of the three sequences underlying the metaconcepts, we can classify these indices by determining which metaconcept they satisfy. We found that seven known imbalance indices fall into one of the three metaconcept classes, which can be further divided into several subclasses depending on the choice of $f$. Consequently, some of our results for the metaconcepts recover results from the literature in a more concise way. For example, Theorem \ref{Theo:bal_meta_imbalance_index} unifies six separate proofs from the literature into a single result. Additionally, one subclass, the LDM with strictly increasing and convex $f$, is totally new to the literature in the sense that none of the existing imbalance indices is induced by it.

A promising direction for future research is to further compare the introduced metaconcepts, for example, by analyzing the resolution of the imbalance indices they induce. It stands to reason that the LDM is the least resolved metaconcept, as for trees with eight leaves, there exist four pairs of trees that share the same $\Delta$ (leaf depth sequence), as well as a quartet of trees with identical $\Delta$ values, none of which share the same $\mathcal{N}$ (clade size sequence) or $\mathcal{B}$ (balance value sequence).

Another avenue for future work is to modify the imbalance index metaconcept itself. For instance, one could use sequences based on pairs of vertices (as in the original definition of the total cophenetic index), or consider sequences based on the edges of a tree rather than the vertices.

\begin{figure}
    \centering
    \includegraphics[scale=1]{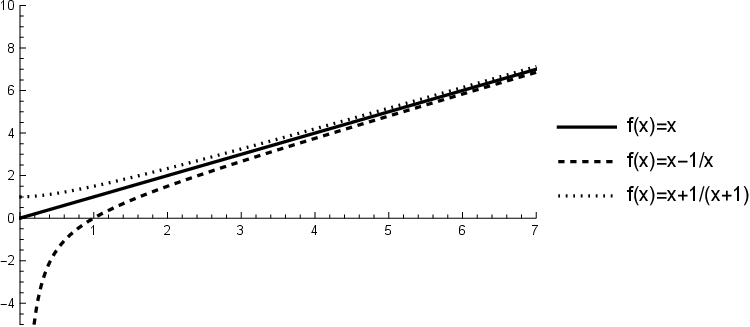}
    \caption{This figure shows (solid line) the strictly increasing and affine identity function $f(x) = x$, on which the Colless index and the Sackin index are based. In this case, the BVM $\Phi^{\mathcal{B}}_f$ and the CSM $\Phi^{\mathcal{N}}_f$ have several minima, including the gfb-tree and the mb-tree. It also shows (dashed line) a strictly increasing and strictly concave approximation of the identity from below, namely $f(x) = x - \frac{1}{x}$. Note that all functions of the type $f(x) = x - \frac{1}{x^a}$ for $a \geq 1$ are such approximations, and all of them will lead to the CSM $\Phi^{\mathcal{N}}_f$ having the gfb-tree as its unique minimum. The figure also shows (dotted line) a strictly increasing and strictly convex approximation of the identity from above, namely $f(x) = x + \frac{1}{(x+1)}$. Note that all functions of the type $f(x) = x + \frac{1}{(x+1)^a}$ for $a \geq 1$ are such approximations, and all of them will lead to the BVM $\Phi^{\mathcal{B}}_f$ and the CSM $\Phi^{\mathcal{N}}_f$ having the mb-tree as their unique minimum. Also note that we used $x+1$ in the last case, because using $x$ in the function would not lead to a strictly increasing $f$ for all passed values greater equal zero.}
    \label{Fig:SackinApprox}
\end{figure}

\section*{Acknowledgment} The authors wish to thank Volkmar Liebscher for various discussions and helpful insights. Parts of this material are based upon work supported by the National Science Foundation under Grant No.~DMS-1929284 while MF and KW were in residence at the Institute for Computational and Experimental Research in Mathematics in Providence, RI, during the Theory, Methods, and Applications of Quantitative Phylogenomics semester program.

\section*{Conflict of interest} The authors herewith certify that they have no affiliations with or involvement in any organization or entity with any financial (such as honoraria; educational grants; participation in speakers’ bureaus; membership, employment, consultancies, stock ownership, or other equity interest; and expert testimony or patent-licensing arrangements) or non-financial (such as personal or professional relationships, affiliations, knowledge or beliefs) interest in the subject matter discussed in this manuscript.

\section*{Data availability statement} 
Data sharing is not applicable to this article as no new data were created or analyzed in this study.

\section*{Authors' contributions} All authors contributed equally.

\bibliographystyle{plainnat}
\bibliography{References_metaconcepts}

\clearpage
\begin{appendices}

\section{Additional figures} \label{appendix:figures}
All inner vertices $v$ of all trees in Figure \ref{Fig:6_5_6}-\ref{Fig:13_869_957} are labeled by a pair $(b_v,n_v)$ containing its balance value $b_v$ and its clade size $n_v$. If the leaves are labeled, then they are labeled by their leaf depth.

\begin{figure}[htbp]
    \centering
    \includegraphics[scale=2]{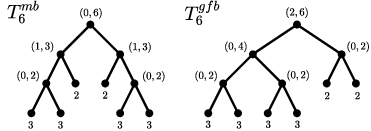}
    \caption{Unique minimal example of two binary trees having the same $\Delta$ but different $\mathcal{B}$ and $\mathcal{N}$, respectively. Specifically, $n=6$, $\mathcal{B}\left(T^{mb}_6\right) = (0,0,0,1,1) \neq (0,0,0,0,2) = \mathcal{B}\left(T^{gfb}_6\right)$, $\mathcal{N}\left(T^{mb}_6\right) = (2,2,3,3,6) \neq (2,2,2,4,6) = \mathcal{N}\left(T^{gfb}_6\right)$, and $\Delta\left(T^{mb}_6\right) = \Delta\left(T^{gfb}_6\right) = (2,2,3,3,3,3)$.}
    \label{Fig:6_5_6}
\end{figure}

\begin{figure}[htbp]
    \centering
    \includegraphics[width=\textwidth]{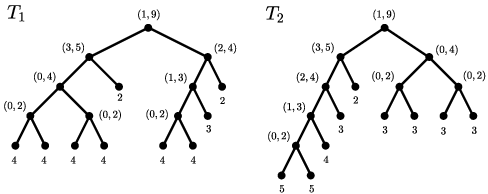}
    \caption{Unique minimal example of two binary trees having the same $\mathcal{N}$ and the same $\mathcal{B}$. Note, however, that $\Delta$ is not the same. Specifically, $n = 9$, $\mathcal{B}(T_1) = \mathcal{B}(T_2) = (0,0,0,0,1,1,2,3)$, $\mathcal{N}(T_1) = \mathcal{N}(T_2) = (2,2,2,3,4,4,5,9)$, and $\Delta(T_1) = (2,2,3,4,4,4,4,4,4) \neq (2,3,3,3,3,3,4,5,5) = \Delta(T_2)$.}
    \label{Fig:9_42_44}
\end{figure}

\begin{figure}[htbp]
    \centering
    \includegraphics[width=\textwidth]{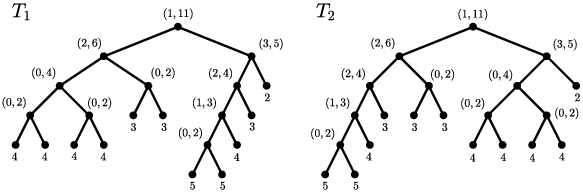}
    \caption{Unique minimal example of two binary trees having the same $\mathcal{B}$, the same $\mathcal{N}$, and the same $\Delta$. Specifically, $n = 11$, $\mathcal{B}(T_1) = \mathcal{B}(T_2) = (0,0,0,0,0,1,1,2,2,3)$, $\mathcal{N}(T_1) = \mathcal{N}(T_2) = (2,2,2,2,3,4,4,5,6,11)$, and $\Delta(T_1) = \Delta(T_2) = (2,3,3,3,4,4,4,4,4,5,5)$}
    \label{Fig:11_194_199}
\end{figure}

\begin{figure}[htbp]
    \centering
    \includegraphics[width=\textwidth]{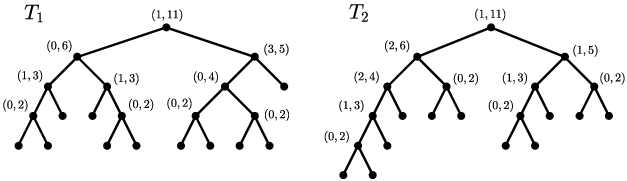}
    \caption{Unique minimal example of two binary trees having different $\mathcal{B}$ but the same $\mathcal{N}$. Specifically, $n = 11$, $\mathcal{B}(T_1) = (0,0,0,0,0,0,1,1,1,3) \neq (0,0,0,0,1,1,1,1,2,2) = \mathcal{B}(T_2)$, and $\mathcal{N}(T_1) = \mathcal{N}(T_2) = (2,2,2,2,3,3,4,5,6,11)$.}
    \label{Fig:11_201_205}
\end{figure}

\begin{figure}[htbp]
    \centering
    \includegraphics[width=\textwidth]{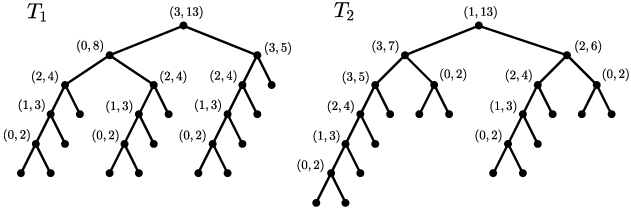}
    \caption{One of 13 minimal examples of two binary trees having the same $\mathcal{B}$ but different $\mathcal{N}$. Here, $n = 13$, $\mathcal{B}(T_1) = \mathcal{B}(T_2) = (0,0,0,0,1,1,1,2,2,2,3,3)$, and $\mathcal{N}(T_1) = (2,2,2,3,3,3,4,4,4,5,8,13) \neq (2,2,2,2,3,3,4,4,5,6,7,13) = \mathcal{N}(T_2)$.}
    \label{Fig:13_869_957}
\end{figure}

\clearpage
\section{Proofs of additional lemmas}
\label{appendix:proof-lemmas}

\begin{proof}[Proof of Lemma~\ref{Lem:number_subtrees_size_mb}] 
    The case $i = 1$ is trivial. Thus, we subsequently assume $n \geq i \geq 2$. Before we consider these cases more in-depth, we start with some general observations. Throughout the proof, we subdivide the vertices in $T^{mb}_n$ into layers, where the root is the only vertex in layer 0, the root's children are the only vertices in layer 1 and so forth. Then the root of a subtree $T_i$ of $T^{mb}_n$ with $i$ leaves can only be located in layers $h_n - h_i-1$ or $h_n - h_i$. This is because all leaves must either be contained in layer $h_n$ or $h_n-1$ as $T^{mb}_n$ has height $h_n$ (cf. Remark \ref{Rem:gfb_mb_from_fb_min_Sackin}), and as all of its pending subtrees are also maximally balanced trees (cf. Section \ref{sec:def}), which implies that they have height $h_i$.

    The high-level idea of the proof now is as follows:

    \begin{itemize}
        \item We show that each layer $l$ contains up to $2^l$ vertices, and each layer $l$, possibly except for layer $h_n$ in case $n < 2^{h_n}$, contains precisely $2^l$ vertices each.
        \item We show that each vertex in layer $l$ is the root of a subtree of size either $\left\lfloor\frac{n}{2^l}\right\rfloor$ or $\left\lceil\frac{n}{2^l}\right\rceil$. Thus, no other subtree sizes are possible.
        \item For $i > 1$, we show that only subtrees of size $i = 2$ can be contained in more than one layer. All other subtree sizes can occur in only one layer (i.e., for $i \geq 3$, if there are subtrees of size $i$, they are all rooted in the same layer).
        \item We then count in each layer the subtrees of sizes $\left\lfloor\frac{n}{2^l}\right\rfloor$ and $\left\lceil\frac{n}{2^l}\right\rceil$, respectively.
    \end{itemize}

    We start with proving that layer $l$ in $T^{mb}_n$ contains up to $2^l$ vertices, all of which induce subtrees with either $\left\lfloor\frac{n}{2^l}\right\rfloor$ or $\left\lceil\frac{n}{2^l} \right\rceil$ many leaves. We prove this by induction on $l$. For $l = 0$, we have only the root, so indeed we have $2^0 = 1$ vertices in this layer, and this vertex is the root of the entire tree, so of $n = \frac{n}{2^0}$ leaves. This completes the base case. For the inductive step, let us assume we know that the statement holds up to $l$ and now consider layer $l+1$. It is clear that as layer $l$ has at most $2^l$ vertices by induction, layer $l+1$ can have at most $2^{l+1}$ vertices, because each vertex in layer $l$ has at most two children in layer $l+1$. Moreover, each vertex $v$ in layer $l+1$ is the child of a vertex $u$ in layer $l$, which by induction is the root of a tree $T_u$ with either $\left\lfloor\frac{n}{2^l}\right\rfloor$ or $\left\lceil\frac{n}{2^l}\right\rceil$ many leaves. Thus, we know by the definition of $T^{mb}_n$ that this only leaves four options for $v$: If $T_u$ has $\left\lfloor\frac{n}{2^l}\right\rfloor$ many leaves, then the tree $T_v$ induced by $v$ can only have either $\left\lfloor\frac{\left\lfloor\frac{n}{2^l}\right\rfloor}{2}\right\rfloor$ or $\left\lceil\frac{\left\lfloor\frac{n}{2^l}\right\rfloor}{2}\right\rceil$ leaves, and if $T_u$ has $\left\lceil\frac{n}{2^l}\right\rceil$ many leaves, then the tree $T_v$ induced by $v$ can only have either $\left\lfloor\frac{\left\lceil\frac{n}{2^l}\right\rceil}{2}\right\rfloor$ or $\left\lceil\frac{\left\lceil\frac{n}{2^l}\right\rceil}{2} \right\rceil$ leaves. However, using the well-known identities $\left\lfloor\frac{\left\lfloor \frac{a}{b}\right\rfloor}{c}\right\rfloor = \left\lfloor\frac{a}{bc}\right\rfloor$ and $\left\lceil\frac{\left\lceil\frac{a}{b}\right\rceil}{c}\right\rceil = \left\lceil\frac{a}{bc} \right\rceil$ (which hold for all real numbers $a, b$ as well as positive integers $c$), we note that $\left\lfloor\frac{\left\lfloor\frac{n}{2^l}\right\rfloor}{2}\right\rfloor = \left\lfloor\frac{n}{2^{l+1}}\right\rfloor$ and $\left\lceil\frac{\left\lceil\frac{n}{2^l}\right\rceil}{2}\right\rceil = \left\lceil\frac{n}{2^{l+1}}\right\rceil$ as desired.
    Similarly, for the remaining two possible values, before using the same identities, we additionally have to convert the inner floor and ceiling functions first by using the identities $\left\lfloor\frac{a}{b}\right\rfloor = \left\lceil\frac{a-b+1}{b}\right\rceil$ and $\left\lceil\frac{a}{b}\right\rceil = \left\lfloor\frac{a+b-1}{b}\right\rfloor$ (which hold for all real numbers $a$ and positive numbers $b$). This gives us $\left\lceil\frac{\left\lfloor\frac{n}{2^l}\right\rfloor}{2}\right\rceil = \left\lceil\frac{\left\lceil\frac{n-2^l+1}{2^l}\right\rceil}{2}\right\rceil = \left\lceil\frac{n-2^l+1}{2^{l+1}}\right\rceil = \left\lfloor\frac{\left(n-2^l+1\right)+2^{l+1}-1}{2^{l+1}}\right\rfloor = \left\lfloor\frac{n}{2^{l+1}}+\frac{1}{2}\right\rfloor \in \left\{\left\lfloor\frac{n}{2^{l+1}}\right\rfloor, \left\lceil\frac{n}{2^{l+1}}\right\rceil\right\}$ as well as $\left\lfloor\frac{\left\lceil\frac{n}{2^l}\right\rceil}{2}\right\rfloor = \left\lfloor\frac{\left\lfloor\frac{n+2^l-1}{2^l}\right\rfloor}{2}\right\rfloor = \left\lfloor\frac{n+2^l-1}{2^{l+1}}\right\rfloor = \left\lceil\frac{\left(n+2^l-1\right) - 2^{l+1}+1}{2^{l+1}}\right\rceil = \left\lceil\frac{n}{2^{l+1}} - \frac{1}{2}\right\rceil \in \left\{\left\lfloor\frac{n}{2^{l+1}}\right\rfloor, \left\lceil\frac{n}{2^{l+1}}\right\rceil\right\}$. Therefore, in all cases, we have that the number $n_v$ of leaves of $T_v$ is either $\left\lfloor\frac{n}{2^{l+1}}\right\rfloor$ or $\left\lceil\frac{n}{2^{l+1}}\right\rceil$, which completes the induction.

    Moreover, note that in $T^{mb}_n$, all layers $l$ except possibly for layer $h_n$ are \enquote{full} in the sense that they contain precisely $2^l$ many vertices. This must be true as otherwise there would be two leaves with a depth difference of more than $1$, which is not possible in $T^{mb}_n$ (cf. Remark \ref{Rem:gfb_mb_from_fb_min_Sackin}). Also note that layer $h_n$ only contains leaves, i.e., it can only induce subtrees of size $1$.

    Before we continue, we now consider the possible layers $l$ of $T^{mb}_n$ in which a subtree of size $i$ can be rooted. For $i = 2$, there must be at least one such subtree (as $n \geq 2$), and this is necessarily rooted in layer $h_n-1$ (as the parent of the cherry with maximal depth induces such a subtree). However, depending on $n$, $T^{mb}_n$ can also contain cherry parents on layer $h_n-2$ (as an example, consider $T_5^{mb}$ depicted as tree $T$ in Figure \ref{Fig:cherry_relocation}). But there cannot be such a cherry parent on layer $h_n-3$ or lower, because this would induce leaves on layer $h_n-2$ or lower, which would inevitably lead to two leaves of a depth difference of more than one in $T^{mb}_n$ (between a leaf of maximum depth $h_n$ and the newly found leaf on layer $h_n-2$ or smaller), which cannot happen in $T^{mb}_n$ (cf. Remark \ref{Rem:gfb_mb_from_fb_min_Sackin}).

    Now, consider $i \geq 3$. We show that in this case, the layer $l$ with $l \leq h_n-2$ of $T^{mb}_n$ in which all subtrees of size $i$ can potentially be contained is uniquely determined. In particular, it is $\left\lfloor\log_2\left(\frac{n}{i}\right)\right\rfloor$ or $\left\lceil\log_2\left(\frac{n}{i}\right)\right\rceil$, depending on $i$. This is because if a subtree of size $i \geq 3$ is rooted in layer $l$, we already know that $i = \left\lfloor\frac{n}{2^l}\right\rfloor$ or $i = \left\lceil\frac{n}{2^l}\right\rceil$. In the first case, we have $ i \leq \frac{n}{2^l} < i+1$, and thus $2^l \cdot i \leq n < 2^l(i+1)$, which directly implies $l \leq \log_2\left(\frac{n}{i}\right)$, and thus, as $l$ is an integer, $l \leq \left\lfloor\log_2\left(\frac{n}{i}\right)\right\rfloor$. Moreover, from $2^l \cdot i \leq n < 2^l(i+1)$ we also derive $\log_2\left(\frac{n}{i+1}\right) < l$. Now assume $l < \left\lfloor\log_2\left(\frac{n}{i}\right)\right\rfloor$. Then, as $l$ is an integer, we have $\log_2\left(\frac{n}{i+1}\right) < l < l+1 \leq \left\lfloor\log_2\left(\frac{n}{i}\right)\right\rfloor\leq\log_2\left(\frac{n}{i}\right) $, which implies $\log_2\left(\frac{n}{i}\right) - \log_2\left(\frac{n}{i+1}\right) > 1$. However, this is a contradiction as $\log_2\left(\frac{n}{i}\right) - \log_2\left(\frac{n}{i+1}\right) = \log_2\left(\frac{i+1}{i}\right) < 1$ for all $i \geq 3$. Thus, we must have $l = \left\lfloor\log_2\left(\frac{n}{i}\right)\right\rfloor$ in the first case as desired.

    For the second case, analogously to the first case, it can be easily seen that we have $l = \left\lceil\log_2\left(\frac{n}{i}\right)\right\rceil$ as desired.

    Next, we complete our analysis of the case $i \geq 3$ by counting the subtrees of $T^{mb}_n$ of each such size. As we have seen that for $i \geq 3$ all subtrees of such a size must fulfill $i = \left\lfloor\frac{n}{2^l}\right\rfloor$ with $l = \left\lfloor\log_2\left( \frac{n}{i}\right)\right\rfloor$ or $i = \left\lceil\frac{n}{2^l}\right\rceil$ with $l = \left\lceil\log_2\left(\frac{n}{i}\right)\right\rceil$, it immediately follows that if $i$ does not meet this requirement, we have $mb_n(i) = 0$, thus proving the fifth statement of the theorem.

    Now consider any layer $l \leq h_n-2$ (note that larger layers cannot contain the root of a subtree of size 3 as $T^{mb}_n$ has height $h_n$). We have already seen that this layer contains precisely $2^l$ many vertices, each of which induces a subtree of size $i = \left\lfloor\frac{n}{2^l}\right\rfloor$ or $i = \left\lceil\frac{n}{2^l}\right\rceil$. We now distinguish two cases. First, assume $\left\lfloor\frac{n}{2^l}\right\rfloor = \left\lceil\frac{n}{2^l}\right\rceil = \frac{n}{2^l} \in \N$. Then, all $2^l$ many subtrees rooted in layer $l$ are of this size. This partially proves the seventh statement of the theorem, because in this case we have $r_l^n = 0$, which can be seen as follows:

    \begin{align*}
        \frac{n}{\left\lfloor\frac{n}{2^l}\right\rfloor} &= \frac{n}{\frac{n}{2^l}} = 2^l \hspace{0.3cm}
        \Rightarrow r_l^n = n-2^l \cdot \left\lfloor\frac{n}{2^l}\right\rfloor = n-\frac{n}{\left\lfloor\frac{n}{2^l}\right\rfloor} \cdot \left\lfloor\frac{n}{2^l}\right\rfloor = n-n = 0.
    \end{align*}

    If, however, $\left\lfloor\frac{n}{2^l}\right\rfloor < \left\lceil\frac{n}{2^l}\right\rceil$, informally speaking, the only way to divide the $n$ leaves between the $2^l$ subtrees of sizes $\left\lfloor\frac{n}{2^l}\right\rfloor$ and $\left\lceil\frac{n}{2^l}\right\rceil$ is to fill up all of the subtrees with $\left\lfloor\frac{n}{2^l}\right\rfloor$ many leaves and then add 1 leaf each to some of them, until the $r_l^n = n-2^l \cdot \left\lfloor\frac{n}{2^l}\right\rfloor$ leaves are used up. This leads to $r_l^n$ subtrees of size $\left\lceil\frac{n}{2^l}\right\rceil$ and to $2^l-r_l^n$ subtrees of size $\left\lfloor\frac{n}{2^l}\right\rfloor$, which completes the proof of the sixth and seventh cases.

    It remains to consider the case $i = 2$. As subtrees of size 2 are, besides the ones of size 1, the only ones that can occur on more than one layer, we have to distinguish three cases. Note that $mb_n(2) = 0$ is not possible as $n \geq 2$, so there must be at least one cherry. However, there are three cases: All cherries are rooted in layer $h_n-1$, and this layer may or may not also contain leaves, or both layers $h_n-1$ and $h_n-2$ contain cherries. In case all cherries are rooted in layer $h_n-1$ and this layer does \textit{not} contain any leaves, obviously all $2^{h_n-1}$ vertices in this layer are parents of a cherry, so $mb_n(i) = 2^{h_n-1}$. This, however, implies that layer $2^{h_n}$ contains $2^{h_n}$ many leaves, i.e., $n = 2^{h_n}$. This proves the second statement of the theorem.

    So from now on, we consider the case $n < 2^{h_n}$ and $i = 2$ to prove the third and fourth statement of the theorem. We have already seen that then not all vertices in layer $h_n-1$ can be roots of cherries, so layer $h_n-1$ must contain some leaves. It remains to distinguish the case in which all cherry parents of $T^{mb}_n$ are contained in layer $h_n-1$ from the case in which some of the cherry parents are in layer $h_n-2$.

    As we have seen, in both cases we have that in layer $h_n-1$, we have both cherry parents and leaves, so we must have $1 = \left\lfloor\frac{n}{2^{h_n-1}}\right\rfloor$ and $i = 2 = \left\lceil\frac{n}{2^{h_n-1}}\right\rceil$. Now, if cherry parents are also contained in layer $h_n-2$, they must there be the vertices inducing smaller subtrees, because all parents of the cherries in layer $h_n-1$ are also contained in layer $h_n-2$ and they induce larger subtrees. Thus, we must also have $i = 2 = \left\lfloor\frac{n}{2^{h_n-2}}\right\rfloor$. So in summary, if vertices inducing subtrees of size $i = 2$ are contained in both layers $h_n-1$ and $h_n-2$, we must have $i = 2 = \left\lfloor\frac{n}{2^{h_n-2}}\right\rfloor = \left\lceil\frac{n}{2^{h_n-1}}\right\rceil$. This corresponds to the condition of the fourth statement of the theorem. We now count the number of cherry parents in each of the two layers by the same arguments as in the case $i \geq 3$: All $2^{h_n-1}$ vertices in layer $h_n-1$ are roots of subtrees of size at least 1. The remaining $r_{h_n-1}^n = n-2^{h_n-1} \cdot\left\lfloor\frac{n}{2^{h_n-1}}\right\rfloor$ subtrees contain two leaves. This number needs to be added to the 2-leaf subtrees induced by layer $h_n-2$. In this layer, we know that all $2^{h_n-2}$ vertices induce trees of size at least $2$, but $r_{h_n-2}^n$ of them induce three leaves. So layer $h_n-2$ comes with $2^{h_n-2} - r_{h_n-2}^n$ many subtrees of size 2. Thus, layers $h_n-1$ and $h_n-2$ in this case together induce $2^{h_n-2} - r_{h_n-2}^n + r_{h_n-1}^n$ many subtrees of size 2, which proves the fourth statement of the theorem.

    So only the third statement of the theorem remains. We are still in the case where $n < 2^{h_n}$, and we now have that only layer $h_n-1$ contains cherry parents. As we have already seen, the other vertices in layer $h_n-1$ must be leaves, so we have $i = 2 = \left\lceil\frac{n}{2^{h_n-1}}\right\rceil$. As layer $h_n-1$ contains precisely $2^{h_n-1}$ vertices, by the same arguments as used above, we can conclude that the only way to distribute $n$ leaves to these vertices is to consider $2^{h_n-1}$ trees of size 1, i.e., single leaves, and to turn $r_{h_n-1}^n = n-2^{h_n-1} \cdot \left\lfloor\frac{n}{2^{h_n-1}}\right\rfloor$ many of these leaves into cherry parents. This leads to $r_{h_n-1}^n$ many cherries, which completes the third statement of the theorem and thus the entire proof.
\end{proof}

\begin{proof}[Proof of Lemma \ref{Lem:affine}]\leavevmode
\begin{itemize}
    \item [\enquote{$\Rightarrow$}]
    Let $f: \R \rightarrow \R$ be affine and $x,y,z \in \R$. Then,
    \[f(x+z) - f(y+z) = m \cdot(x+z) + a - (m \cdot(y+z) + a) = m \cdot x + a - (m \cdot y + a) = f(x) - f(y).\]
    \item [\enquote{$\Leftarrow$}]
    Let $f: \R \rightarrow \R$ be a function such that $f(x+z) - f(y+z) = f(x) - f(y)$ for all $x,y,z \in \R$. Then, for $x = 0$ and $z = 1$, we have
    \begin{align*}
        f(0+1) - f(y+1) &= f(0) - f(y) \, \text{ for all } y \in \R\\
        \Longleftrightarrow f(y+1) - f(y) &= f(1) - f(0) \, \text{ for all } y \in \R.
    \end{align*}
    Thus, the slope between $y$ and $y+1$ is constant, namely
    \[m = \frac{f(y+1) - f(y)}{(y+1) - y} = \frac{f(1) - f(0)}{1} = f(1) - f(0).\]
    Since $y \in \R$ was arbitrary, this implies that $f$ is affine, which completes the proof.
\end{itemize}
\end{proof}
\end{appendices}

\end{document}